\definecolor{quantumviolet}{HTML}{53257F}
\newcommand{\ket}[1]{\vert#1\rangle}
\newcommand{\ketbra}[2]{\vert #1 \rangle \langle #2 \vert}
\newcommand{\1}{\mathbb{1}}
\newcommand{\id}{\mathbb{1}}
\newcommand{\set}[1]{\mathcal{#1}}
\newcommand{\proj}[1]{|#1\rangle\langle#1|}
\newcommand{\de}[1]{\left(#1\right)}
\newcommand{\De}[1]{\left[#1\right]}
\newcommand{\tr}{\text{\normalfont Tr}}
\newcommand{\beq}{\begin{equation}}
\newcommand{\eeq}{\end{equation}}
\newcommand{\causalc}[1]{\textcolor{Emerald}{#1}}
\newcommand{\noncausalc}[1]{\textcolor{Mulberry}{#1}}
\newtheorem{theorem}{Theorem}
\newtheorem*{theorem*}{Theorem}
\newtheorem{definition}{Definition}
\newtheorem{proposition}{Proposition}
\newtheorem{lemma}{Lemma}
\newcommand{\iqoqi}{Institute for Quantum Optics and Quantum Information (IQOQI), Austrian Academy of Sciences, Boltzmanngasse 3, A-1090 Vienna, Austria}
\newcommand{\koeln}{Institute for Theoretical Physics, University of Cologne, Z\"ulpicher Strasse 77, 50937 Cologne, Germany}
\newcommand{\univie}{Vienna Center for Quantum Science and Technology (VCQ), Faculty of Physics, \\ University of Vienna, Boltzmanngasse 5, A-1090 Vienna, Austria}
\newcommand{\todai}{Department of Physics, Graduate School of Science, The University of Tokyo, Hongo 7-3-1, Bunkyo-ku, Tokyo 113-0033, Japan}
\begin{document}

\title{Semi-device-independent certification of indefinite causal order} 

\author{Jessica Bavaresco}
\email{jessica.bavaresco@oeaw.ac.at}
\affiliation{\iqoqi}
\orcid{0000-0002-4823-0353}

\author{Mateus Ara\'ujo}
\affiliation{\koeln}

\author{\v{C}aslav Brukner}
\affiliation{\iqoqi}
\affiliation{\univie}
\orcid{0000-0002-6549-5863}

\author{Marco T\'ulio Quintino}
\email{quintino@eve.phys.s.u-tokyo.ac.jp}
\affiliation{\todai}


\begin{abstract}
When transforming pairs of independent quantum operations according to the fundamental rules of quantum theory, an intriguing phenomenon emerges: some such higher-order operations may act on the input operations in an indefinite causal order. Recently, the formalism of process matrices has been developed to investigate these noncausal properties of higher-order operations. 
This formalism predicts, in principle, statistics that ensure indefinite causal order even in a device-independent scenario, where the involved operations are not characterised. Nevertheless, all physical implementations of process matrices proposed so far require full characterisation of the involved operations in order to certify such phenomena. 
Here we consider a semi-device-independent scenario, which does not require all operations to be characterised. We introduce a framework for certifying noncausal properties of process matrices in this intermediate regime and use it to analyse the quantum switch, a well-known higher-order operation, to show that, although it can only lead to causal statistics in a device-independent scenario, it can exhibit noncausal properties in semi-device-independent scenarios. This proves that the quantum switch generates stronger noncausal correlations than it was previously known.
\end{abstract}

\maketitle


A common quantum information task consists in certifying that some uncharacterised source is preparing a system with some features. By making the assumption that the measurement devices are completely characterised, that is, that they are known exactly, it is possible to infer properties of the system. In this \textit{device-dependent} scenario, fidelity of a quantum state with respect to a target state can be estimated, entanglement witnesses can be evaluated \cite{horodecki96}, and even complete characterisation of the source via state tomography is possible \cite{dariano02}.

Remarkably, it is possible to certify properties of systems even without fully characterizing the measurement devices \cite{bell64,brunner14}. In such a \textit{device-independent} scenario it is only assumed that the measurements are done by separated parties and compose under a tensor product, which is justified by implementing them with a space-like separation. Under these circumstances, Bell scenarios can be used to certify properties like entanglement of quantum states \cite{brunner14}, incompatibility of quantum measurements \cite{wolf09}, or to perform device-independent state estimation via self-testing \cite{mayers04,bardyn09}. 

Since the assumptions are weaker, demonstrations of device-independent certification are usually experimentally challenging. For instance, although experimental device-independent certification of entanglement has been reported \cite{aspect82,hensen15,giustina15,shalm15}, its experimental difficulty has so far prevented its use in practical applications such as device-independent quantum key distribution \cite{acin07} and randomness certification \cite{acin11}.

An interesting middle ground is the \textit{semi-device-independent} scenario, where assumptions are made about some parties but not others. Semi-device-independent schemes have been developed and extensively studied for the certification of entanglement \cite{wiseman07} and measurement incompatibility \cite{quintino14,uola14}, known as EPR-steering, and applied to quantum key distribution protocols where some but not all parties can be trusted \cite{branciard12}.

A close analogy can be developed with regard to the certification of indefinite causal order, as encoded in a \textit{process matrix} \cite{oreshkov12}. A process matrix is a higher-order operation \cite{chiribella08,chiribella09,araujo17} -- \textit{i.e.} a transformation of quantum operations -- that acts on independent sets of operations. Fundamental laws of quantum theory predict the existence of process matrices that act on these operations in a such a way that a well-defined causal order cannot be established among them. Process matrices with indefinite causal order were proven to be a powerful resource, outperforming causally ordered ones in tasks such as quantum channel discrimination \cite{chiribella12}, communication complexity \cite{feix15,guerin16}, quantum computation \cite{araujo14}, and inverting unknown unitary operations \cite{quintino18}.

To certify that a process matrix in fact does not act in a causally ordered way, there are two standard methods available in the literature. The first is to evaluate a causal witness \cite{araujo15,branciard16-2}. Analogous to the evaluation of an entanglement witness, this method relies on detailed knowledge of the quantum operations being implemented, and as such it allows for a device-dependent certification. All experimental certifications of indefinite causal order to date either measure a causal witness \cite{rubino17,goswami18-1} or rely on similar device-dependent assumptions \cite{procopio15,rubino17-2,goswami18-2}. The second method is the violation of a causal inequality, phenomenon which is also predicted by quantum mechanics \cite{oreshkov12,branciard16}. Analogous to the violation of a Bell inequality, this method does not rely on detailed knowledge of the quantum operations implemented by the parties, but rather only that they compose under a tensor product. As such, it allows for a device-independent certification. When a causal inequality is violated, it is verified that least one of the principles used to derived it is not respected. Moreover, this claim holds true independently of the physical theory that supports the experiment that led to the violation. Hence, although we focus on quantum theory, indefinite causal order could in principle be certified even without relying on the laws of quantum mechanics. Although it would be highly desirable to perform such device-independent certification of indefinite causal order, no physical implementation of process matrices that would violate a causal inequality is currently known. 

In this work, we introduce a semi-device-independent framework for certifying noncausal properties of process matrices that allows for an experimental certification of indefinite causal order that relies on fewer assumptions than previous ones. In our semi-device-independent scenario, the operations of some parties are fully characterized while no assumptions are made about the others. 

{We begin by considering the bipartite case, for which we construct a general framework for certifying indefinite causal order in a semi-device-independent scenario, while contextualizing previously developed device-dependent and -independent ones. We then extend our framework to a tripartite case in which the third party is always in the future of the other two, and provide an extensive machinery that may be generalized to other multipartite scenarios. We apply our methods to} the notorious quantum switch \cite{chiribella12,chiribella13}, a process matrix that, despite having an indefinite causal order, only leads to causal correlations in a device-independent scenario. We show that the noncausal properties of the quantum switch can be certified in a semi-device-independent way, proving that it generates stronger noncausal correlations than it was previously known.


\section{Preliminaries}\label{sec:preliminaries}

In our certification scheme, we will deal with statistical data in the form of behaviours.

A \textit{general bipartite behaviour} $\{p(ab|xy)\}$ is a set of joint probability distributions, that is, a set in which each element $p(ab|xy)$ is a real non-negative number such that $\sum_{a,b} p(ab|xy) = 1$ for all $x,y$, where $a\in\{1,\ldots,O_A\}$ and $b\in\{1\ldots,O_B\}$ are labels for outcomes and $x\in\{1,\ldots,I_A\}$ and $y\in\{1,\ldots,I_B\}$ are labels for inputs, for parties Alice and Bob, respectively.

The most general operation allowed by quantum theory is modelled by a quantum instrument, which is a set of completely-positive (CP) maps that sum to a completely-positive trace-preserving (CPTP) map \cite{kraus83}. 

The Choi-Jamio\l kowski (CJ) isomorphism \cite{depillis67,jamiolkowski72,choi75} allows us to represent every linear map\footnote{In this paper we only consider finite dimensional complex linear spaces. That is, all linear spaces are isomorphic to $\mathbb{C}^d$ for some natural number $d$.} $\set{M}: \set{L}(\set{H}^I)\to\set{L}(\set{H}^O)$ by a linear operator $M\in\set{L}(\set{H}^I\otimes\set{H}^O)$ acting on the joint input and output Hilbert spaces. In this representation, a \textit{set of instruments} is a set of operators $\{I_{a|x}\}$, $I_{a|x}\in\set{L}(\set{H}^{I}\otimes\set{H}^{O})$, that satisfies
\begin{align}
I_{a|x}&\geq0, \ \ \ \forall \ a,x \\
\tr_{O}\sum_a I_{a|x} &= \1^{I}, \ \ \ \forall \ x,
\end{align}
where $x\in\{1,\ldots,I\}$ labels the instrument in the set and $a\in\{1,\ldots,O\}$ its outcomes, and $\1^{I}$ is the identity operator on $\set{H}^{I}$.

Now let us consider the most general set of behaviours which respects quantum theory. For that we follow the steps of ref.~\cite{oreshkov12} to analyse behaviours that can be extracted by pairs of independent quantum instruments. Let $\{A_{a|x}\}$, $A_{a|x}\in\set{L}(\set{H}^{A_I}\otimes\set{H}^{A_O})$ and $\{B_{b|y}\}$, $B_{b|y}\in\set{L}(\set{H}^{B_I}\otimes\set{H}^{B_O})$ be the Choi operators of Alice's and Bob's local instruments. We then seek a function which assigns probabilities to a pair of instrument elements $A_{a|x}$ and $B_{b|y}$. In order to preserve the structure of quantum mechanics, we assume that this function is linear in both arguments. It follows from the Riesz representation lemma \cite{riesz1907} that this general linear function necessarily has the form of $p(ab|xy) = \tr\left[(A^{A_IA_O}_{a|x}\otimes B^{B_IB_O}_{b|y})\,W\right]$ for some linear operator $W\in\set{L}(\set{H}^{A_I}\otimes\set{H}^{A_O}\otimes\set{H}^{B_I}\otimes\set{H}^{B_O})$. In order to be consistent with extended quantum scenarios, we also consider the case where Alice and Bob may share a (potentially entangled) auxiliary quantum state $\rho\in \set{L}(\set{H}^{A_{I'}}\otimes \set{H}^{B_{I'}})$, and have instruments $\{A'_{a|x}\}$, $A'_{a|x} \in \set{L}(\set{H}^{A_{I'}A_IA_O})$, $\{B'_{b|y}\}$, $B'_{b|y} \in \set{L}(\set{H}^{B_{I'}B_IB_O})$ which acts on the space of the operator $W$ and the auxiliary state $\rho$. A \textit{process matrix} is then defined as the most general linear operator $W$ such that 
\footnotesize
\begin{align}\label{eq:X}
\begin{split}
&p(ab|xy) =  \\
&\tr\left[(A'^{A_{I'}A_IA_O}_{a|x}\otimes B'^{B_{I'}B_IB_O}_{b|y}) (W^{A_IA_OB_IB_O}\otimes\rho^{A_{I'}B_{I'}})\right]
\end{split}
\end{align}
\normalsize
represents\footnote{{In} \cref{eq:X}, as well as {in} some other equations, we add superscripts on the operators to indicate the Hilbert spaces in which they act, for sake of clarity.} elements of valid probability distributions for every state $\rho$ and sets of instruments $\{A'_{a|x}\}$ and $\{B'_{b|y}\}$.

It was shown in ref.~\cite{araujo15} that a linear operator $W$ is a process matrix if and only if it respects
\begin{align}
W &\geq 0 \label{eqW1} \\ 
\tr\ W &= d_{A_O}d_{B_O} \label{eqW2} \\ 
_{A_IA_O}W&=_{A_IA_OB_O}W \label{eqW3} \\
_{B_IB_O}W&=_{A_OB_IB_O}W \label{eqW4} \\
W &= _{A_O}W+_{B_O}W-_{A_OB_O}W, \label{eqW5}
\end{align}
where $_XW\coloneqq\tr_XW\otimes\frac{\1^X}{d_X}$ is the trace-and-replace operation and $d_X=\text{dim}(\set{H}^X)$. 

We then define \textit{process behaviours} $\{p^Q(ab|xy)\}$ as behaviours which can be obtained by process matrices according to
\small
\beq\label{eq:processbehaviour}
p^Q(ab|xy) = \tr\left[(A^{A_IA_O}_{a|x}\otimes B^{B_IB_O}_{b|y})\ W^{A_IA_OB_IB_O}\right],
\eeq 
\normalsize
for all $a,b,x,y$.

Now we begin to discuss the causal properties of behaviours and process matrices.

A behaviour is considered causally ordered when it can be established that one party acted before the other because the marginal probability distributions of one party do not depend on the inputs of the other. Formally, we have that a behaviour $\{p^{A\prec B}(ab|xy)\}$ is causally ordered from Alice to Bob if it satisfies
\beq
\sum_b p^{A\prec B}(ab|xy)=\sum_b p^{A\prec B}(ab|xy'),
\eeq
for all $a,x,y,y'$, and equivalently from Bob to Alice.

Behaviours that are within the convex hull of causally ordered behaviours are also considered causal, as they can be interpreted as a classical mixture of causally ordered behaviours.
Hence, a \textit{causal behaviour} $\{p^{\text{causal}}(ab|xy)\}$ is a behaviour that can be expressed as a convex combination of causally ordered behaviours, \textit{i.e.},
\footnotesize
\beq\label{eqcausalP}
p^{\text{causal}}(ab|xy) \coloneqq q p^{A\prec B}(ab|xy) + (1-q) p^{B\prec A}(ab|xy),
\eeq
\normalsize
for all $a,b,x,y$, where $0 \leq q \leq1$ is a real number. Behaviours that do not satisfy \cref{eqcausalP} are called noncausal behaviours.

Following the same reasoning as the one in the definition of a general process matrix, in order to associate causal properties to process matrices we now define causally ordered process matrices as the most general operator that takes pairs of local instruments to causally ordered behaviours, that is,
\beq\label{eqdefcausalW}
p^{A\prec B}(ab|xy) = \tr\left[(A_{a|x}\otimes B_{b|y})\ W^{A\prec B}\right],
\eeq
for all $a,b,x,y$.
This definition is equivalent to the one in refs.~\cite{chiribella09,oreshkov12}, which states that a bipartite process matrix $W^{A\prec B}\in\set{L}(\set{H}^{A_I}\otimes\set{H}^{A_O}\otimes\set{H}^{B_I}\otimes\set{H}^{B_O})$ is causally ordered from Alice to Bob if it satisfies
\beq
W^{A\prec B}=_{B_O}W^{A\prec B}, \label{eqW6}
\eeq
and equivalently from Bob to Alice.

In line with the definition of a causal behaviour, a \textit{causally separable process matrix} 
$W^{\text{sep}}\in\set{L}(\set{H}^{A_I}\otimes\set{H}^{A_O}\otimes\set{H}^{B_I}\otimes\set{H}^{B_O})$ is a process matrix that can be expressed as a convex combination of causally ordered process matrices, \textit{i.e.},
\beq\label{eqcausalW}
W^{\text{sep}} \coloneqq q W^{A\prec B} + (1-q) W^{B\prec A},
\eeq
where $0 \leq q \leq1$ is a real number. Process matrices that do not satisfy \cref{eqcausalW} are called causally nonseparable process matrices.


\section{Certification}\label{sec:certification}

Let us consider the following task: we are given a behaviour that describes the statistics of a quantum experiment. We analyse this behaviour in the process matrix formalism, that is, we assume that there exists a process matrix $W$ and sets of local instruments that give rise to this behaviour according to the rules of quantum theory. Without any information about $W$ -- \textit{i.e.}, without direct assumptions about the process matrix -- the goal is to verify whether it is causally nonseparable. Additionally, information about the instruments which were performed may or may not be given.

The assumptions about the instruments can be split in three: device-dependent, -independent, and semi-device-independent. A device-dependent certification scenario is one in which the operations of all parties are fully characterised, \textit{i.e.}, the whole matrix description of the elements of all applied instruments is known. A device-independent certification scenario is the opposite, no knowledge or assumption is made regarding the operations performed by any parties, not even the dimension of the linear spaces used to describe them. Finally, a semi-device-independent certification scenario is one in which at least one party is device-dependent, which is often called trusted, and at least one is device-independent, often called untrusted.

In the following we formalise our notions of certification for bipartite process matrices. We refer to \cref{app:tripartite} and to \cref{sec:switch} for a discussion of more general scenarios.

\begin{definition}[Device-dependent certification]\label{defDDcert}
Given a process behaviour $\{p^Q(ab|xy)\}$, that arises from known instruments $\{\overline{A}_{a|x}\}$ and $\{\overline{B}_{b|y}\}$ and an unknown bipartite process matrix, one certifies that this process matrix is causally nonseparable in a device-dependent way if, for some $a,b,x,y$,
\beq\label{eqDDcert}
p^Q(ab|\overline{A}_{a|x}\,,\overline{B}_{b|y})\neq\tr\left[(\overline{A}_{a|x}\otimes \overline{B}_{b|y})W^\text{sep}\right],
\eeq
for all causally separable process matrices $W^\text{sep}$.
\end{definition}

\begin{definition}[Device-independent certification]\label{defDIcert}
Given a process behaviour $\{p^Q(ab|xy)\}$, that arises from unknown instruments and an unknown bipartite process matrix, one certifies that this process matrix is causally nonseparable in a device-independent way if,  for some $a,b,x,y$,
\beq\label{eqDIcert}
p^Q(ab|xy)\neq\tr\left[(A_{a|x}\otimes B_{b|y})W^\text{sep}\right]
\eeq
for all causally separable process matrices $W^\text{sep}$ and all general instruments $\{A_{a|x}\}$ and $\{B_{b|y}\}$.
\end{definition}

\begin{definition}[Semi-device-independent certification]\label{defSDIcert}
Given a process behaviour $\{p^Q(ab|xy)\}$, that arises from unknown instruments on Alice's side, known instruments $\{\overline{B}_{b|y}\}$ on Bob's side, and an unknown bipartite process matrix, one certifies that this process matrix is causally nonseparable in a semi-device-independent way if, for some $a,b,x,y$,
\beq\label{eqSDIcert}
p^Q(ab|x \,, \overline{B}_{b|y})\neq\tr\left[(A_{a|x}\otimes \overline{B}_{b|y})W^\text{sep}\right]
\eeq
for all causally separable process matrices $W^\text{sep}$ and all general instruments $\{A_{a|x}\}$.
\end{definition}

On eqs. \eqref{eqDDcert}, \eqref{eqDIcert}, and \eqref{eqSDIcert} one can identify the quantities that are given -- the behaviour and the trusted instruments that belong to the device-dependent parties -- and the variables in the certification problem -- the unknown instruments that belong to the device-independent parties and any causally separable process matrix. If one can guarantee that, for any sets of instruments for the device-independent parties and any causally separable process matrix, the given behaviour cannot be described by the left-hand side of eqs. \eqref{eqDDcert}, \eqref{eqDIcert}, or \eqref{eqSDIcert}, then the fact that the process matrix that generated this behaviour is causally nonseparable is certified. A summary of the given quantities and variables in each certification scenario is provided by \cref{table:fixvar}.

Before proceeding we remark an analogy with the entanglement certification problem in which behaviours are assumed to arise from quantum measurements performed on a quantum state. In the entanglement certification case, device-dependent scenarios are related to entanglement witnesses \cite{horodecki96}, device-independent scenarios to Bell nonlocality \cite{brunner14}, and the semi-device-independent ones to EPR-steering \cite{wiseman07}.

\begin{table}
\begin{center}
{\renewcommand{\arraystretch}{1.5}
\begin{tabular}{ c | c }               
        	    	\multicolumn{1}{l}{\textbf{Device-dependent}} \\ 
	        \hline \hline            
		\small Given quantities &  \small  Variables \\              
		\hline 
	\small 	$\{p^Q(ab|xy)\}$ & \small  $W$ \\ 
		\small $\{\overline{A}_{a|x}\}$, $\{\overline{B}_{b|y}\}$& \\
           	\hline 
		\multicolumn{1}{c}{} \\ 
		\multicolumn{1}{l}{\textbf{Device-independent}} \\ 
	        \hline \hline            
		\small  Given quantities &\small  Variables \\              
		\hline 
		\small 	$\{p^Q(ab|xy)\}$ & \footnotesize $d_{A_I}$, $d_{A_O}$, $d_{B_I}$, $d_{B_O}$ \\
		& \small  $\{A_{a|x}\}$, $\{B_{b|y}\}$ \\
		& \small  $W$ \\
           	\hline 
		\multicolumn{1}{c}{} \\ 
		\multicolumn{1}{l}{\textbf{Semi-device-independent}} \\ 
	        \hline \hline            
		\small  Given quantities & \small Variables \\            
		\hline 
		\small $\{p^Q(ab|xy)\}$ &\small  $d_{A_I}$, $d_{A_O}$ \\
		\small $\{\overline{B}_{b|y}\}$ & \small $\{A_{a|x}\}$ \\
		&\small  $W$ \\
           	\hline 
\end{tabular}
}
\end{center}
\caption{Comparison between the given (known) quantities and the variables (unknown quantities) in each scenario with different levels of assumptions about the operations of each party involved in the task of certifying noncausal properties of a process matrix.}
\label{table:fixvar}
\end{table}
\normalsize


\subsection*{Device-dependent} 

We start be analysing the device-dependent scenario and \cref{defDDcert}. This scenario has been thoroughly studied before using the concept of causal witnesses \cite{araujo15}, analogous to entanglement witnesses \cite{horodecki96}, and now we formulate it under our certification paradigm. 

In this scenario, the behaviour and sets of instruments of both parties are given and we aim to check whether the given process behaviour is consistent with performing these exact instruments on a causally separable process matrix. This problem can be solved by semidefinite programming (SDP) with the following formulation:

 \footnotesize
\begin{align}
\begin{split}\label{sdpDD}
\text{given}                                   &\hspace{0.2cm} \{p^Q(ab|xy)\}, \{\overline{A}_{a|x}\}, \{\overline{B}_{b|y}\} \\
\text{find}	 				  &\hspace{0.2cm} W \\
\text{subject to}				  &\hspace{0.2cm} p^Q(ab|xy)=\tr\left[(\overline{A}_{a|x}\otimes \overline{B}_{b|y})W\right] \  \forall\,a,b,x,y \\
	 					  &\hspace{0.2cm} W \in \text{SEP}, \\
\end{split}
\end{align}
\normalsize
where SEP denotes the set of causally separable matrices (\textit{i.e.} $W$ is constrained to \cref{eqcausalW}) which can be characterized by SDP.

If the problem is infeasible, that is, if there does not exist a process matrix $W$ that satisfies the constraints of \cref{sdpDD}, then the process matrix that generated $\{p^Q(ab|xy)\}$ is certainly causally nonseparable. Consequently, there exists a causal witness that can certify it, without the need of performing full tomography of the process matrix \cite{araujo15}. On the contrary, if the problem is feasible, then the solution provides a causally separable process matrix that could have generated the given behaviour.
 
We now show that all causally nonseparable process matrices can be certified to be so in a device-dependent way.

\begin{theorem}\label{thmdd}
All causally nonseparable process matrices can be certified in a device-dependent way for some choice of instruments.
\end{theorem}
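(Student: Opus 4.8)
The plan is to show that choosing \emph{tomographically complete} instruments for both parties already suffices to certify any causally nonseparable process matrix. The key observation is that the map sending a process matrix to its behaviour, $W \mapsto \{p^Q(ab|xy)\}$ defined by \cref{eq:processbehaviour}, is linear in $W$. If the product operators $\{\overline{A}_{a|x}\otimes \overline{B}_{b|y}\}$ span the whole space $\set{L}(\set{H}^{A_I}\otimes\set{H}^{A_O}\otimes\set{H}^{B_I}\otimes\set{H}^{B_O})$, then this map is injective on Hermitian operators, so the behaviour determines $W$ uniquely. Granting this, the only process matrix consistent with the behaviour generated by a causally nonseparable $W$ is $W$ itself, hence no causally separable $W^\text{sep}$ reproduces it and, by \cref{defDDcert}, $W$ is certified (equivalently, the feasibility problem \eqref{sdpDD} is infeasible).

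First I would establish that tomographically complete instruments exist and are legitimate. A concrete construction uses measure-and-prepare instruments: let $\{E_a\}$ be an informationally complete POVM on $\set{H}^{A_I}$ and let $\{\sigma_k\}$ be an informationally complete set of states on $\set{H}^{A_O}$. I take the instrument labelled $x=k$ to measure $\{E_a\}$ and, irrespective of the outcome, reprepare the fixed state $\sigma_k$, so that its Choi operators are $\overline{A}_{a|k}=E_a^T\otimes\sigma_k$. These are positive, and the normalisation holds since $\tr_{A_O}\sum_a \overline{A}_{a|k}=\sum_a E_a^T\,\tr(\sigma_k)=\1^{A_I}$, so each set is a valid instrument.

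Next I would verify spanning. Over all outcomes and instruments the Choi operators form the set $\{E_a^T\otimes\sigma_k\}_{a,k}$. Since informationally complete POVM elements span $\set{L}(\set{H}^{A_I})$ and informationally complete states span $\set{L}(\set{H}^{A_O})$, their products span the tensor product $\set{L}(\set{H}^{A_I}\otimes\set{H}^{A_O})$. The identical construction on Bob's side gives tomographically complete instruments there, and the products of a spanning set on each factor span the full joint space, delivering the injectivity claimed above.

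The main obstacle is precisely the interplay between the instrument constraints and spanning: one must exhibit a \emph{valid} family of instruments, each summing to a CPTP map, whose elements jointly span the entire operator space. The measure-and-prepare construction resolves this cleanly, because the trace-preserving condition is automatically met by any POVM together with normalised repreparations, while the freedom to use a different output state $\sigma_k$ across the different instruments supplies the missing directions in $\set{L}(\set{H}^{A_O})$ that a single fixed repreparation would leave out. Once spanning is secured, the conclusion follows immediately from linearity and the definition of causal separability in \cref{eqcausalW}.
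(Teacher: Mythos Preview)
Your proposal is correct and follows essentially the same approach as the paper: both rely on tomographically complete instruments to render the map $W\mapsto\{p^Q(ab|xy)\}$ injective, so that the behaviour pins down $W$ uniquely. The paper's proof is a one-sentence sketch invoking tomographically complete instruments and citing refs.~\cite{prugovecki77,caves02}, whereas you supply the explicit measure-and-prepare construction and the spanning argument that the paper leaves implicit.
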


The proof of the above theorem follows from the fact that one can always consider a scenario where Alice and Bob have access to tomographically complete instruments, which allows for the complete characterization of the process matrix \cite{prugovecki77,caves02}. 


\subsection*{Device-independent} 

A device-independent approach for the process matrix formalism has been previously studied in terms of causal inequalities \cite{branciard16}, analogous to Bell inequalities \cite{brunner14}, and now we formulate it under our certification paradigm, exploring \cref{defDIcert}.

In a device-independent scenario, besides assuming that the given behaviour is a process behaviour following \cref{eq:processbehaviour}, no extra assumptions are made.
It is then necessary to check whether the given behaviour is consistent with probability distributions that come from any tensor product of pair of sets of instruments, with fixed number of inputs and outputs, performed on any causally separable process matrix of any dimension. 

Before characterizing the problem of whether a certain behaviour can be obtained by a causally separable process matrix, we ask the more fundamental question of whether any behaviour can be obtained by a general process matrix on which instruments are performed locally. That is, whether all general (valid) behaviours are process behaviours.

By definition, every process matrix leads to valid general behaviours. However, it is shown in ref.~\cite{costantino} that in the scenario where all parties have dichotomic inputs and outputs, the deterministic two-way signalling behaviour defined by $p^{2\text{WS}}(ab|xy):=\delta_{a,y}\delta_{b,x}$, where $\delta_{i,j}=1$ if $i=j$ and $\delta_{i,j}=0$ otherwise, cannot be obtained exactly by any process matrix. Here we show that one cannot obtain this two-way signalling behaviour even approximately for finite-dimensional process matrices. The proof can be found in \cref{app:robust_proof}.

\begin{theorem}\label{thmprobtwoway}
All process behaviours are valid behaviours, however, not all valid behaviours are process behaviours.

In particular, in the scenario where all parties have dichotomic inputs and outputs, any behaviour $\{p(ab|xy)\}$ such that 
$$\frac{1}{4}\sum_{a,b,x,y} \delta_{a,y}\delta_{b,x} \ p(ab|xy) > 1-\frac{1}{d+1}$$ 
is not a process behaviour for process matrices with total dimension $d_{A_I}d_{A_O}d_{B_I}d_{B_O} = d$.
\end{theorem}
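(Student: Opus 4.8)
The plan is to bound the two-way signalling quantity by exploiting the process matrix constraints \eqref{eqW1}--\eqref{eqW5} directly, recast as an optimisation. First I would fix the specific instruments that realise the signalling pattern $p^{2\text{WS}}(ab|xy)=\delta_{a,y}\delta_{b,x}$: since $x,y,a,b\in\{0,1\}$ (dichotomic), the ideal behaviour demands that Alice's output $a$ equals Bob's input $y$ and Bob's output $b$ equals Alice's input $x$. The natural choice is for each party to ``measure'' its own input in the computational basis and ``prepare'' the other party's desired output, i.e. Choi operators of the form $A_{a|x}=\proj{x}^{A_I}\otimes\proj{a}^{A_O}$ and $B_{b|y}=\proj{y}^{B_I}\otimes\proj{b}^{B_O}$, which are valid instrument elements. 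With these inserted into \eqref{eq:processbehaviour}, the weighted sum $\frac{1}{4}\sum_{a,b,x,y}\delta_{a,y}\delta_{b,x}\,p(ab|xy)$ becomes $\frac{1}{4}\tr[S\,W]$ for an explicit fixed operator $S$ built from the relevant rank-one projectors, and the question reduces to maximising $\tr[S\,W]$ over all valid process matrices.

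Next I would phrase this as the semidefinite program
\begin{align}
\begin{split}\label{eq:sketchSDP}
\max_{W}\quad & \tfrac{1}{4}\tr[S\,W] \\
\text{s.t.}\quad & W\geq 0,\ \tr\,W=d_{A_O}d_{B_O},\ W\text{ satisfies }\eqref{eqW3}\text{--}\eqref{eqW5},
\end{split}
\end{align}
and pass to its dual. The key step is to exhibit a feasible dual point, i.e. a witness operator, whose objective value equals the claimed bound $1-\tfrac{1}{d+1}$. Concretely I would look for an operator in the dual cone dominating $S$ after projection onto the affine subspace cut out by the linear process constraints \eqref{eqW3}--\eqref{eqW5}; evaluating that dual candidate against the normalisation $\tr\,W=d_{A_O}d_{B_O}$ should produce the factor $\tfrac{1}{d+1}$, with the total dimension $d=d_{A_I}d_{A_O}d_{B_I}d_{B_O}$ entering through the dimensions of the identity replacements in the trace-and-replace maps. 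Weak duality then guarantees $\frac{1}{4}\tr[S\,W]\le 1-\tfrac{1}{d+1}$ for every process matrix $W$, which is exactly the contrapositive of the stated theorem: any behaviour exceeding this value cannot be a process behaviour.

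The easy half of the statement -- that every process behaviour is a valid behaviour -- is immediate, since \eqref{eq:processbehaviour} with a genuine process matrix yields non-negative numbers summing to one by construction of process matrices and instruments, so I would dispatch it in a line and focus on the quantitative bound.

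The hard part will be producing the explicit dual witness and verifying that its value is precisely $1-\tfrac{1}{d+1}$ rather than merely some bound of the right form: this requires carefully tracking how the projections onto the orthogonal complements associated with the causality constraints \eqref{eqW3}--\eqref{eqW5} act on $S$, and controlling the interplay between the linear equalities and the trace normalisation so that the dimension dependence collapses to the single parameter $d$. I expect the cleanest route is to guess the optimiser $W$ (a highly symmetric operator, likely proportional to a combination of $S$ and identity terms consistent with the process projection) and a matching dual certificate simultaneously, then check complementary slackness to confirm tightness. I would defer the detailed operator algebra to \cref{app:robust_proof} and present only this strategy in the main text.
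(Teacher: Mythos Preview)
Your plan has a genuine gap: you fix a specific pair of instrument sets $A_{a|x}=\proj{x}^{A_I}\otimes\proj{a}^{A_O}$, $B_{b|y}=\proj{y}^{B_I}\otimes\proj{b}^{B_O}$ and then optimise only over~$W$. But the statement concerns \emph{process behaviours}, i.e.\ behaviours of the form $p(ab|xy)=\tr[(A_{a|x}\otimes B_{b|y})W]$ for \emph{some} instruments and \emph{some} process matrix of total dimension~$d$. Showing that no $W$ reproduces a given behaviour with your particular instruments does not exclude that a different choice of instruments (possibly acting on completely different splittings of the factors $d_{A_I},d_{A_O},d_{B_I},d_{B_O}$) together with some other $W$ does. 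Your SDP \eqref{eq:sketchSDP} therefore bounds the wrong quantity; the correct optimisation is over instruments \emph{and} $W$ simultaneously, which is bilinear and no longer an SDP, so the dual-witness strategy does not directly apply.

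The paper's proof avoids this by never fixing instruments. It assumes an arbitrary realisation $p(ab|xy)=\tr[(A_{a|x}\otimes B_{b|y})W]$, forms the GYNI operator $M=\tfrac14\sum_{abxy}\delta_{ay}\delta_{bx}\,A_{a|x}\otimes B_{b|y}$ from whatever instruments appear, and then uses the structural decomposition $W={}_{A_O}W+{}_{B_O}W-{}_{A_OB_O}W$ from \eqref{eqW5}. The first two terms are genuine causally ordered process matrices, so each contributes at most $\tfrac12$ by the known causal GYNI bound; the third term is controlled by a separate lemma (based on the Schmidt decomposition) stating that $d\,\tr_2(A)\otimes\1-A\geq0$ for any $A\geq0$, which yields $-\tr({}_{A_OB_O}W\,M)\leq -\tfrac1d\tr(WM)$. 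Combining gives $\tr(WM)\leq 1-\tfrac{1}{d+1}$ uniformly in the instruments. If you want to salvage your approach, you would need to argue that your rank-one instruments are extremal for the GYNI value among all instruments in every dimension splitting, which is essentially as hard as the original problem; the decomposition route sidesteps this entirely.
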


Here we can make a parallel with Bell nonlocality, where the Popescu-Rohrlich behaviour is known to respect the non-signalling conditions which arise naturally in Bell scenarios but cannot be obtained by performing local measurements on entangled states \cite{tsirelson85,rastall85,popescu94}.

As for causal behaviours, the analogous question is also pertinent. Can all causal behaviours be obtained by pair of sets of instruments and causally separable process matrices? We answer this question positively, which allows us to relate the properties of a behaviour directly to the properties of the process matrices that could have given rise to it.

\begin{lemma}\label{lmmbehaviour}
A general behaviour is causal if and only if it is a process behaviour that can be obtained by a causally separable process matrix.
\end{lemma}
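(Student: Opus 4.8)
The statement is an ``if and only if,'' so I would prove the two inclusions separately; they are of very unequal difficulty.

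The easy direction is ($\Leftarrow$): every behaviour coming from a causally separable process matrix is causal. Suppose $p(ab|xy) = \tr[(A_{a|x}\otimes B_{b|y})W^{\text{sep}}]$ for some instruments $\{A_{a|x}\}$, $\{B_{b|y}\}$ and some $W^{\text{sep}} = q\,W^{A\prec B} + (1-q)\,W^{B\prec A}$ as in \cref{eqcausalW}. Since a convex combination of process matrices again satisfies \cref{eqW1,eqW2,eqW3,eqW4,eqW5}, this is a legitimate process behaviour in the sense of \cref{eq:processbehaviour}. By linearity of the trace, $p(ab|xy) = q\,\tr[(A_{a|x}\otimes B_{b|y})W^{A\prec B}] + (1-q)\,\tr[(A_{a|x}\otimes B_{b|y})W^{B\prec A}]$, and by the defining property of causally ordered process matrices, \cref{eqdefcausalW} (equivalently \cref{eqW6}), each of the two traces is itself a causally ordered behaviour, from Alice to Bob and from Bob to Alice respectively. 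Hence $p$ is a convex combination of causally ordered behaviours and is causal in the sense of \cref{eqcausalP}. This direction requires no construction.

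The substantive direction is ($\Rightarrow$): every causal behaviour is realizable by instruments acting on a causally separable process matrix. The plan is to build an explicit realization, and I would first reduce to the causally ordered case by noting that it suffices to exhibit, for the two causally ordered behaviours $\{p^{A\prec B}(ab|xy)\}$ and $\{p^{B\prec A}(ab|xy)\}$, a \emph{single} pair of instruments $\{A_{a|x}\}$, $\{B_{b|y}\}$ together with causally ordered process matrices $W^{A\prec B}$, $W^{B\prec A}$ such that $\tr[(A_{a|x}\otimes B_{b|y})W^{A\prec B}] = p^{A\prec B}(ab|xy)$ and $\tr[(A_{a|x}\otimes B_{b|y})W^{B\prec A}] = p^{B\prec A}(ab|xy)$ hold simultaneously; then $W^{\text{sep}} = q\,W^{A\prec B}+(1-q)\,W^{B\prec A}$ reproduces $p^{\text{causal}}$ by linearity. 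For the construction I would use a purely classical (diagonal) encoding: give Alice an output register carrying the symbol $(x,a)$ and an input register able to hold either $(y,b)$ or a blank symbol $\perp$, and symmetrically for Bob. Using the no-signalling constraint that defines each order, I factor $p^{A\prec B}(ab|xy) = p^{A\prec B}(a|x)\,p^{A\prec B}(b|a,x,y)$ and $p^{B\prec A}(ab|xy) = p^{B\prec A}(b|y)\,p^{B\prec A}(a|b,x,y)$, which is legitimate precisely because Alice's (resp.\ Bob's) marginal is input-independent in the respective order. Each instrument then measures its input register and branches: on $\perp$ (it acts first) it emits its symbol with the marginal weight, and on a genuine message (it acts second) it emits with the conditional weight, in both cases forwarding its own $(x,a)$ or $(y,b)$. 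The process matrix $W^{A\prec B}$ is the classical channel relaying Alice's output register to Bob's input register while feeding $\perp$ into Alice's input register, and $W^{B\prec A}$ is its mirror image; both are standard classical processes satisfying \cref{eqW6}.

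The main obstacle, and the reason the reduction is phrased with a single shared pair of instruments, is that the convex combination in \cref{eqcausalW} forces the same $\{A_{a|x}\}$ and $\{B_{b|y}\}$ to reproduce both $p^{A\prec B}$ and $p^{B\prec A}$; one cannot realize each causally ordered term with its own bespoke instruments. The branch-on-$\perp$ design resolves this: because each party detects from its input register whether it is acting first or second, the identical instrument implements the ``first'' behaviour under one process matrix and the ``second'' behaviour under the other. What remains is routine verification, namely that the branching maps are completely positive and sum over outcomes to trace-preserving maps (so the families are genuine instruments), that $W^{A\prec B}$ and $W^{B\prec A}$ obey \cref{eqW1,eqW2,eqW3,eqW4,eqW5} and the ordering condition \cref{eqW6}, and that contracting the instruments against each process matrix returns the two factorized distributions; linearity then delivers $p^{\text{causal}}(ab|xy) = \tr[(A_{a|x}\otimes B_{b|y})W^{\text{sep}}]$ and completes the equivalence.
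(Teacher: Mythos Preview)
Your proposal is correct and follows essentially the same approach as the paper. Both proofs handle the easy direction by linearity and \cref{eqdefcausalW}, and for the substantive direction both factor each causally ordered behaviour via Bayes' rule, build a classical relay channel carrying the first party's $(\text{input},\text{output})$ to the second party, and use a flag so that a \emph{single} pair of instruments works under both $W^{A\prec B}$ and $W^{B\prec A}$. The only cosmetic difference is the flag mechanism: the paper adjoins an auxiliary input qubit $\set{H}^{A'_I}\otimes\set{H}^{B'_I}$ carrying $\ket{0}$ or $\ket{1}$ and lets the instruments branch on that, whereas you encode the same information as a $\perp$ symbol in the input register itself; these are equivalent implementations of the same idea.
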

The proof is made by explicitly constructing the instruments and causally separable process matrix that can recover any causal behaviour. It can be found in \cref{app:quantumrealization}.

This result allows us to identify which causally nonseparable process matrices can be certified in a device-independent way:

\begin{theorem}\label{thmdi}
A process matrix can be certified to be causally nonseparable in a device-independent way if and only if it can generate a noncausal behaviour for some choice of instruments for Alice and Bob.
\end{theorem}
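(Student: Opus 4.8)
The plan is to prove the biconditional in \cref{thmdi} by combining the definitions of device-independent certification (\cref{defDIcert}) with \cref{lmmbehaviour}. The statement asserts that a process matrix $W$ can be certified causally nonseparable in a device-independent way if and only if there exists a choice of instruments $\{A_{a|x}\}$, $\{B_{b|y}\}$ for which $W$ generates a noncausal behaviour. First I would unpack what a successful device-independent certification means: by \cref{defDIcert}, certification succeeds precisely when the process behaviour $\{p^Q(ab|xy)\}$ produced by $W$ (via \cref{eq:processbehaviour}) cannot be reproduced by any causally separable process matrix $W^{\text{sep}}$ acting on any general instruments of any dimension. In other words, certification succeeds if and only if $\{p^Q(ab|xy)\}$ is \emph{not} a process behaviour obtainable from a causally separable process matrix.

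For the forward direction, I would argue by contraposition: suppose that for every choice of instruments, the behaviour generated by $W$ is causal. Then in particular, for the specific instruments under consideration, the behaviour $\{p^Q(ab|xy)\}$ is causal, and by \cref{lmmbehaviour} it is therefore a process behaviour arising from some causally separable process matrix. This exhibits an explicit $W^{\text{sep}}$ and instruments reproducing the behaviour, so \cref{eqDIcert} fails for every $a,b,x,y$, meaning certification is impossible. Hence if $W$ can be certified, it must generate a noncausal behaviour for some instruments.

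For the reverse direction, suppose $W$ generates a noncausal behaviour $\{p^Q(ab|xy)\}$ for some choice of instruments. By \cref{lmmbehaviour}, a noncausal behaviour is \emph{not} a process behaviour obtainable from any causally separable process matrix. Therefore no causally separable $W^{\text{sep}}$ together with any general instruments can reproduce $\{p^Q(ab|xy)\}$, which is exactly the condition \eqref{eqDIcert} of \cref{defDIcert}. Thus $W$ is certified to be causally nonseparable in a device-independent way.

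The conceptual core of the argument is entirely carried by \cref{lmmbehaviour}, which provides the tight correspondence between the causal/noncausal character of a behaviour and its realisability by a causally separable process matrix; the proof of \cref{thmdi} itself is then essentially a bookkeeping exercise translating between the language of behaviours and the language of process matrices. The one point requiring care is to ensure that the quantifier structure matches: \cref{defDIcert} quantifies over \emph{all} causally separable process matrices and \emph{all} general instruments, whereas \cref{lmmbehaviour} concerns existence of \emph{some} such realisation, so I would make explicit that the negation of ``certifiable'' is precisely ``the behaviour is realisable by some causally separable process matrix,'' which by \cref{lmmbehaviour} is equivalent to the behaviour being causal. Since the instruments appearing on Alice's and Bob's sides in the certification condition need not coincide with those that generated the behaviour, I expect the main subtlety to be confirming that \cref{lmmbehaviour} already absorbs this freedom of re-choosing instruments, so that no additional dimension or structural assumption on $W^{\text{sep}}$ is needed.
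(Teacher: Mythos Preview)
Your proposal is correct and follows essentially the same approach as the paper: both directions are obtained directly from \cref{lmmbehaviour}, with the forward direction argued by contraposition (a causal behaviour is always reproducible by some causally separable process matrix, so no certification) and the reverse direction by noting that causally separable process matrices yield only causal behaviours. Your write-up is more explicit about the quantifier bookkeeping than the paper's terse proof, but the logical content is identical.
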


\begin{proof}
If a process matrix is causally separable then its behaviours will be causal. If a behaviour is causal, even though it could in principle have been generated by a causally nonseparable process matrix, according to \cref{lmmbehaviour} it can always be reproduced by a causally separable process matrix. Hence, no causal properties of the process matrix can be inferred. 
\end{proof}

From the formulation of the device-independent certification problem in \cref{defDIcert}, it is not clear whether one could obtain a simple characterisation to solve it. In particular, because there are no constraints on the dimension of the linear spaces and there is a product of variables (that represent the unknown instruments and process matrix). Interestingly, we can explore the above theorem to present simple necessary and sufficient conditions for a general behaviour to allow for device-independent certification of indefinite causal order. This follows from the fact that a behaviour can be checked to be noncausal by linear programming \cite{branciard16}. More explicitly, the certification problem can be formulated as follows:

\footnotesize
\begin{align}
\begin{split}\label{sdpDI}
\text{given}                                   &\hspace{0.2cm} \{p^Q(ab|xy)\} \\
\text{find}	 				  &\hspace{0.2cm} q_1(\lambda), q_2(\lambda) \\
\text{s.t.}					  &\hspace{0.2cm} p^Q(ab|xy)=\sum_\lambda \Big[ q_1(\lambda) D^{A \prec B}_\lambda(ab|xy) + \\ 
						  &\phantom{\hspace{2.5cm} } + q_2(\lambda) D^{B \prec A}_\lambda(ab|xy)\Big], \forall \ a,b,x,y \\
	 					  &\hspace{0.2cm} q_1(\lambda)\geq 0, \; q_2(\lambda)\geq 0, \ \forall \ \lambda,
\end{split}
\end{align}
\normalsize
where $\{D^{A \prec B}_\lambda(ab|xy) \}$ and  $\{D^{B \prec A}_\lambda(ab|xy) \}$ are the finite set of deterministic causal distributions described in ref.~\cite{branciard16}.

If the problem is infeasible, then the process matrix that was used to generate the process behaviour $\{p^Q(ab|xy)\}$ is certainly causally nonseparable and there exists a causal inequality that can witness it \cite{branciard16}. If the problem is feasible, then one can use the results presented in \cref{app:quantumrealization} to explicitly find a causally separable process matrix $W^\text{sep}$ and sets of instruments $\{A_{a|x}\}$ and $\{B_{b|y}\}$ such that $p^Q(ab|xy)=\tr\left[(A_{a|x}\otimes B_{b|y})W^\text{sep}\right]$.

Differently from the device-dependent scenario, it is known that some causally nonseparable process matrices cannot be certified in a device-independent way \cite{araujo15,oreshkov16,feix16}. In particular, there exist causally nonseparable bipartite process matrices that, for any choice of instruments of Alice and Bob, will always lead to causal behaviours. This result was first presented in ref.~\cite{feix16} and we rephrase it here:

\begin{proposition}[Device-dependent certifiable, device-independent noncertifiable process matrix]\label{prop1}
There exist causally nonseparable process matrices that, for any sets of instruments, always give rise to causal behaviours. That is, a causally nonseparable process matrix that cannot be certified in a device-independent way.

In particular, let $W\in\set{L}(\set{H}^{A_IA_OB_IB_O})$ be a process matrix and 
$W^{T_B}$ be the partial transposition of $W$ with respect to some basis in $\set{L}(\set{H}^{B_IB_O})$ for Bob.
If 
$W^{T_B}$ is causally separable, the behaviour generated by $p^Q(ab|xy)=\tr\left[(A_{a|x}\otimes B_{b|y})\,W\right]$ is causal for every sets of instruments $\{A_{a|x}\}$ and $\{B_{b|y}\}$.
\end{proposition}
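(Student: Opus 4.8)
The plan is to reduce the behaviour generated by $W$ to a behaviour generated by the causally separable operator $W^{T_B}$, exploiting the defining property of the partial transpose. The starting point is the elementary identity $\tr[MN^{T_B}]=\tr[M^{T_B}N]$, valid for any operators on the joint space, which follows by writing both sides in components and relabelling the transposed indices on Bob's subsystem. Applying it to \cref{eq:processbehaviour}, together with $(A_{a|x}\otimes B_{b|y})^{T_B}=A_{a|x}\otimes B_{b|y}^{T}$ where $T$ denotes transposition on $\set{H}^{B_IB_O}$, I would rewrite
\beq
p^Q(ab|xy)=\tr\left[(A_{a|x}\otimes B_{b|y}^{T})\,W^{T_B}\right].
\eeq
This already expresses the given behaviour as one obtained by performing the operations $\{A_{a|x}\}$ and $\{B_{b|y}^{T}\}$ on the operator $W^{T_B}$, about which we have the separability hypothesis.

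The next step is to verify that $\{B_{b|y}^{T}\}$ is itself a valid set of instruments. Positivity is preserved because the transpose of a positive semidefinite operator is again positive semidefinite, its spectrum being unchanged. For the normalisation constraint I would use that partial trace and transposition commute in the sense $\tr_{B_O}[B_{b|y}^{T}]=(\tr_{B_O}B_{b|y})^{T}$, where the transpose on the right-hand side acts on $\set{H}^{B_I}$ alone; summing over $b$ then gives $\tr_{B_O}\sum_b B_{b|y}^{T}=(\1^{B_I})^{T}=\1^{B_I}$. Hence $\{B_{b|y}^{T}\}$ satisfies both conditions defining a set of instruments, and crucially the transposition on Bob's side leaves Alice's operators untouched.

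Finally, since $W^{T_B}$ is by assumption causally separable, I would write $W^{T_B}=q\,W^{A\prec B}+(1-q)\,W^{B\prec A}$ as in \cref{eqcausalW}. Substituting this decomposition and distributing the trace, $p^Q(ab|xy)$ becomes
\footnotesize
\beq
q\,\tr[(A_{a|x}\otimes B_{b|y}^{T})W^{A\prec B}]+(1-q)\,\tr[(A_{a|x}\otimes B_{b|y}^{T})W^{B\prec A}].
\eeq
\normalsize
By \cref{eqdefcausalW}, applied to the valid instruments $\{A_{a|x}\}$ and $\{B_{b|y}^{T}\}$, each term is a causally ordered behaviour ($A\prec B$ and $B\prec A$ respectively), so the full expression has exactly the form of a causal behaviour in \cref{eqcausalP}. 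As $\{A_{a|x}\}$ and $\{B_{b|y}\}$ were arbitrary, this shows every behaviour generated by $W$ is causal.

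The conceptual crux, and the only nonobvious move, is recognising that partial transposition on Bob transfers the burden from the uncharacterised operator $W$ onto $W^{T_B}$, where the separability assumption can be invoked; after that the argument is essentially bookkeeping. The points requiring genuine care are the two closure properties of transposition under the instrument constraints — positivity and, more delicately, the interplay $\tr_{B_O}[B^{T}]=(\tr_{B_O}B)^{T}$ that guarantees normalisation — and verifying that the causal-order structure of $W^{T_B}$ passes unchanged to the behaviour because only Bob's operators are transposed.
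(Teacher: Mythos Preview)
Your argument is correct. The paper does not actually supply its own proof of this proposition, attributing it instead to ref.~\cite{feix16}; however, the technique you use --- self-adjointness of the partial transpose to shift $T_B$ from $W$ onto Bob's instruments, followed by the observation that transposition preserves the instrument constraints --- is exactly the one the paper employs in \cref{app:counterexemples} to prove the stronger \cref{thmcounterexample}, with the roles of Alice and Bob interchanged.
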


We would like to remark that this phenomenon can be seen as a consequence of the choice of definition of causally separable process matrices. Recalling \cref{sec:preliminaries}, a causally ordered process matrix is defined as the most general operator $W^{A\prec B}$ that takes {any} pairs of sets of instruments to causally ordered behaviours according to $p^{A\prec B}(ab|xy)=\tr[(A_{a|x}\otimes B_{b|y})\ W^{A\prec B}]$. On the other hand, the definition of a causally separable process matrix $W^\text{sep}$ as a convex combination of process matrices with definite causal orders, instead of focusing on the behaviours, has an arguably more physical motivation of a classical mixture of causal orders. If the definition were to, alternatively, focus on the behaviours, then {a} natural choice would be to define a `causally separable' process matrix $\tilde{W}^\text{sep}$ as the most general operator that takes {any} pairs of sets of instruments to causal behaviours according to $p^\text{causal}(ab|xy)=\tr[(A_{a|x}\otimes B_{b|y})\ \tilde{W}^\text{sep}]$. With this alternative, {inequivalent} definition, `causally nonseparable' process matrices would {always lead to noncausal behaviours, for some choice of instruments}, by definition. We observe {the} phenomenon {of causally nonseparable process matrices leading exclusively to causal behaviours}, presented in \cref{prop1}, when we take the physically motivated definition, and it exposes an intrinsic difference between these two kinds of reasoning. In the next sections, we show how this phenomenon manifests itself in the semi-device-independent scenario.


\subsection*{Semi-device-independent} 

In this final scenario, which has not been explored for process matrices before, we have the information of the behaviour and the instruments of one party, in this case, Bob. According to \cref{defSDIcert}, one needs to check whether the given behaviour can be reproduced by performing these exact given instruments for Bob, and any set of instruments for Alice with fixed number of inputs and outputs, on any causally separable process matrix that has a fixed dimension on Bob's side. Since both the process matrix and Alice's instruments are variables in \cref{eqSDIcert}, it is not clear whether this problem can be solved by SDP.

{Our approach contrasts a previous one which exploits communication complexity tasks to certify indefinite causal order in process matrices assuming an upper bound for communication capacity between parties and the dimension of their local systems} \cite{feix15,guerin16}.

However, consider the following expression for a process behaviour,
\begin{align}
p(ab|xy) &= \tr\left[(A_{a|x}\otimes B_{b|y})W\right] \label{eqa} \\
&= \tr\left[B_{b|y}\,\tr_A(A_{a|x}\otimes\1^B\,W)\right] \label{eqb} \\
&=\tr\left(B_{b|y} w^Q_{a|x}\right), \ \ \ \forall \ a,b,x,y, \label{eqc}
\end{align}
which motivates us to define $w^Q_{a|x}$.

\begin{definition}[Process assemblage]
A process assemblage $\{w^Q_{a|x}\}$ is a set of operators $w^Q_{a|x}\in\set{L}(\set{H}^{B_I}\otimes\set{H}^{B_O})$ for which there exist a process matrix $W^{A_IA_OB_IB_O}$ and a set of instruments $\{A^{A_IA_O}_{a|x}\}$ such that
\beq
w^{Q}_{a|x} = \tr_{A_IA_O}\left[(A^{A_IA_O}_{a|x}\otimes\1^{B_IB_O})W^{A_IA_OB_IB_O}\right], \label{eqAQ}
\eeq
for all $a,x$.
\end{definition}

By defining the process assemblage, we gather all the variables in the certification problem in one object and can start to relate properties of this object to properties of the process matrix. We remark that the process assemblage generalizes the notion of assemblage in EPR-steering \cite{wiseman07,pusey13}, which is recovered when both Alice's and Bob's output spaces have $d_{A_O}=d_{B_O}=1$. Consequently, $\{A_{a|x}\}$ becomes a set of POVMs, $W$ becomes a bipartite quantum state, and the process assemblage recovers the steering assemblage $\sigma_{a|x}=\tr_A(A_{a|x}\otimes \1^B \rho^{AB})$ \cite{pusey13}.
 
Let us first examine the equation below more closely:
\beq\label{eqmostgenass}
p(ab|xy) = \tr\left(B_{b|y} w_{a|x}\right).
\eeq

In the same way that a process matrix was defined as the most general operator that takes sets of local instruments to a behaviour, we can define a \textit{general assemblage} to be the most general object that takes a set of instruments to a valid behaviour and respects linearity. In the \cref{app:assemblages}, we prove that this definition is equivalent to:
\begin{definition}[General assemblage]\label{defbigenass}
A general assemblage $\{w_{a|x}\}$ is a set of operators $w_{a|x}\in\set{L}(\set{H}^{B_I}\otimes\set{H}^{B_O})$ that satisfies
\begin{align}
w_{a|x} &\geq 0 \ \ \ \forall \ a,x \\
\tr \sum_a w_{a|x} &=d_{B_O} \ \ \ \forall \ x \\
\sum_a w_{a|x} &= _{B_O} \sum_a w_{a|x} \ \ \ \forall \ x.
\end{align}
\end{definition}

By defining the general assemblage as the most general set of operators that takes a set of instruments to a behaviour and respects linearity, we are no longer considering its relation with a process matrix or requiring that it is a process assemblage. 

If one compares the set of all general assemblages to the set of all process assemblages, it is clear that the set of general assemblages contains the set of process assemblages, since one can see from \cref{eqb} that all process assemblages lead to valid behaviours. But the former set is in principle larger, an outer approximation with a simpler characterisation. We show that, indeed, the set of general assemblages is larger than the set of process assemblages, because just like general behaviours, not all general assemblages can be realised by process matrices. 

\begin{theorem}\label{thmgenass}
All process assemblages are valid assemblages, however, not all valid assemblages are process assemblages.

In particular, in the scenario where Alice has dichotomic inputs and outputs, the general assemblage $\{w_{a|x}\}$ given by $w_{a|x}=\ketbra{x}{x}\otimes\ketbra{a}{a}$ is not a process assemblage.
\end{theorem}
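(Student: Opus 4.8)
The plan is to prove the two inclusions separately, spending almost all of the effort on the strict separation. That every process assemblage is a valid assemblage is the easy direction: given $w^Q_{a|x}=\tr_{A_IA_O}\left[(A_{a|x}\otimes\1^{B_IB_O})W\right]$, positivity $w^Q_{a|x}\geq0$ holds because the map $X\mapsto\tr_{A_IA_O}\left[(A_{a|x}\otimes\1)X\right]$ is completely positive (since $A_{a|x}\geq0$) and $W\geq0$ by \eqref{eqW1}. The normalisation $\tr\sum_a w^Q_{a|x}=d_{B_O}$ and the constraint $\sum_a w^Q_{a|x}={}_{B_O}\sum_a w^Q_{a|x}$ then follow by feeding the deterministic operation $\sum_a A_{a|x}$, which satisfies $\tr_{A_O}\sum_a A_{a|x}=\1^{A_I}$, into the validity conditions \eqref{eqW1}--\eqref{eqW5}: the reduced operator $\sum_a w^Q_{a|x}$ is precisely a valid one-party process matrix on Bob's wing, which is exactly what the three conditions of \cref{defbigenass} encode. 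I would carry this out by direct substitution of the trace-and-replace identities; it is routine.

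For the separation, I would first confirm that the proposed $w_{a|x}=\ketbra{x}{x}\otimes\ketbra{a}{a}$, with $a,x\in\{0,1\}$ and $d_{B_I}=d_{B_O}=2$, is a valid assemblage: it is positive as a projector, $\sum_a w_{a|x}=\ketbra{x}{x}\otimes\1^{B_O}$ has trace $d_{B_O}=2$, and this operator is manifestly invariant under ${}_{B_O}(\cdot)$, so all three conditions of \cref{defbigenass} hold.

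The core of the argument is a reduction to \cref{thmprobtwoway}, by contradiction. Suppose $\{w_{a|x}\}$ is a process assemblage, realised by some process matrix $W$ and instruments $\{A_{a|x}\}$. Then, reading \eqref{eqa}--\eqref{eqc} backwards, for \emph{every} choice of Bob instruments $\{B_{b|y}\}$ the behaviour $p(ab|xy)=\tr(B_{b|y}\,w_{a|x})$ is a process behaviour. The key construction is to pick Bob's dichotomic instruments to be $B_{b|y}=\ketbra{b}{b}^{B_I}\otimes\ketbra{y}{y}^{B_O}$; these are legitimate since they are positive and satisfy $\tr_{B_O}\sum_b B_{b|y}=\tr_{B_O}(\1^{B_I}\otimes\ketbra{y}{y})=\1^{B_I}$. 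A direct evaluation gives $p(ab|xy)=\tr(\ketbra{b}{b}\ketbra{x}{x})\,\tr(\ketbra{y}{y}\ketbra{a}{a})=\delta_{b,x}\delta_{a,y}$, i.e. exactly the deterministic two-way signalling behaviour $p^{2\mathrm{WS}}$.

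Finally I would invoke \cref{thmprobtwoway}: for $p=p^{2\mathrm{WS}}$ one has $\tfrac14\sum_{a,b,x,y}\delta_{a,y}\delta_{b,x}\,p(ab|xy)=1$, which exceeds $1-\tfrac1{d+1}$ for every finite total dimension $d$, so $p^{2\mathrm{WS}}$ is not a process behaviour, contradicting the previous paragraph; hence $\{w_{a|x}\}$ cannot be a process assemblage. The main obstacle is conceptual rather than computational: recognising that this assemblage encodes precisely the ability to implement two-way signalling and choosing the decoding instruments $B_{b|y}$ that expose it. Once the reduction to the impossible behaviour $p^{2\mathrm{WS}}$ is set up, the conclusion is immediate, and crucially the argument needs no bound on Alice's dimensions, since the robustness in \cref{thmprobtwoway} rules out $p^{2\mathrm{WS}}$ for all finite $d$.
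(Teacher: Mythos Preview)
Your proposal is correct and follows essentially the same route as the paper: verify that $w_{a|x}=\ketbra{x}{x}\otimes\ketbra{a}{a}$ satisfies \cref{defbigenass}, then assume a process realisation, plug in Bob's instruments $B_{b|y}=\ketbra{b}{b}\otimes\ketbra{y}{y}$ to produce the deterministic two-way signalling behaviour $\delta_{b,x}\delta_{a,y}$, and derive a contradiction from \cref{thmprobtwoway}. The only difference is that you spell out the easy direction (process assemblages are valid) more carefully than the paper, which leaves it implicit.
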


The proof presented in \cref{app:robust_proof} is based on the fact that this assemblage can lead to a deterministic two-way signalling behaviour, which we know not to be attainable by process matrices from \cref{thmprobtwoway}.

Although the process assemblage can be regarded as a generalisation of the steering assemblage that arises in EPR-steering scenarios, here we point out an important difference between semi-device-independent certification of indefinite causal order and entanglement. A fundamental result on EPR-steering theory is that all steering assemblages admit a quantum realisation by performing POVMs on a quantum state \cite{schrodinger35,gisin89,hughston93}. On the other hand, \cref{thmgenass} shows that some process assemblages do not admit a quantum realisation by performing a set of instruments on a process matrix.


The next step is to assign causal properties to assemblages. A natural approach is to define that an assemblage $\{w^{Q,\, A\prec B}_{a|x}\}$ is causally ordered from Alice to Bob if it is a process assemblage that can be obtained from a process matrix that is causally ordered from Alice to Bob, namely, 
\beq\label{eqsjdn}
w^{Q,\,A\prec B}_{a|x} = \tr_A[(A_{a|x}\otimes \1^B)\,W^{A\prec B}] \ \ \ \forall \ a,x,
\eeq
for some set of instruments $\{A_{a|x}\}$ and some causally ordered process matrix $W^{A\prec B}$, and equivalently from Bob to Alice.

Since the above definition depends on an unknown set of instruments $\{A_{a|x}\}$, it is not easy to check whether a given general assemblage $\{w_{a|x}\}$ is causally ordered. We therefore derive a simpler characterisation of causally ordered assemblages that is equivalent to the one above. 

Analogously to how we defined a general assemblage, we characterise the most general set of operators $\{w^{A\prec B}_{a|x}\}$ that give rise to a causally ordered behaviour $\{p^{A\prec B}(ab|xy)\}$ according to the equation
\beq\label{eqmostgencauass}
p^{A\prec B}(ab|xy)=\tr(B_{b|y} w^{A\prec B}_{a|x}),
\eeq
for any set of instruments $\{B_{b|y}\}$ for Bob. We then prove its equivalence to the definition below in \cref{app:assemblages}.

\begin{definition}[Causally ordered assemblages]\label{defdeforderass}
An assemblage $\{w^{A\prec B}_{a|x}\}$ is causally ordered from Alice to Bob if it satisfies
\beq
w^{A\prec B}_{a|x}= _{B_O}w^{A\prec B}_{a|x} \ \ \ \forall \ a,x,
\eeq
while an assemblage $\{w^{B\prec A}_{a|x}\}$ is causally ordered from Bob to Alice if it satisfies
\beq
\sum_a w^{B\prec A}_{a|x}= \sum_a w^{B\prec A}_{a|x'} \ \ \ \forall \ x,x'.
\eeq
\end{definition}

In \cref{app:quantumrealization}, we show that all causally ordered assemblages $\{w^{A\prec B}_{a|x}\}$ and $\{w^{B\prec A}_{a|x}\}$ can be realized by some set of instruments $\{A_{a|x}\}$ and some causally ordered process matrix $W^{A\prec B}$ and $W^{B\prec A}$, respectively. That is, we show that all $\{w^{A\prec B}_{a|x}\}$ satisfy \cref{eqsjdn}, and analogously for the causal order $B\prec A$.

We now contrast the statement made in the previous paragraph with general assemblages. As stated before, the technique of defining the general assemblage as the most general set of linear operators that takes instruments to general behaviours results in an object that cannot always be described by process matrices. On the other hand, in the case of causally ordered assemblages, this technique yielded an object that can always be described by (causal) process matrices. The main point to be taken here is that to characterize the most general set of linear operators that takes instruments to some kind of behaviour is a mathematical artifice to find an outer approximation to the set of assemblages that are described by process matrices. The goal is to find an approximation of this set with a potentially simpler characterization. This approximation may be tight, as in the case of causal assemblages, or may not be tight, as in the case of general assemblages. We explore this further in \cref{app:tripartite} for assemblages in tripartite scenarios.

We now define a causal assemblage by taking the elements of the convex hull of causally ordered assemblages. 

\begin{definition}[Causal assemblage]\label{defcausalass}
An assemblage $\{w^\text{causal}_{a|x}\}$ is causal if it can be expressed as a convex combination of causally ordered assemblages, \textit{i.e.},
\beq\label{eqcausalA}
w^\text{causal}_{a|x} \coloneqq q w^{A\prec B}_{a|x} + (1-q)w^{B\prec A}_{a|x},
\eeq
for all $a,x$, where $0\leq q\leq 1$ is a real number. An assemblage that does not satisfy \cref{eqcausalA} is called a noncausal assemblage.
\end{definition}

We can now express our result in terms of the following lemma, proved in \cref{app:quantumrealization}.

\begin{lemma}\label{lmmass}
A general assemblage is causal if and only if it is a process assemblage that can be obtained from a causally separable process matrix.
\end{lemma}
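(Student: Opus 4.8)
The plan is to prove the two implications of the biconditional separately, mirroring the structure of the proof of \cref{lmmbehaviour} for behaviours.

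For the direction from process assemblage to causal assemblage, I would start from a process assemblage $\{w_{a|x}\}$ arising from a causally separable process matrix. By \cref{eqcausalW} we may write $W^{\text{sep}}=q\,W^{A\prec B}+(1-q)\,W^{B\prec A}$ with $W^{A\prec B},W^{B\prec A}$ causally ordered, realised by a single set of instruments $\{A_{a|x}\}$, so that by linearity of the partial trace
\[
w_{a|x}=q\,\tr_{A}\!\big[(A_{a|x}\otimes\1^{B})W^{A\prec B}\big]+(1-q)\,\tr_{A}\!\big[(A_{a|x}\otimes\1^{B})W^{B\prec A}\big].
\]
I would then check that each term is a causally ordered assemblage in the sense of \cref{defdeforderass}. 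This is a short computation: since the map ${}_{B_O}$ acts only on $\set{H}^{B_O}$ it commutes with $\tr_A$, and $W^{A\prec B}={}_{B_O}W^{A\prec B}$ by \cref{eqW6}, so ${}_{B_O}\tr_A[(A_{a|x}\otimes\1^B)W^{A\prec B}]=\tr_A[(A_{a|x}\otimes\1^B)W^{A\prec B}]$; dually, the order condition $W^{B\prec A}={}_{A_O}W^{B\prec A}$ (the $B\prec A$ analogue of \cref{eqW6}) together with $\tr_{A_O}\sum_a A_{a|x}=\1^{A_I}$ makes $\sum_a\tr_A[(A_{a|x}\otimes\1^B)W^{B\prec A}]$ independent of $x$. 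Hence $\{w_{a|x}\}$ is a convex combination of causally ordered assemblages, i.e.\ causal by \cref{defcausalass}.

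For the converse, suppose $\{w_{a|x}\}$ is causal, so by \cref{eqcausalA} we have $w_{a|x}=q\,w^{A\prec B}_{a|x}+(1-q)\,w^{B\prec A}_{a|x}$ with each term causally ordered. By the realisation result of \cref{app:quantumrealization} (that every causally ordered assemblage satisfies \cref{eqsjdn}), there are instruments $\{A^{(1)}_{a|x}\}$ and a causally ordered $W^{A\prec B}$ with $w^{A\prec B}_{a|x}=\tr_A[(A^{(1)}_{a|x}\otimes\1^B)W^{A\prec B}]$, and likewise $\{A^{(2)}_{a|x}\}$, $W^{B\prec A}$ for the other order. The remaining step is to merge these two realisations into one: I would enlarge Alice's spaces to the direct sums $\set{H}^{A_I}=\set{H}^{A_I^{(1)}}\oplus\set{H}^{A_I^{(2)}}$ and $\set{H}^{A_O}=\set{H}^{A_O^{(1)}}\oplus\set{H}^{A_O^{(2)}}$, take the single block-diagonal instrument $A_{a|x}:=A^{(1)}_{a|x}\oplus A^{(2)}_{a|x}$ (valid since each block is), and embed $W^{A\prec B}$ and $W^{B\prec A}$ each into its corresponding diagonal block, completed to valid causally ordered process matrices on the enlarged space. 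Block-diagonality of the instrument then gives $\tr_A[(A_{a|x}\otimes\1^B)W^{A\prec B}]=w^{A\prec B}_{a|x}$ and $\tr_A[(A_{a|x}\otimes\1^B)W^{B\prec A}]=w^{B\prec A}_{a|x}$ with the \emph{same} instrument, so $W^{\text{sep}}:=q\,W^{A\prec B}+(1-q)\,W^{B\prec A}$ is causally separable by \cref{eqcausalW} and reproduces $\{w_{a|x}\}$.

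The main obstacle is precisely this merging. Naively padding the complementary block with zeros fails the normalisation $\tr W=d_{A_O}d_{B_O}$, so the inactive block must be filled with a genuine causally ordered ``dummy''; this dummy must not contribute to the assemblage, which I would arrange by routing its Alice output into an extra dimension orthogonal to the support of all $A^{(2)}_{a|x}$ (resp.\ $A^{(1)}_{a|x}$), so that the cross term $\tr[(A^{(2)}_{a|x}\otimes\1^B)(\text{dummy})]$ vanishes by support orthogonality. The delicate point to verify is that the resulting block-diagonal operators still obey all the process-matrix constraints \cref{eqW3,eqW4,eqW5} together with \cref{eqW6} on the enlarged spaces, since the trace-and-replace maps there involve the full identities $\1^{A_I},\1^{A_O}$ on the direct sums and hence mix blocks; the embedding and the choice of dummy must therefore be made compatibly with these constraints (equivalently, one may carry a classical flag in a fixed state that Alice's instrument ignores). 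This is the assemblage analogue of the explicit construction behind \cref{lmmbehaviour}, and once it is in place both implications close.
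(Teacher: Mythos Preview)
Your proof is correct and follows the same two-implication structure as the paper. The ``if'' direction is identical. For the ``only if'' direction, note that what you invoke as ``the realisation result of \cref{app:quantumrealization}'' is not a separate lemma you can cite but precisely the substantive content of the paper's proof of \cref{lmmass}: the explicit instruments and causally ordered process matrices reproducing $\{w^{A\prec B}_{a|x}\}$ (via $A_{a|x}=\1^{A_I}\otimes\sigma_{a|x}^{T}$ acting on an identity-channel process $\frac{\1^{A_I}}{d_{A_I}}\otimes\ketbra{\Phi^+}{\Phi^+}^{A_OB_I}\otimes\1^{B_O}$) and $\{w^{B\prec A}_{a|x}\}$ (via a purification/GHJW-style construction) are built there from scratch. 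Having those in hand, the paper merges them by your own parenthetical alternative---a classical flag on Alice's \emph{input} space only, exactly as in the proof of \cref{lmmbehaviour}---rather than by direct sums on both $A_I$ and $A_O$. Your direct-sum route can be made to work but, as you correctly anticipate, it generates the dummy-block normalisation issue and the block-mixing of the trace-and-replace maps on the enlarged spaces; the flag-on-input-space construction sidesteps all of this, so your final parenthetical remark is in fact the cleaner path and is the one the paper takes.
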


This result allows us to identify which causally nonseparable process matrices can be certified in a semi-device-independent way: 

\begin{theorem}\label{thmsdi}
A process matrix is certified to be causally nonseparable in a semi-device-independent way if and only if it can generate a noncausal assemblage for some choice of instruments for Alice.
\end{theorem}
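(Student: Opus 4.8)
The plan is to prove the theorem as a direct corollary of \cref{lmmass}, mirroring exactly the structure of the proof of \cref{thmdi} in the device-independent case. The semi-device-independent certification condition from \cref{defSDIcert} states that a process matrix is certified causally nonseparable precisely when the observed behaviour $\{p^Q(ab|x,\overline{B}_{b|y})\}$ cannot be reproduced by \emph{any} causally separable process matrix together with \emph{any} instruments for Alice, while Bob's trusted instruments $\{\overline{B}_{b|y}\}$ remain fixed. The key observation, following \crefrange{eqa}{eqc}, is that this condition is controlled entirely by the process assemblage $\{w^Q_{a|x}\}$, since $p^Q(ab|x,\overline{B}_{b|y}) = \tr(\overline{B}_{b|y}\, w^Q_{a|x})$ and all of the unknowns (Alice's instruments and $W$) have been absorbed into the single object $w^Q_{a|x}$.

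First I would establish the ``only if'' direction by contraposition: suppose the process matrix generates only causal assemblages for every choice of Alice's instruments. By \cref{lmmass}, any such causal assemblage is a process assemblage arising from a causally separable process matrix $W^\text{sep}$ and some instruments $\{A_{a|x}\}$. Substituting this realization into \cref{eqmostgenass} shows that the observed behaviour satisfies $p^Q(ab|x,\overline{B}_{b|y}) = \tr[(A_{a|x}\otimes \overline{B}_{b|y})W^\text{sep}]$, so the certification condition \eqref{eqSDIcert} fails and no causally nonseparable process matrix can be certified. Hence certification requires the existence of some choice of Alice's instruments producing a noncausal assemblage.

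For the ``if'' direction, I would assume that for some choice of Alice's instruments the process matrix yields a noncausal process assemblage $\{w^Q_{a|x}\}$. By \cref{lmmass}, a noncausal assemblage cannot be obtained from any causally separable process matrix. Therefore there is no pair of a causally separable $W^\text{sep}$ and Alice's instruments $\{A_{a|x}\}$ reproducing $\{w^Q_{a|x}\}$, and consequently the behaviour $p^Q(ab|x,\overline{B}_{b|y}) = \tr(\overline{B}_{b|y}\, w^Q_{a|x})$ cannot be written as $\tr[(A_{a|x}\otimes \overline{B}_{b|y})W^\text{sep}]$ for any such choice. This is exactly the certification condition \eqref{eqSDIcert}, completing the equivalence.

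The main subtlety, and the step I expect to require the most care, is the logical passage from ``the assemblage is noncausal'' to ``no causally separable $W^\text{sep}$ reproduces the \emph{behaviour} for the fixed trusted instruments $\overline{B}_{b|y}$.'' One must check that collapsing the behaviour through Bob's fixed instruments does not accidentally make a noncausal assemblage look causal at the level of the observed statistics; this is guaranteed because \cref{lmmass} characterizes causality of the assemblage intrinsically (independently of which instruments Bob later applies), so the obstruction lives in $\{w^Q_{a|x}\}$ itself rather than in the contracted behaviour. Everything else is a routine unfolding of definitions, so the proof is essentially a one-line invocation of \cref{lmmass} in both directions.
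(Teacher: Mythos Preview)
Your approach matches the paper's: both directions are a direct invocation of \cref{lmmass}, and the paper's two-sentence proof has exactly the structure you describe.

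There is one genuine gap in your ``if'' direction. You correctly identify the subtlety---that at the behaviour level a noncausal assemblage $\{w^Q_{a|x}\}$ might yield statistics $\tr(\overline{B}_{b|y}\,w^Q_{a|x})$ that coincide with $\tr(\overline{B}_{b|y}\,w'_{a|x})$ for some \emph{causal} $\{w'_{a|x}\}$---but your dismissal of it is incorrect. That \cref{lmmass} characterises causality of the assemblage intrinsically does not prevent a \emph{different} causal assemblage from reproducing the same behaviour once contracted against a fixed $\{\overline{B}_{b|y}\}$; the step ``no causally separable $W^{\text{sep}}$ reproduces $\{w^Q_{a|x}\}$, hence none reproduces the behaviour'' is a non sequitur for generic Bob instruments. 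The correct fix is that the certification in \cref{defSDIcert} allows Bob's trusted instruments to be chosen freely, so one may take them tomographically complete; then the behaviour determines the assemblage uniquely and the obstruction at the assemblage level transfers to the behaviour level. The paper makes exactly this point in the paragraph immediately following the theorem rather than inside the proof itself.
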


\begin{proof}
If a process matrix is causally separable then its assemblages will be causal. If the assemblage is causal, even though it could in principle have been generated by a causally nonseparable process matrix, according to \cref{lmmass} it can always be reproduced by a causally separable process matrix and hence this property cannot be certified. 
\end{proof}

{Note that all} requirements for an assemblage to be causal are linear and positive semidefinite constraints, hence one can check whether an assemblage is causal via SDP.

However, in our semi-device-independent certification scenario, the only information available is the process behaviour and Bob's instruments, not the assemblage itself. If it were the case that Bob's instruments are tomographically complete, he could obtain full information about the assemblage, and check whether it is causal via SDP. Nevertheless, we show that it is possible to check whether a given behaviour can certify indefinite causal order in a semi-device-independent scenario using SDP even without the knowledge of the assemblage. We do this by rephrasing our certification task in terms of an unknown assemblage.
\\

\noindent\textbf{Definition 3'} (Semi-device-independent certification, with assemblages)\textbf{.}
\textit{Given a behaviour $\{p^Q(ab|xy)\}$, that arises from unknown instruments on Alice's side, known instruments $\{\overline{B}_{b|y}\}$ on Bob's side, and an unknown bipartite process matrix, one certifies that this process matrix is causally nonseparable in a semi-device-independent way if, for some $a,b,x,y$,
\beq\label{eqSDIcert2}
p^Q(ab|x \,, \overline{B}_{b|y})\neq\tr(\overline{B}_{b|y}\,w^\text{causal}_{a|x})
\eeq
for all causal assemblages $\{w^{\text{causal}}_{a|x}\}$.}
\\

Now we are able to formulate the semi-device-independent certification problem in terms of SDP:

\begin{align}
\begin{split}\label{sdpSDI}
\text{given}                                   &\hspace{0.2cm} \{p^Q(ab|xy)\}, \{\overline{B}_{b|y}\} \\
\text{find}	 				  &\hspace{0.2cm} \{w_{a|x}\} \\
\text{s.t.}					  &\hspace{0.2cm} p^Q(ab|xy)=\tr(\overline{B}_{b|y}\,w_{a|x}) \ \forall\,a,b,x,y  \\
	 					  &\hspace{0.2cm} \{w_{a|x}\}\in\text{CAUSAL}, \\
\end{split}
\end{align}
where CAUSAL denotes the set of causal assemblages, that is, $\{w_{a|x}\}$ is constrained to \cref{eqcausalA}.

As in the previous cases, if the problem is infeasible, then the process matrix that was used to generate the process behaviour $\{p^Q(ab|xy)\}$ is certainly causally nonseparable.
If the problem is feasible, then one can use the results presented in \cref{app:quantumrealization} to explicitly find a causally separable process matrix $W^\text{sep}$ and sets of instruments $\{A_{a|x}\}$ such that $w_{a|x}=\tr_A[(A_{a|x}\otimes\1^B)\,W^\text{sep}]$.

All three SDP formulations we presented in \cref{sdpDD,sdpDI,sdpSDI} are feasibility problems which can be turned into optimisation problems that allow for a robust certification of indefinite causal order. We discuss this further in \cref{sec:switch}.

We now show that not all process matrices can be certified in a semi-device-independent way, as some process matrices cannot lead to noncausal assemblages. The proof is in \cref{app:counterexemples}.

\begin{theorem}[Device-dependent certifiable, semi-device-independent noncertifiable process matrix]\label{thmcounterexample}
There exist causally nonseparable process matrices that, for any sets of instruments on Alice's side, always give rise to causal assemblages. That is, causally nonseparable process matrices that cannot be certified in a semi-device-independent way.

In particular, let $W\in\set{L}(\set{H}^{A_IA_OB_IB_O})$ be a process matrix and $W^{T_A}$ be the partial transposition of $W$ with respect to some basis in $\set{L}(\set{H}^{A_IA_O})$ for Alice. If $W^{T_A}$ is causally separable, the assemblages generated by $w_{a|x}=\tr_A[(A_{a|x}\otimes\1^B)\,W]$ are causal for every set of instruments $\{A_{a|x}\}$.
\end{theorem}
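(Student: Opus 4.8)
The plan is to mirror the partial-transpose argument behind \cref{prop1}, now carried out at the level of assemblages rather than behaviours. The central tool is the elementary identity, valid for any $X\in\set{L}(\set{H}^{A_IA_O})$ and $Y\in\set{L}(\set{H}^{A_IA_OB_IB_O})$,
$$\tr_A\left[(X\otimes\1^B)\,Y^{T_A}\right]=\tr_A\left[(X^{T_A}\otimes\1^B)\,Y\right],$$
which follows by writing both sides in a basis and using that, in the resulting contraction over Alice's indices, the transpose can be moved freely between the two factors living on $\set{H}^{A_IA_O}$. Setting $Y=W^{T_A}$ (so that $Y^{T_A}=W$) and $X=A_{a|x}$, this rewrites the assemblage produced by $W$ as
$$w_{a|x}=\tr_A\left[(A_{a|x}\otimes\1^B)\,W\right]=\tr_A\left[(A_{a|x}^{T_A}\otimes\1^B)\,W^{T_A}\right].$$

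First I would verify that $\{A_{a|x}^{T_A}\}$ is again a legitimate set of instruments. Positivity is preserved under transposition, and since transposition commutes with the partial trace up to a transpose on the remaining space, i.e. $\tr_{A_O}[A_{a|x}^{T_A}]=(\tr_{A_O}A_{a|x})^{T_{A_I}}$, the normalisation $\tr_{A_O}\sum_a A_{a|x}=\1^{A_I}$ transposes into $\tr_{A_O}\sum_a A_{a|x}^{T_A}=\1^{A_I}$, as $\1^{A_I}$ is transpose-invariant. Hence $\{A_{a|x}^{T_A}\}$ is a valid instrument set acting on $W^{T_A}$, which is itself a process matrix by hypothesis, being causally separable.

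Next I would exploit the hypothesis to decompose $W^{T_A}=qW^{A\prec B}+(1-q)W^{B\prec A}$ and push the linear map $\tr_A[(A_{a|x}^{T_A}\otimes\1^B)\,\cdot\,]$ through this convex combination. Each resulting term $\tr_A[(A_{a|x}^{T_A}\otimes\1^B)\,W^{A\prec B}]$ is precisely of the form \cref{eqsjdn}, namely an assemblage obtained by applying the valid instrument $\{A_{a|x}^{T_A}\}$ to the causally ordered process matrix $W^{A\prec B}$, and is therefore causally ordered from Alice to Bob, with the analogous conclusion for $W^{B\prec A}$. Consequently $w_{a|x}$ is a convex combination of causally ordered assemblages and is causal by \cref{defcausalass}, for every choice of $\{A_{a|x}\}$; by \cref{thmsdi} such a $W$ admits no semi-device-independent certification.

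Finally, for the existence claim I would invoke \cref{prop1}, which supplies a causally nonseparable $W_0$ whose partial transpose $W_0^{T_B}$ on Bob's space is causally separable. Taking the full transpose $W:=W_0^{T}$ yields a matrix that is still a process matrix and still causally nonseparable, since transposition is an involution preserving all of \cref{eqW1,eqW2,eqW3,eqW4,eqW5} as well as causal separability, while $W^{T_A}=W_0^{T_B}$ is causally separable, delivering the desired example (alternatively one may simply relabel Alice and Bob in \cref{prop1}). The main obstacle is bookkeeping rather than conceptual: one must take every transpose on the \emph{full} Alice space $A_IA_O$ consistently and confirm that the transposed operators remain genuine instruments, for otherwise the object $w_{a|x}$ would fail to be a valid assemblage and the reduction to \cref{defdeforderass} would collapse.
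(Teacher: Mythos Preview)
Your argument is correct and coincides with the paper's own proof: both use the self-adjointness of the partial transpose to rewrite $w_{a|x}=\tr_A[(A_{a|x}^{T}\otimes\1^B)\,W^{T_A}]$, observe that $\{A_{a|x}^{T}\}$ is again a valid set of instruments, and conclude causality of the assemblage from the causal separability of $W^{T_A}$ (the paper appeals directly to \cref{lmmass} rather than unpacking the convex decomposition, but this is the same step). For the existence part the paper simply cites ref.~\cite{feix16} for concrete examples of causally nonseparable $W$ with causally separable $W^{T_A}$, which is equivalent to your relabelling/full-transpose construction from \cref{prop1}.
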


We remark that the above theorem strictly extends \cref{prop1} which was first proved in ref.~\cite{feix16}. That is, with the same hypothesis -- that $W^{T_A}$ is causally separable -- we can make a stronger claim -- that $W$ cannot be certified as causally nonseparable even if Bob is treated in a device-dependent way (is trusted). 

In ref.~\cite{feix16}, the authors show that, when extended with an entangled state, the resulting process matrix can violate a causal inequality and can therefore be certified in a device-independent way. This implies that this extended process matrix can also be certified in a semi-device-independent way. Ref.~\cite{feix16} leaves as an open question the existence of a causally nonseparable bipartite process matrix that cannot be certified in a device-independent way even when extended by entanglement. We remark that this open question is also relevant in the context of semi-device-independent certification.

Another natural question also emerges: is there a bipartite process matrix that can be certified to be causally nonseparable in a semi-device-independent scenario but that cannot be certified to be causally nonseparable in any device-independent scenario? Although we believe such process matrix exists, no example is currently known.


\section{The Quantum Switch}\label{sec:switch}

The concepts of certification presented in the previous section have a natural generalisation to different multipartite scenarios. We are now going to illustrate one {particular} tripartite case by discussing and presenting some results involving the quantum switch \cite{chiribella12,chiribella13}. In \cref{app:tripartite}, {we present a detailed extension of the concepts and results from the bipartite case, introduced in} \cref{sec:certification}, {to the tripartite case in which the quantum switch is defined, and pave the way to future more general tripartite and multipartite extensions.}

On its first appearance, the \textit{quantum switch} was defined as a higher-order transformation that maps quantum channels into quantum channels and it can be defined as the following.  Let $U_A$ and $U_B$ be two unitary operators that act on the same space of a target state $\ket{\psi}^t$. Let $\ket{c}^{c}:=\alpha\ket{0}+\beta\ket{1}$, $|\alpha|^2+|\beta|^2=1$, be a `control' state that is able to coherently control the order in which the operations $U_A$ and $U_B$ are applied. The quantum switch acts as following:
\small
\beq\label{switch}
\text{switch}(U_A,U_B) = \ketbra{0}{0}^{c}\otimes U_A\,U_B + \ketbra{1}{1}^{c}\otimes U_B\,U_A.
\eeq	 
\normalsize
When applied to the state $\ket{c}^c\otimes\ket{\psi}^t$ we have
\small
\begin{align}\label{switch2}
\begin{split}
\text{switch}(U_A,U_B)\ket{c}\otimes\ket{\psi} = &\ \alpha\,\ket{0}\otimes U_A\,U_B\,\ket{\psi} \\
&+ \beta\,\ket{1}\otimes U_B\,U_A\,\ket{\psi}.
\end{split}
\end{align}
\normalsize

Physically, the equation above can be understood as the control qubit determining which unitary is going to be applied first on the target state $\ket{\psi}$. If the control qubit is in the state $\ket{0}$ ($\alpha=1,\beta=0$), the unitary $U_B$ is performed before the unitary $U_A$. If the control qubit is in the state $\ket{1}$ ($\alpha=0,\beta=1$), the unitary $U_B$ is performed before the unitary $U_A$. In general, if the control qubit is in the state $\ket{c}=\alpha\ket{0}+\beta\ket{1}$, $\alpha\neq0,\beta\neq0$, the output state will be in a coherent superposition of two different causal orders.

In the process matrix formalism, ref.~\cite{araujo17} has analysed the quantum switch as a four-partite process matrix of which the first party has only an output space, which defines the input target and control states, one party inputs $U_A$, another party inputs $U_B$, and a final party obtains the output control and target states. Here we follow instead the steps of ref.~\cite{araujo15} to associate \textit{tripartite process matrices} to the quantum switch. This can be done by absorbing the input target and control states into the process matrix and setting the third party with output space of dimension equal to one. Hence, the quantum switch is described by a family of tripartite process matrices that is shared among three parties, Alice, Bob, and Charlie, for which Charlie is always in the future of Alice and Bob, and the causal order between Alice and Bob may or may not be well defined. A consequence of the fact that Charlie is last and his output space $\set{H}^{C_O}$ has $d=1$ is that the most general instrument Charlie can perform is a POVM.

Formally, we define a family of tripartite process matrices associated to the quantum switch according to:

\begin{definition}[Quantum switch processes]
Let $\ket{w(\psi,\alpha,\beta)}\in\set{H}^{A_IA_OB_IB_OC^t_IC^c_I}$ be
\small
\begin{align}
\begin{split}
\ket{w\,(\psi,\alpha,\beta)} =&\,\alpha\, \ket{\psi}^{A_I}\ket{\Phi^+}^{A_OB_I}\ket{\Phi^+}^{B_OC^t_I}\ket{0}^{C^c_I} \\  
&+  \beta\, \ket{\psi}^{B_I}\ket{\Phi^+}^{B_OA_I}\ket{\Phi^+}^{A_OC^t_I}\ket{1}^{C^c_I}, 
\end{split}
\end{align}
\normalsize
where $\ket{\psi}$ is a $d$-dimensional pure state, $\alpha,\beta$ are complex numbers such that $|\alpha|^2+|\beta|^2=1$, and $\ket{\Phi^+}=\sum_{i=1}^{d}\ket{ii}$ is a maximally entangled unnormalised bipartite qudit state, the Choi representation of the identity channel.

Then, the pure quantum switch processes are a family of tripartite process matrices given by
\beq\label{eqpureswitch}
W_\text{switch}(\psi,\alpha,\beta) = \ketbra{w\,(\psi,\alpha,\beta)}{w\,(\psi,\alpha,\beta)}.
\eeq
\end{definition}

When the control state is in a nontrivial superposition of $\ket{0}$ and $\ket{1}$, the quantum switch processes have been shown to have some interesting properties \cite{araujo15,oreshkov16}. They are causally nonseparable process matrices, meaning they cannot be expressed as a convex combination of tripartite process matrices with definite causal ordered between Alice and Bob, with Charlie in their common future. Namely, when $\alpha\neq0$ and $\beta\neq0$,
\small
\beq
W_\text{switch}(\psi,\alpha,\beta) \neq q W^{A\prec B\prec C} + (1-q) W^{B\prec A\prec C},
\eeq
\normalsize
for all $\ket{\psi}$, and all real numbers $0\leq q\leq 1$. The exact definitions of causally ordered and general tripartite process matrices can be found in \cref{app:tripartite}. 

However, when Charlie is traced out, the resulting bipartite process matrices shared by Alice and Bob are causally separable, namely,
\small
\beq
\tr_{C^c_IC^t_I}[W_\text{switch}(\psi,\alpha,\beta)] = q W^{A\prec B} + (1-q) W^{B\prec A},
\eeq
\normalsize
for all $\ket{\psi}$, $\alpha$, and $\beta$, {where $q =|\alpha^2|$}.

These causally nonseparable tripartite process matrices can be certified in a device-dependent way, since \cref{thmdd} also holds for tripartite process matrices. Yet, it has been shown in refs.~\cite{araujo15,oreshkov16} that the quantum switch processes cannot be certified in a device-independent way, as they always lead to causal behaviours for any choice of instruments of Alice, Bob, and Charlie. It remains to find out whether these processes can be certified in semi-device-independent scenarios.

For this purpose, we extend all concepts and methods from bipartite semi-device-independent certification. Much like in the bipartite case, we make different assumptions about the knowledge of the operations performed by each party. We call \textit{untrusted}~(U) a party that is treated in a device-independent way and \textit{trusted}~(T) a party that is treated in a device-dependent way, and we use the convention Alice Bob Charlie for denoting the parties. For example, a scenario TTU means Alice~=~T (device-dependent), Bob~=~T (device-dependent), and Charlie~=~U (device-independent). The four inequivalent semi-device-independent tripartite scenarios are TTU, TUU, UTT, and UUT. 

The core idea of the certification task remains the same. For a given process behaviour and given sets of instruments for the trusted parties, one needs to check whether it is possible that this behaviour comes from performing this instruments on a causally separable tripartite process matrix. We also derive the concepts of general, process, causally ordered, and causal assemblages for each scenario. In \cref{app:tripartite}, we provide all details and calculations, including for the tripartite device-dependent TTT and -independent UUU scenarios.

Our next theorem strengthens the previous result \cite{araujo15} that showed that the quantum switch processes cannot be certified in a full device-independent scenario, \textit{i.e.}, in the UUU scenario. We show that when the instruments of Alice and Bob are unknown, even if the measurements performed by Charlie are known, the quantum switch processes can never be proven to be causally nonseparable, for any pairs of sets of instruments for Alice and Bob. In other words, we prove that the quantum switch processes cannot be certified to be causally nonseparable in the UUT scenario. The previous result of the full device-independent scenario can now be seen as a particular case of the theorem we now present, whose proof is in \cref{app:switchUUTcausal}.

\begin{theorem}\label{thmswitch}
The quantum switch processes cannot be certified to be causally nonseparable on a semi-device-independent scenario where Alice and Bob are untrusted and Charlie is trusted (UUT).

Moreover, any tripartite process matrix $W\in\set{L}(\set{H}^{A_IA_OB_IB_OC_I})$, with Charlie in the future of Alice and Bob, that satisfies
\small
\begin{align}
\begin{split}
\tr[(A^{A_IA_O}_{a|x}&\otimes B^{B_IB_O}_{b|y}\otimes \1^{C_I})W^{A_IA_OB_IB_OC_I}] = \\
&q p^{A\prec B}(ab|xy) + (1-q) p^{B\prec A}(ab|xy),
\end{split}
\end{align}
\normalsize
for all $a,b,x,y$, where $0 \leq q \leq1$ is a real number, cannot be certified to be causally nonseparable in a UUT scenario.
\end{theorem}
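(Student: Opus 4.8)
The plan is to reduce the UUT certification problem to a statement about assemblages on Charlie's side and then show that, for any process matrix $W$ whose Alice--Bob marginal is causally separable, the resulting assemblage is always causal. Following the logic established in \cref{thmsdi} and \cref{lmmass} for the bipartite case, the central object will be the tripartite process assemblage obtained by letting Alice and Bob apply arbitrary instruments while leaving Charlie's space open: $w_{ab|xy} := \tr_{A_IA_OB_IB_O}\big[(A_{a|x}\otimes B_{b|y}\otimes\1^{C_I})\,W\big]$, an operator on $\set{L}(\set{H}^{C_I})$. Because Charlie's output space has dimension one, the tripartite analogue of \cref{lmmass} (to be established in \cref{app:tripartite}) should state that such an assemblage is \emph{causal} precisely when it can be reproduced by a causally separable tripartite process matrix, and hence cannot certify causal nonseparability. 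So the whole theorem reduces to proving that the stated hypothesis forces $\{w_{ab|xy}\}$ to be a causal assemblage for \emph{every} choice of $\{A_{a|x}\}$ and $\{B_{b|y}\}$.

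First I would unpack the hypothesis. The assumption is that for all $a,b,x,y$,
\begin{align}
\begin{split}
\tr\big[(A_{a|x}\otimes B_{b|y}&\otimes\1^{C_I})\,W\big] = \\
& q\,p^{A\prec B}(ab|xy) + (1-q)\,p^{B\prec A}(ab|xy),
\end{split}
\end{align}
i.e. tracing out Charlie yields a causal behaviour for every pair of instruments. By \cref{lmmbehaviour}, this is equivalent to saying that the bipartite marginal $\tr_{C_I}W$ is a causally separable bipartite process matrix. I would then exploit the explicit quantum realisation guaranteed by \cref{app:quantumrealization}: since the Alice--Bob statistics decompose with a single mixing weight $q$ \emph{independent of} $a,b,x,y$, the decomposition lifts coherently through Charlie's untouched space. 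Concretely, I expect that the causal mixture on the Alice--Bob marginal can be promoted to a convex decomposition of the \emph{assemblage} itself, $w_{ab|xy} = q\,w^{A\prec B}_{ab|xy} + (1-q)\,w^{B\prec A}_{ab|xy}$, where each term is a causally ordered assemblage in the sense of the tripartite extension of \cref{defdeforderass}. The qualifier ``with Charlie in the future of Alice and Bob'' guarantees that the only causal orders to be separated are $A\prec B\prec C$ and $B\prec A\prec C$, so the two-term convex structure of \cref{eqcausalA} carries over unchanged.

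The main obstacle will be the middle step: showing that a causal decomposition existing at the level of the \emph{behaviour} (after tracing out $C_I$) actually lifts to a causal decomposition of the \emph{operator} $w_{ab|xy}$ on Charlie's space, uniformly over all of Alice's and Bob's instruments. Tracing out $C_I$ discards information, so a priori the causal splitting of the behaviour need not respect Charlie's degrees of freedom. I would handle this by invoking the linearity and tomographic-completeness arguments of \cref{app:quantumrealization}: since the identity relation must hold for \emph{every} $\{A_{a|x}\},\{B_{b|y}\}$ with a \emph{fixed} $q$, one can choose tomographically complete instruments for Alice and Bob to reconstruct the full operator structure, forcing the process-matrix-level identity $W = q\,W^{A\prec B\prec C} + (1-q)\,W^{B\prec A\prec C}$ up to terms that vanish under any admissible instrument pairing. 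This is exactly where the hypothesis that $W^{T_A}$ (equivalently, via the symmetric treatment here, the full $W$'s marginal) is causally separable, together with the commutation structure of the trace-and-replace projectors $_{B_O}(\cdot)$ and $\sum_a(\cdot)$, must be shown to be compatible with Charlie's trivial output. Once the operator-level decomposition is secured, the conclusion that $W$ is indistinguishable from a causally separable process matrix in the UUT scenario follows immediately from the tripartite version of \cref{thmsdi}, and the specialisation to $W_\text{switch}(\psi,\alpha,\beta)$ follows from the already-quoted fact that $\tr_{C^c_IC^t_I}[W_\text{switch}] = q\,W^{A\prec B} + (1-q)\,W^{B\prec A}$ with $q=|\alpha|^2$.
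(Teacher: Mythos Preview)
Your proposal has a genuine gap at precisely the step you flag as the ``main obstacle''. You propose to lift the causal decomposition of the behaviour to a process-matrix-level identity $W = q\,W^{A\prec B\prec C} + (1-q)\,W^{B\prec A\prec C}$ (up to terms vanishing under instrument pairings). But this cannot work: the quantum switch is, by hypothesis, causally \emph{nonseparable}, so no such decomposition of $W$ exists. If tomographically complete instruments on Alice and Bob were used, the ``terms that vanish under any admissible instrument pairing'' would have to be identically zero on $\set{H}^{A_IA_OB_IB_O}$, and you would be asserting exactly the causal separability you are trying to disprove elsewhere. Relatedly, your appeal to \cref{lmmbehaviour} to conclude that $\tr_{C_I}W$ is causally separable overshoots: that lemma only says a causal behaviour can be reproduced by \emph{some} causally separable process matrix with \emph{some} instruments, not by $\tr_{C_I}W$ with the given ones (indeed \cref{prop1} shows causally nonseparable process matrices can yield only causal behaviours).

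The paper's proof avoids the process-matrix level entirely and works with a direct, almost trivial, decomposition of the assemblage. Writing $w_{ab|xy} = p(ab|xy)\,\rho_{ab|xy}$ with $\rho_{ab|xy}$ the normalised state on $\set{H}^{C_I}$, one simply distributes the causal splitting of the \emph{scalar} $p(ab|xy)$:
\[
w_{ab|xy} = q\,\big[p^{A\prec B}(ab|xy)\,\rho_{ab|xy}\big] + (1-q)\,\big[p^{B\prec A}(ab|xy)\,\rho_{ab|xy}\big],
\]
using the \emph{same} $\rho_{ab|xy}$ in both terms. This succeeds because the UUT causal-order conditions (\cref{app:tripartite}) constrain only $\tr\sum_b w^{A\prec B\prec C}_{ab|xy}$ and $\tr\sum_a w^{B\prec A\prec C}_{ab|xy}$, and since $\tr\rho_{ab|xy}=1$, these traces reduce to the marginal behaviours $p^{A\prec B}(a|x)$ and $p^{B\prec A}(b|y)$, which are indeed independent of $y$ and $x$ respectively. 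No lift to $W$, no tomographic-completeness argument, and no control over Charlie's operator content is needed; the whole point is that in the UUT scenario the causal constraints are blind to the state $\rho_{ab|xy}$ sitting on Charlie's space.
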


We now show that in the three remaining semi-device-independent scenarios, TTU, TUU, and UTT, the quantum switch processes \textit{can} be certified to be causally nonseparable, proving that they can demonstrate stronger noncausal properties than it was previously known.

For our remaining calculations we use the \textit{reduced quantum switch process}
\beq\label{eqswitch}
W_\text{red} \coloneqq \tr_{C^t_I} \left[W_\text{switch}\left(\ket{0},\frac{1}{\sqrt{2}},\frac{1}{\sqrt{2}}\right)\right].
\eeq
By choosing a reduced, mixed switch process in these scenarios, in which the space of the target state is traced out, we guarantee that the pure switch process version can also be certified without performing any measurements on the output target space.

We show that the reduced quantum switch process $W_\text{red}$ can be certified in the TTU, TUU, and UTT scenarios by providing sets of instruments for the device-independent (untrusted) parties that, when applied to $W_\text{red}$, generate TTU-, TUU-, and UTT-assemblages that are noncausal. We prove these assemblages to be noncausal by means of SDP.

To calculate the assemblages, we choose the following instruments for each untrusted party:
\footnotesize
\begin{align}\label{eqswitchins}
\begin{split}
A^{A_IA_O}_{0|0} &= B^{B_IB_O}_{0|0} = \ketbra{0}{0}\otimes\ketbra{0}{0}, \ \ \ \ \ \ M^{C^c_I}_{0|0} = \ketbra{+}{+}, \\
A^{A_IA_O}_{1|0} &= B^{B_IB_O}_{1|0} = \ketbra{1}{1}\otimes\ketbra{1}{1}, \ \ \ \ \ \ M^{C^c_I}_{1|0} = \ketbra{-}{-}, \\
A^{A_IA_O}_{0|1} &= B^{B_IB_O}_{0|1} = \ketbra{+}{+}\otimes\ketbra{+}{+}, \\ 
A^{A_IA_O}_{1|1} &= B^{B_IB_O}_{1|1} = \ketbra{-}{-}\otimes\ketbra{-}{-}, 
\end{split}
\end{align}
\normalsize
where $\ket{\pm} = \frac{1}{\sqrt{2}}(\ket{0}\pm\ket{1})$.

Let us illustrate with the TUU case. To construct the TUU-assemblage, we use the instruments from \cref{eqswitchins} for the untrusted parties, Bob and Charlie, according to
\small
\beq
w^{\text{switch}}_{bc|yz} = \tr_{BC} [(\1^A\otimes B_{b|y}\otimes M_{c|z})\, W_\text{red}] \ \ \ \forall \ b,c,y,z.
\eeq
\normalsize
Then, we show via SDP that
\small
\beq\label{equnfea}
w^{\text{switch}}_{bc|yz} \neq q w^{A\prec B\prec C}_{bc|yz} + (1-q) w^{B\prec A\prec C}_{bc|yz}, \ \ \ \forall \ b,c,y,z,
\eeq
\normalsize
proving that the switch process can be certified in this scenario. This is possible due to the SDP characterization of causal assemblages presented in \cref{app:tripartite} (for instance, see \cref{defcausalTUUass} for the TUU scenario). It follows analogously for the scenarios TTU and UTT.

To be able to compare and quantify the causal properties of the quantum switch across different certification scenarios, from full device-dependent to full device-independent, we introduce a robustness measure.
We start by defining the noisy version of the switch, a mixture of the reduced quantum switch process with a `trivial' process matrix (the normalised identity):
\beq\label{eqnoise}
W^\eta_\text{red}\coloneqq\eta\,\frac{\1}{d_I} + (1-\eta)W_\text{red},
\eeq
where $d_I=d_{A_I}d_{B_I}d_{C^c_I}$ is the dimension of the joint input spaces.

We then estimate the minimum value of $\eta$ for which $W^\eta_\text{red}$ has only causal properties in a given scenario. For example, in the device-dependent scenario, this is the minimum value of $\eta$ for which $W^\eta_\text{red}$ is causally separable. In a semi-device-dependent scenario, this is the minimum value of $\eta$ for which $W^\eta_\text{red}$ only generates causal assemblages. Finally, for the device-independent scenario, this is the minimum value of $\eta$ for which $W^\eta_\text{red}$ only generates causal behaviours.

It is immediate to see that in the UUU (device-independent) scenario, $\eta^*=0$, since the switch process always generates causal behaviours \cite{araujo15}. Equivalently for the UUT scenario, as a direct consequence of our \cref{thmswitch}. For the TTT scenario, we evaluate via SDP a value of $\eta^{*}=0.6118$ for which $W^{\eta^*}_\text{red}$ is causally separable and bellow which it is causally nonseparable.

In the remaining scenarios, TTU, TUU, and UTT, in order to calculate the exact value of $\eta^*$, one should consider every possible assemblage that could be generated from $W^\eta_\text{red}$, by optimizing over the set of instruments of the device-independent (untrusted) parties. Since we only consider the fixed instruments of \cref{eqswitchins}, we calculate lower bounds for $\eta^*$. Additionally, as detailed in \cref{app:tripartite}, in some of these scenarios our SDP characterization of the set of causal assemblages only constitutes an outer approximation of the set of assemblages that can be described by causal process matrices. Since with SDP we calculate the minimum $\eta$ for the assemblages to be inside this outer approximation, this again gives only a lower bound for $\eta^*$.

Let us illustrate again with the TUU case. We construct a noisy TUU-assemblage using the instruments from \cref{eqswitchins} and $W^\eta_\text{red}$ according to 
\footnotesize
\beq
w^{\eta,\text{switch}}_{bc|yz} = \tr_{BC} [(\1^A\otimes B_{b|y}\otimes M_{c|z})\, W^\eta_\text{red}] \ \ \ \forall \ b,c,y,z.
\eeq
\normalsize
We then calculate via SDP the minimum value of $\eta$ for which $w^{\eta,\text{switch}}_{bc|yz}$ is causal and below which it is noncausal. This value constitutes a lower bound for $\eta^*$. In the TUU scenario, we evaluate $\eta^*\geq0.1621$ for the reduced switch process. Analogously, we evaluate $\eta^*\geq0.1802$ in the UTT scenario. Finally, in the TTU scenario, $\eta^*\geq0.5687$. All these values are summarized in \cref{table:switch}. {All code used to obtain these results is freely available in an online repository} \cite{github}. We remark that the set of instruments required to obtain a robust certification of noncausal separability of the quantum switch is relatively simple and that Charlie can be restricted to perform a single POVM.
\\

The quantum switch has motivated several experiments that explore optical interferometers to certify indefinite causal order of process matrices \cite{procopio15,rubino17,goswami18-1,goswami18-2}. Up to now, all experimental results rely on, among other assumptions, complete knowledge of the instruments to certify of causal nonseparability, \textit{i.e.}, they are fully device-dependent. As shown in this section, one can also certify the causal nonseparability of the quantum switch without trusting some of the instruments and measurement apparatuses, in a semi-device-inpedendent way. We have used the machinery developed in this paper to analyse the experiments of refs.~\cite{rubino17} and \cite{goswami18-1} and concluded that, the instruments used in these experiments could allow us to make a stronger claim than what was reported. More precisely, the instruments used to certify that the quantum switch is causally nonseparable on refs.~\cite{rubino17} and \cite{goswami18-1} can lead to a semi-device-independent certification of the noncausal properties of the quantum switch in the TTU scenario. We discuss this results further in \cref{app:experimental}. 

\begin{table}
\begin{center}
{\renewcommand{\arraystretch}{1.5}
\begin{tabular}{| c | c |}
		\hline
		\multicolumn{2}{|c|}{\textbf{TTT}}  \\ 
		\multicolumn{2}{|c|}{$\eta^*=0.6118$} \\ 
		\multicolumn{2}{|c|}{\noncausalc{\sc{Noncausal}}} \\
		\hline                             
		\phantom{\hspace{0.9cm}}\textbf{UTT}\phantom{\hspace{0.9cm}}	&	\phantom{\hspace{0.9cm}}\textbf{TTU}\phantom{\hspace{0.9cm}} \\
		$\eta^*\geq0.1802$	&	$\eta^*\geq0.5687$ \\
		\noncausalc{\sc{Noncausal}}		& 	\noncausalc{\sc{Noncausal}} \\
		\hline 
		\textbf{UUT}			&	\textbf{TUU} \\
		$\eta^*=0$		&	$\eta^*\geq0.1621$ \\
		\causalc{\sc{Causal}}		&	\noncausalc{\sc{Noncausal}} \\
		\hline
		\multicolumn{2}{|c|}{\textbf{UUU}} \\
		\multicolumn{2}{|c|}{$\eta^*=0$} \\ 
		\multicolumn{2}{|c|}{\causalc{\sc{Causal}}} \\
		\hline
\end{tabular}
}
\end{center}
\caption{The quantum switch can be certified to be causally nonseparable in scenarios TTT, UTT, TTU, and TUU and cannot be certified in scenarios UUT and UUU, where T stands for trusted (device-dependent), U for untrusted (device-independent) and we have chosen the order Alice, Bob, and Charlie (for instance, TTU represents the scenario where Alice and Bob are treated in a device-dependent and Charlie in a device-independent way). The values and bounds for $\eta^*$ concern the critical value of the mixing parameter $0\leq\eta\leq 1$ in \cref{eqnoise} for which the quantum switch cannot be certified to causally nonseparable on each scenario, and below which, it can be certified. All non-zero values were obtained via SDP. {All code is publicly available in an online repository} \cite{github}.}
\label{table:switch}
\end{table}
\normalsize


\section*{Conclusions}\label{sec:conclusions}

We developed a framework for certifying indefinite causal order in the process matrix formalism under different sets of assumptions about the operations of the involved parties. In particular, we constructed a semi-device-independent approach to certification of causally nonseparable process matrices, and unified previously explored device-dependent and -independent approaches. 

We showed that the sets of causally nonseparable process matrices that can be certified in each scenario are different. More specifically, we proved that some bipartite process matrices can be certified to be causally nonseparable in a device-dependent way but not in a semi-device-independent way, and that some tripartite process matrices can be certified to be causally nonseparable in a semi-device-independent way but not in a device-independent way. 

In our framework, we formulated the problem of certifying causally nonseparable process matrices in the device-dependent, semi-device-independent, and device-independent scenarios in terms of semidefinite programming (SDP), implying they can be efficiently solved. 

We also showed that some noncausal behaviours and some noncausal assemblages cannot be obtained by process matrices according to the rules of quantum mechanics. For the device-independent case, we presented non-trivial bounds that relate the dimension of a process matrix with its maximal attainable violation of a causal game inequality. Concerning bipartite causal behaviours and causal assemblages, we explicitly showed how to obtain them from causally separable process matrices.

Concerning the quantum switch, we proved that it can produce noncausal correlations, that is, can be certified to be causally nonseparable, in three out of four semi-device-independent scenarios, and proved that its noncausal properties cannot be certified in the remaining one. 

Finally, we showed that previous experiments that claim to have certified causal nonseparability with the quantum switch under device-dependent assumptions \cite{rubino17,goswami18-1} could have, in principle, dropped some assumptions to achieve a stronger form of certification. Our results provide the theoretical basis for a future experimental demonstration of stronger noncausal phenomena that will rely on weaker assumptions than previous ones. 
\\ 

\noindent\textit{Acknowledgements.} We are grateful to Cyril Branciard for insightful discussions, particularly regarding \cref{thmswitch}, and to Costantino Budroni, for sharing a preliminary version of ref.~\cite{costantino}. 

This work was supported by the Austrian Science Fund (FWF) through the START project Y879-N27, the Excellence Initiative of the German Federal and State Governments (Grant ZUK 81), the Japan Society for the Promotion of Science (JSPS) by KAKENHI grant No. 16F16769, and the Q-LEAP project of the MEXT, Japan. CB acknowledges the support of the FQXi and the Austrian Science Fund (FWF) through the SFB project ``BeyondC'', the projects I-2526-N27 and I-2906. This publication was made possible through the support of a grant from the John Templeton Foundation. The opinions expressed in this publication are those of the authors and do not necessarily reflect the views of the John Templeton Foundation. 



\begin{small}

\end{small}


\onecolumngrid 
\appendix
\pagebreak

\section*{{Appendix}}

In these appendices we provide material to complement the main text. In \cref{app:robust_proof}, we present the proofs of \cref{thmprobtwoway,thmgenass}, regarding behaviours and assemblages that cannot be expressed in terms of process matrices. In \cref{app:quantumrealization}, we present the proofs of \cref{lmmbehaviour,lmmass}, regarding the realization of causal behaviours and assemblages in terms of causally separable process matrices. In \cref{app:assemblages}, we show how to obtain the characterization of general and causal assemblages presented in \cref{defbigenass,defdeforderass}. In \cref{app:counterexemples}, we present the proof of \cref{thmcounterexample}, showing a class of causally nonseparable process matrices that cannot be certified in a semi-device-independent way. In \cref{app:tripartite}, we re-derive all concepts and results of the main text concerning certification of bipartite process matrices for {the tripartite process matrices whose third party is always in the future of the other two.} We start by defining all notions of certification for {these} tripartite process matrices and then explore each scenario (TTT, UUU, TTU, TUU, UTT, and UUT) in detail. This appendix contains technical results not used nor mentioned in the main text. In \cref{app:switchUUTcausal}, we present the proof of \cref{thmswitch}, \textit{i.e.}, we prove that the quantum switch cannot be certified to be causally nonseparable in the UUT scenario. Finally, in \cref{app:experimental}, we present our theoretical analysis of the quantum switch experiments of refs.~\cite{rubino17} and \cite{goswami18-1}.


\section{Behaviours and assemblages unattainable by process matrices}\label{app:robust_proof}

In this appendix, we prove \cref{thmprobtwoway} and \cref{thmgenass}, concerning behaviours and assemblages which cannot be obtained by process matrices. We start by presenting and proving the following lemma, which will be necessary for the proof of \cref{thmprobtwoway}.

\begin{lemma}\label{lema}
Let $A \in \mathcal{L}(\mathcal H^1 \otimes \mathcal H^2)$ be a positive semidefinite operator. It holds that
\begin{equation}\label{eq:positiveA2}
	d\, \tr_{2}(A) \otimes \id - A \geq 0.
\end{equation}
where $d = \min\{d_1,d_2\}$ and $ \id$ denotes the identity operator acting on $\mathcal{H}^2$.
\end{lemma}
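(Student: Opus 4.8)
The plan is to reduce the statement to the case of a rank-one operator and then to settle that case using the Schmidt decomposition together with a single application of the Cauchy--Schwarz inequality. The reduction is immediate because the map $A \mapsto d\,\tr_2(A)\otimes\id - A$ is linear in $A$, and any positive semidefinite $A$ admits a spectral decomposition $A = \sum_k \ketbra{\psi_k}{\psi_k}$ with (unnormalised) eigenvectors $\ket{\psi_k}$. Since a sum of positive semidefinite operators is again positive semidefinite, it suffices to prove $d\,\tr_2(\ketbra{\psi}{\psi})\otimes\id - \ketbra{\psi}{\psi}\geq 0$ for an arbitrary vector $\ket{\psi}\in\mathcal{H}^1\otimes\mathcal{H}^2$.

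For the rank-one case I would first write the Schmidt decomposition $\ket{\psi} = \sum_{i=1}^{r}\sqrt{\lambda_i}\,\ket{e_i}\ket{f_i}$, where $\lambda_i>0$, the families $\{\ket{e_i}\}\subset\mathcal{H}^1$ and $\{\ket{f_i}\}\subset\mathcal{H}^2$ are orthonormal, and crucially the Schmidt rank obeys $r\leq\min\{d_1,d_2\}=d$. This gives at once $\tr_2(\ketbra{\psi}{\psi}) = \sum_{i}\lambda_i\ketbra{e_i}{e_i}$, and hence $d\,\tr_2(\ketbra{\psi}{\psi})\otimes\id = d\sum_i\lambda_i\,\ketbra{e_i}{e_i}\otimes\id$.

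I would then verify positivity by evaluating the quadratic form on an arbitrary $\ket{\phi}$, expanding it as $\ket{\phi}=\sum_i\ket{e_i}\ket{g_i}+\ket{\phi^\perp}$ along an orthonormal basis of $\mathcal{H}^1$ extending $\{\ket{e_i}\}$, with $\ket{g_i}\in\mathcal{H}^2$. The components in $\ket{\phi^\perp}$ (those orthogonal to $\mathrm{span}\{\ket{e_i}\}$ on the first system) are annihilated by both $\tr_2(\ketbra{\psi}{\psi})\otimes\id$ and $\ketbra{\psi}{\psi}$, so the claim reduces to the scalar inequality $d\sum_i\lambda_i\norm{g_i}^2 \geq \left|\sum_i\sqrt{\lambda_i}\braket{f_i}{g_i}\right|^2$. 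Applying Cauchy--Schwarz against the all-ones vector bounds the right-hand side by $r\sum_i\lambda_i\,|\braket{f_i}{g_i}|^2$, and since $|\braket{f_i}{g_i}|\leq\norm{g_i}$ (as $\norm{f_i}=1$) and $r\leq d$, this is at most $d\sum_i\lambda_i\norm{g_i}^2$, which closes the argument.

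The only genuine content is locating the factor $d$: it enters solely through the Schmidt-rank bound $r\leq\min\{d_1,d_2\}$, and the estimate is tight, the maximally entangled state saturating it. The step requiring a little care is the bookkeeping that shows the $\ket{\phi^\perp}$ contributions drop out of both terms, so that positivity on the full space follows from the inequality restricted to the $\ket{e_i}$-components; everything else is routine.
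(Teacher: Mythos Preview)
Your proof is correct. The reduction to rank one via the spectral decomposition and the use of the Schmidt decomposition (with the key bound $r\le d=\min\{d_1,d_2\}$) match the paper exactly. The only genuine difference is in how the rank-one inequality is established: you verify positivity of the quadratic form on an arbitrary test vector and close with Cauchy--Schwarz against the all-ones vector of length $r$, whereas the paper writes $\ket{\psi}=(D\otimes\id)\ket{\phi_d^+}$ with $D=\sqrt{d}\sum_i\sqrt{\lambda_i}\,\ketbra{e_i}{e_i}$ and observes the operator identity
\[
d\,\tr_2(\ketbra{\psi}{\psi})\otimes\id-\ketbra{\psi}{\psi}
=(D\otimes\id)\bigl(\id\otimes\id-\ketbra{\phi_d^+}{\phi_d^+}\bigr)(D\otimes\id),
\]
which is manifestly a congruence of a positive operator. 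The paper's factorisation is slicker and makes the role of the constant $d$ transparent at the operator level (it is exactly what is needed so that the normalised $\ketbra{\phi_d^+}{\phi_d^+}$ is bounded by the identity), while your argument is more elementary and self-contained, requiring no auxiliary operator constructions. Both are equally rigorous, and your remark that the maximally entangled state saturates the bound is precisely the content of the paper's identity.
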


\begin{proof}

Let us start with the case where $A$ is a rank-1 operator, so that $A = \proj{\lambda}$ for some unnormalized vector $\ket{\lambda}$. 
The Schmidt decomposition ensures that every bipartite vector $\ket{\lambda}\in(\mathcal H^1 \otimes \mathcal H^2)$  can be written as 
$\ket{\lambda}=\sum_{i=1}^d \lambda_i \ket{ii}$ for some real non-negative coefficients $\{\lambda_i\}_{i=1}^d$ and $d = \min\{d_1,d_2\}$.
We can then define a diagonal operator $D:\mathcal{H}^1\to\mathcal{H}^1$, $D:=\sqrt{d}\sum_{i=1}^d \lambda_i \ketbra{i}{i}$ such that $\ket{\lambda}=D\otimes \id \ket{\phi^+_d}$ where $\ket{\phi^+_d}:=\sum_{i=1}^d \frac{1}{\sqrt{d}} \ket{ii}$ is a $d$-dimensional maximally entangled state acting on $\mathcal{H}^1\otimes\mathcal{H}^2$.

Using the diagonal operator $D$, and $D^\dagger=D$, the partial trace $\tr_{2}(A)$ can be written as
\begin{align}
	\tr_{2}(A) &= \tr_2 \left( D\otimes \id \ketbra{\phi^+_d}{\phi^+_d}  D\otimes \id\right) \\
		   &=\frac{DD}{d}.
\end{align}
Direct calculation of the left-hand side of inequality \eqref{eq:positiveA2} leads to
\begin{align}
	d\, \tr_{2}(A) \otimes \id - A &=
	d\, \frac{DD\otimes \id}{d} - D \otimes \id \ketbra{\phi^+_d}{\phi^+_d}  D \otimes \id \\
		   &=D\otimes \id \left( \id \otimes \id - \ketbra{\phi^+_d}{\phi^+_d} \right) D \otimes \id \\
		   &\geq 0,
\end{align}
where the last inequality holds since $\id \otimes \id - \ketbra{\phi^+_d}{\phi^+_d}\geq0$ and $D\geq0$ .

To prove the general case, note that we can write $A$ as a sum of rank-1 operators, $A = \sum_i \proj{\lambda^{(i)}}$. Since for every $i$ it holds that
\begin{equation}
d\,\tr_{2}\left(\proj{\lambda^{(i)}}\right)\otimes \id - \proj{\lambda^{(i)}} \geq0 ,
\end{equation}
we also have that
\begin{equation}
d\,\tr_{2}\left(\sum_i \proj{\lambda^{(i)}}\right)\otimes \id - \sum_i \proj{\lambda^{(i)}} \geq0 ,
\end{equation}
and hence,
\begin{equation}\label{jumpstart}
d\,\tr_{2}(A)\otimes \id -  A \geq0.
\end{equation}
\end{proof}

\addtocounter{equation}{3}

\noindent Note: This document was amended to fix the above proof of lemma~\ref{lema}. To keep the equation numbering consistent with the originally published version (v3), we increment the equation counter at this point, jumping from eq.~\eqref{jumpstart} to eq.~\eqref{jumpend}.

\setcounter{theorem}{1}
\begin{theorem}
All process behaviours are valid behaviours, however, not all valid behaviours are process behaviours.

In particular, in the scenario where all parties have dichotomic inputs and outputs, any behaviour $\{p(ab|xy)\}$ such that 
$$\frac{1}{4}\sum_{a,b,x,y} \delta_{a,y}\delta_{b,x} \ p(ab|xy) > 1-\frac{1}{d+1}$$ 
is not a process behaviour for process matrices with total dimension $d_{A_I}d_{A_O}d_{B_I}d_{B_O} = d$.
\end{theorem}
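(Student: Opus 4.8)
The statement has two parts. That every process behaviour is a valid behaviour is immediate: for any process matrix $W$ and any instruments, positivity of $W$ together with positivity of the instrument elements makes each $p(ab|xy)$ non-negative, while the trace-preservation of $\sum_a A_{a|x}$ and $\sum_b B_{b|y}$, combined with $\tr W = d_{A_O}d_{B_O}$, forces $\sum_{ab}p(ab|xy)=1$ for every $x,y$; this is exactly the content of \cref{eq:processbehaviour} read through the validity conditions \eqref{eqW1}--\eqref{eqW5}. The substance is the quantitative converse, and the plan is to prove it by upper-bounding the two-way-signalling ``success probability'' $P_{2\mathrm{WS}} := \frac14\sum_{a,b,x,y}\delta_{a,y}\delta_{b,x}\,p(ab|xy)$ over all process matrices of total dimension $d$ and all instruments, and showing this maximum is at most $1-\tfrac{1}{d+1}$.

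First I would rewrite the figure of merit operationally. Since $\delta_{a,y}\delta_{b,x}$ selects the outcome $a=y,\ b=x$, for a process behaviour $P_{2\mathrm{WS}}=\tr[\Omega\,W]$ with the ``game operator'' $\Omega := \frac14\sum_{x,y} A_{y|x}\otimes B_{x|y}\in\set{L}(\set{H}^{A_IA_OB_IB_O})$ built from the (unknown) instrument elements. The optimisation then splits into an outer optimisation over instruments and an inner linear optimisation of $\tr[\Omega W]$ over the convex set of process matrices with $\tr W=d_{A_O}d_{B_O}$. The operational reading is that perfect two-way signalling forces $W$ to realise simultaneously a channel $A_O\to B_I$ transmitting $x$ and a channel $B_O\to A_I$ transmitting $y$ --- a causal loop whose ideal Choi operator is proportional to $\proj{\phi^+}^{A_OB_I}\otimes\proj{\phi^+}^{B_OA_I}$, which the validity conditions \eqref{eqW3}--\eqref{eqW5} forbid. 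The goal is thus to quantify by how much a valid $W$ can overlap such a loop.

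The key technical step is to convert this forbidden-loop intuition into a multiplicative penalty using \cref{lema}. Concretely, I would use the trace-and-replace identity \eqref{eqW5}, $W={}_{A_O}W+{}_{B_O}W-{}_{A_OB_O}W$, to decompose $\tr[\Omega W]$ into contributions of the ``replaced'' operators, and then apply the inequality $A\le d\,\tr_2(A)\otimes\1$ of \cref{lema} across the two cuts carrying the maximally-entangled (identity-channel) components. Because replacing an output space destroys precisely the entangled component that one direction of signalling relies on, each unit of weight that $W$ places on the success operator must be accompanied, through the constraints and the fixed normalisation $\tr W=d_{A_O}d_{B_O}$, by at least $\tfrac1d$ as much weight on configurations that \emph{fail} the game. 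This should yield the scalar inequality $\big(1+\tfrac1d\big)P_{2\mathrm{WS}}\le 1$, i.e. $P_{2\mathrm{WS}}\le \frac{d}{d+1}=1-\frac{1}{d+1}$, for every choice of instruments. Any behaviour with $P_{2\mathrm{WS}}>1-\tfrac1{d+1}$ is then not a process behaviour in total dimension $d$, and the exact two-way-signalling behaviour ($P_{2\mathrm{WS}}=1$) is excluded for every finite $d$, which also settles the qualitative claim that some valid behaviours are not process behaviours.

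I expect the main obstacle to be exactly the extraction of the constant $\tfrac{1}{d+1}$: the optimisation is bilinear in the instruments and $W$, so the delicate point is to argue that the optimal instruments reduce to a form (rank-one, tied to the maximally entangled states on the signalling cuts) for which the operator inequality of \cref{lema} is simultaneously tight on both cuts, and to track how the normalisation $\tr W=d_{A_O}d_{B_O}$ and the $\min\{d_1,d_2\}$ factor of \cref{lema} combine into the total dimension $d=d_{A_I}d_{A_O}d_{B_I}d_{B_O}$ in the final bound. Controlling the subtraction term $-{}_{A_OB_O}W$ of \eqref{eqW5} without spoiling positivity is the other place where care is required.
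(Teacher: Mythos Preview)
Your framework largely matches the paper's: define the game operator $\Omega=M:=\frac14\sum_{abxy}\delta_{ay}\delta_{bx}\,A_{a|x}\otimes B_{b|y}$, expand $\tr(WM)$ via the validity identity $W={}_{A_O}W+{}_{B_O}W-{}_{A_OB_O}W$, and use \cref{lema} to control the subtracted term. The paper indeed uses \cref{lema} exactly once, on the cut $A_IB_I\,|\,A_OB_O$, to get $W\le d\,{}_{A_OB_O}W$ and hence $-\tr({}_{A_OB_O}WM)\le -\tfrac1d\tr(WM)$, which produces the $\tfrac1d$ on the left of your target inequality $(1+\tfrac1d)P_{2\mathrm{WS}}\le 1$.

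The genuine gap is in how you propose to bound the two \emph{positive} terms $\tr({}_{A_O}WM)$ and $\tr({}_{B_O}WM)$. You describe applying \cref{lema} ``across the two cuts'' and then worry about reducing to rank-one instruments tied to maximally entangled states; neither is what is needed, and this route will not deliver the constant $\tfrac12$ per term without additional input. The paper's key observation --- which you do not mention --- is that ${}_{A_O}W$ and ${}_{B_O}W$ are themselves valid \emph{causally ordered} process matrices (they satisfy \eqref{eqW6} and its $B\!\prec\!A$ analogue), so the known causal bound for the GYNI game, $\tr(MW^{\text{sep}})\le\tfrac12$ from ref.~\cite{branciard16}, applies directly and uniformly in the instruments. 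This immediately gives $\tr({}_{A_O}WM)+\tr({}_{B_O}WM)\le 1$, and combined with the \cref{lema} estimate on the third term yields $\tr(WM)\le 1-\tfrac1d\tr(WM)$ with no optimisation over instruments required. Once you invoke the GYNI causal bound, your anticipated obstacles (bilinearity, choosing optimal instruments, tracking the $\min\{d_1,d_2\}$ factor) evaporate: the argument is instrument-independent and the only use of \cref{lema} is the single ${}_{A_OB_O}$ step.
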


\begin{proof}
It follows by definition that all process behaviours are valid behaviours, we now show that there exist valid behaviours that are not process behaviours for any finite dimension. Assume that there exists a process matrix $W$ with total dimension $d_{A_I}d_{A_O}d_{B_I}d_{B_O} = d$ and instruments $\{A_{a|x}\},\{B_{b|y}\}$ such that
\beq\label{jumpend}
p(ab|xy) = \tr\big[W(A_{a|x}\otimes B_{b|y})\big]
\eeq
and
\beq
p^\text{succ}_\text{GYNI} = \frac14\sum_{abxy} \delta_{ay}\delta_{bx} p(ab|xy) > 1-\frac1{d+1},
\eeq
where $\delta_{ij}$ is the Kronecker's delta function and  $p^\text{succ}_\text{GYNI}$ is the probability of success  achieved by this behaviour in the GYNI causal game (defined in ref.~\cite{branciard16}). 

Let then we define the GYNI operator
\beq
M := \frac14\sum_{abxy} \delta_{ay}\delta_{bx} A_{a|x}\otimes B_{b|y},
\eeq
so that 
\beq
p^\text{succ}_\text{GYNI} = \tr(WM).
\eeq
Since $W$ is a valid process matrix, it admits the decomposition
\beq
W = {}_{A_O}W + {}_{B_O}W - {}_{A_OB_O}W,
\eeq
 and by linearity we have 
\beq
p^\text{succ}_\text{GYNI} = \tr({}_{A_O}WM)+\tr({}_{B_O}WM)-\tr({}_{A_OB_O}WM).
\eeq

Since ${}_{A_O}W$ and ${}_{B_O}W$ are causally ordered process matrices, they cannot violate the GYNI causal inequality and respect \cite{branciard16}
\beq
	\tr(M W^{\text{sep}})\le \frac{1}{2}.
\eeq
Also, it follows from \cref{lema} that
\beq
	-\tr({}_{A_OB_O}WM) \leq  -\frac1d\tr(WM).
\eeq
We than have 
\begin{align}
p^\text{succ}_\text{GYNI} = \tr(WM) &= \tr({}_{A_O}WM)+\tr({}_{B_O}WM)-\tr({}_{A_OB_O}WM) \\
						       &\leq \frac{1}{2}  \quad \quad\quad\quad\;	+\frac{1}{2} \quad \quad\quad\quad\; - \frac1d\tr(WM),
\end{align}
which implies 
\beq
p^\text{succ}_\text{GYNI} = \tr(WM) \leq 1 - \frac{1}{d+1},
\eeq
and bounds the maximal attainable value for process behaviours on the GYNI causal game, which can attain $p^\text{succ}_\text{GYNI}=1 $ for general behaviours.
	
Note that this robust bound only holds for finite dimensional process matrices. In ref. \cite{costantino}, the authors have shown that a process matrix $W$ cannot attain the \textit{exact} maximal success probability in the GYNI causal game \textit{i.e.,} $p^\text{succ}_\text{GYNI} = \tr(WM)=1$, even with infinite dimension.
\end{proof}

\setcounter{theorem}{3}
\begin{theorem}
All process assemblages are valid assemblages, however, not all valid assemblages are process assemblages.

In particular, in the scenario where Alice has dichotomic inputs and outputs, the general assemblage $\{w_{a|x}\}$ given by $w_{a|x}=\ketbra{x}{x}\otimes\ketbra{a}{a}$ is not a process assemblage.
\end{theorem}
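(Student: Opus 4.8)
The inclusion of process assemblages among valid assemblages is the easy direction and I would dispose of it first: by \cref{eqb} every process assemblage yields valid behaviours when contracted with Bob's instruments, so it satisfies the defining constraints of \cref{defbigenass} and is a general assemblage. The entire content of the theorem lies in the converse, and my plan is to prove it by exhibiting the stated counterexample $w_{a|x}=\ketbra{x}{x}^{B_I}\otimes\ketbra{a}{a}^{B_O}$ and reducing its realisability as a process assemblage to the impossibility already established in \cref{thmprobtwoway}.

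First I would check that $\{w_{a|x}\}$ is genuinely a valid assemblage by verifying the three conditions of \cref{defbigenass} by direct computation. Positivity is immediate since each $w_{a|x}$ is a product of projectors. Summing over Alice's outcome gives $\sum_a w_{a|x}=\ketbra{x}{x}^{B_I}\otimes\1^{B_O}$, whose trace equals $d_{B_O}$, settling the normalisation condition. For the last condition I would compute $\tr_{B_O}\sum_a w_{a|x}=d_{B_O}\ketbra{x}{x}$, so that ${}_{B_O}\sum_a w_{a|x}=d_{B_O}\ketbra{x}{x}\otimes\frac{\1^{B_O}}{d_{B_O}}=\ketbra{x}{x}^{B_I}\otimes\1^{B_O}=\sum_a w_{a|x}$, as required.

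The decisive step is to assume, toward a contradiction, that $\{w_{a|x}\}$ is a process assemblage, and then show that this assemblage can be ``read out'' into the forbidden deterministic two-way signalling behaviour. The plan is to give Bob the instruments $B_{b|y}:=\ketbra{b}{b}^{B_I}\otimes\ketbra{y}{y}^{B_O}$, which I would confirm form a valid set: each is positive, and $\tr_{B_O}\sum_b B_{b|y}=\tr_{B_O}(\1^{B_I}\otimes\ketbra{y}{y}^{B_O})=\1^{B_I}$. A one-line calculation then gives $\tr(B_{b|y}\,w_{a|x})=\delta_{b,x}\delta_{a,y}$, precisely the two-way signalling behaviour $p^{2\text{WS}}(ab|xy)=\delta_{a,y}\delta_{b,x}$. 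I expect locating this particular choice of Bob's instruments to be the only part requiring genuine insight; everything else is routine verification.

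To close the argument I would invoke \cref{eqc}: if $\{w_{a|x}\}$ arose from a process matrix $W$ and instruments $\{A_{a|x}\}$, then $\tr(B_{b|y}\,w_{a|x})$ would equal $\tr[(A_{a|x}\otimes B_{b|y})W]$ and hence be a process behaviour. But this behaviour attains $\frac14\sum_{a,b,x,y}\delta_{a,y}\delta_{b,x}\,p(ab|xy)=1>1-\tfrac{1}{d+1}$ for every finite total dimension $d$, which by \cref{thmprobtwoway} is impossible for a process behaviour. Since we work throughout in finite dimensions, this exhausts all cases and yields the desired contradiction, so $\{w_{a|x}\}$ cannot be a process assemblage.
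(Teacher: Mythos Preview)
Your proposal is correct and follows essentially the same approach as the paper's own proof: verify that $\{w_{a|x}\}$ satisfies the general-assemblage conditions, assume it is a process assemblage, feed in Bob's instruments $B_{b|y}=\ketbra{b}{b}\otimes\ketbra{y}{y}$ to produce the deterministic two-way signalling behaviour $\delta_{b,x}\delta_{a,y}$, and invoke \cref{thmprobtwoway} for the contradiction. Your write-up is in fact slightly more thorough than the paper's in that you explicitly verify all three conditions of \cref{defbigenass} and check that Bob's instruments are valid.
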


\begin{proof}
First we see that $w_{a|x}=\ketbra{x}{x}\otimes\ketbra{a}{a}$ is a valid assemblage, since $\ketbra{x}{x}\otimes\ketbra{a}{a}\geq 0$ for all $a,x$, and 
\beq
\sum_a w_{a|x} = \ketbra{x}{x} \otimes \1 \ \ \ \forall \ x.
\eeq

Assume then that $\{w_{a|x}\}$ is a process assemblage. Then, there must exist a process matrix $W$ and instruments $\{A_{a|x}\}$ such that
\beq
w_{a|x} = \tr_{A_IA_O}\De{ \de{A_{a|x} \otimes \id}W},
\eeq
and that for any set of instruments $\{B_{b|y}\}$,
\beq
p(ab|xy) = \tr(B_{b|y}w_{a|x}) = \tr\De{\de{A_{a|x} \otimes B_{b|y}}W }.
\eeq

Consider now the valid set of instruments given by
\beq
B_{b|y} =  \ketbra{b}{b} \otimes \ketbra{y}{y}.
\eeq
Behaviours obtained by this set of instruments and the assemblage $\{w_{a|x}\}$ would be
\begin{align}
p(ab|xy) &= \tr(B_{b|y}w_{a|x}) \\
&= \delta_{b,x}\delta_{a,y} \ \ \ \forall \ a,b,x,y,
\end{align}
which according to \cref{thmprobtwoway} cannot be obtained by any finite dimensional process matrix, contradicting the hypothesis that $\{w_{a|x}\}$ is a process assemblage.
\end{proof}


\section{Behaviours and assemblages attainable by causally separable process matrices}\label{app:quantumrealization}

In this appendix we prove \cref{lmmbehaviour} and \cref{lmmass}, which we reestate for the convenience of the reader.
\setcounter{lemma}{0}

\begin{lemma}\label{lmmbehaviour2}
A general behaviour is causal if and only if it is a process behaviour that can be obtained by a causally separable process matrix.
\end{lemma}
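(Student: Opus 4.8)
The plan is to prove the two implications separately. The reverse direction is immediate, while the forward direction requires an explicit quantum realization and is where all the work lies.

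First I would dispatch the ($\Leftarrow$) direction. Suppose $\{p(ab|xy)\}$ is a process behaviour arising from a causally separable process matrix $W^{\text{sep}} = q\,W^{A\prec B} + (1-q)\,W^{B\prec A}$ via \eqref{eq:processbehaviour}. Since $\tr[(A_{a|x}\otimes B_{b|y})\,\cdot\,]$ is linear in its argument, $p(ab|xy) = q\,\tr[(A_{a|x}\otimes B_{b|y})W^{A\prec B}] + (1-q)\,\tr[(A_{a|x}\otimes B_{b|y})W^{B\prec A}]$. By the defining property \eqref{eqdefcausalW}, the first term is a behaviour causally ordered from Alice to Bob and the second from Bob to Alice, so $p$ has the form \eqref{eqcausalP} and is therefore causal.

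The substance is the ($\Rightarrow$) direction: given a causal behaviour, I must exhibit instruments $\{A_{a|x}\}$, $\{B_{b|y}\}$ and a causally separable process matrix that reproduce it. My strategy is to use purely classical (diagonal) instruments and a classical process, so that every operator is diagonal in a fixed product basis and all constraints reduce to statements about classical stochastic maps. As a building block I would realize a single causally ordered behaviour: for $\{p^{A\prec B}(ab|xy)\}$, set $p_A(a|x):=\sum_b p^{A\prec B}(ab|xy)$ (independent of $y$) and the conditional $p_B(b|a,x,y)$. Alice, acting first, prepares on $A_O$ a classical flag encoding $(a,x)$ with weight $p_A(a|x)$; the process wires $A_O$ into $B_I$ by an identity channel, whose Choi operator (tensored with $\id^{B_O}$ and a fixed state on $A_I$) satisfies the causal-order constraint \eqref{eqW6}; Bob reads $(a,x)$ from $B_I$ and outputs $b$ with probability $p_B(b|a,x,y)$. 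I would then check that these $\{A_{a|x}\}$, $\{B_{b|y}\}$ meet the instrument conditions and that the wiring is a valid causally ordered process matrix.

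The key difficulty is the combining step, since a naive convex mixture does not suffice: a single \emph{fixed} pair of instruments must reproduce $p^{A\prec B}$ on the $A$-before-$B$ wiring \emph{and} $p^{B\prec A}$ on the $B$-before-$A$ wiring, as the instruments cannot know which order is active. To handle this I would give each party both an incoming and an outgoing classical register and design each instrument as one classical channel that \emph{branches on its input}: on reading a distinguished ``null'' symbol the party acts as the first agent (ignoring its input, sampling its marginal, and writing a message downstream), and on reading an actual message it acts as the second agent (responding with the appropriate conditional). The process matrix is then $W^{\text{sep}} = q\,W^{A\prec B} + (1-q)\,W^{B\prec A}$, where $W^{A\prec B}$ feeds ``null'' to Alice and routes $A_O\to B_I$, and $W^{B\prec A}$ feeds ``null'' to Bob and routes $B_O\to A_I$; being a convex combination of causally ordered process matrices it is causally separable by \eqref{eqcausalW}, and by construction it yields the target behaviour. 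I expect the main obstacle to be the bookkeeping of the enlarged message alphabets (including the null symbol) so that each instrument remains trace-preserving upon summing over outcomes and each wiring genuinely satisfies the trace-and-replace causal-order constraint together with the normalization $\tr W = d_{A_O}d_{B_O}$; this is routine once the classical channels are fixed, but it is precisely where every validity condition must be verified.
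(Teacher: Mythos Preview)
Your proposal is correct and follows essentially the same route as the paper: the reverse direction via linearity and \eqref{eqdefcausalW}, the forward direction by classically encoding $(a,x)$ on $A_O$, wiring $A_O\to B_I$ through an identity-channel process matrix, and having Bob read off and respond with the conditional. The only cosmetic difference is in the combining step: where you have each party branch on a distinguished ``null'' symbol to infer whether it is first or second, the paper instead enlarges each party's input space by an explicit one-bit ancilla $\set{H}^{A_I'}\!,\set{H}^{B_I'}$ and writes $W^{\text{sep}}=q\,W^{A\prec B}\otimes\ketbra{00}{00}^{A_I'B_I'}+(1-q)\,W^{B\prec A}\otimes\ketbra{11}{11}^{A_I'B_I'}$, so the flag is carried separately rather than embedded in the message alphabet---functionally the same idea.
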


\begin{proof}
To prove the if part we show that causally separable process matrices can only give rise to causal behaviours, regardless of the instruments performed on them. 

We start by showing that a behaviour $\{p(ab|xy)\}$ that comes from acting with any sets of instruments $\{A_{a|x}\}$ and $\{B_{b|y}\}$ on a process matrix that is causally ordered from Alice to Bob $W^{A\prec B}$ is also causally ordered from Alice to Bob.
Given $\{p(ab|xy)\}$ that arises from $W^{A\prec B}$ according to
\begin{align}
p(ab|xy)&= \tr\left[(A_{a|x}\otimes B_{b|y})\ W^{A\prec B}\right] \\
&= \tr\Big[(A^{A_IA_O}_{a|x}\otimes B^{B_IB_O}_{b|y})\ W^{A_IA_OB_I}\otimes\frac{\1^{B_O}}{d_{B_O}}\Big],
\end{align}
one can check that Alice's marginal probability distributions,
\begin{align}
\sum_b p(ab|xy) &= \sum_b \tr\Big[(A^{A_IA_O}_{a|x}\otimes B^{B_IB_O}_{b|y})\ W^{A_IA_OB_I}\otimes\frac{\1^{B_O}}{d_{B_O}}\Big] \\
&=  \tr\Big[(A^{A_IA_O}_{a|x}\otimes \1^{B_I})(W^{A_IA_OB_I})\ \tr_{B_O}(\1^{A_IA_O}\otimes\frac{1}{d_{B_O}}\sum_bB^{B_IB_O}_{b|y})\Big] \\
&= \frac{1}{d_{B_O}} \tr\left[(A^{A_IA_O}_{a|x}\otimes \1^{B_I})W^{A_IA_OB_I}\right],
\end{align}
are independent of $y$ for all $a,x$. Hence, $p(ab|xy)=p^{A\prec B}(ab|xy)$ is causally ordered from Alice to Bob.
Equivalently, $W^{B\prec A}$ implies $\{p^{B\prec A}(ab|xy)\}$. 

Hence, behaviours that come from a causally separable process matrix according to
\begin{align}
p(ab|xy)&=\tr\left[(A_{a|x}\otimes B_{b|y})\ W^\text{causal}\right] \\
&= \tr\left[(A_{a|x}\otimes B_{b|y})\ (q W^{A\prec B} + (1-q)W^{B\prec A}) \right] \\
&= q p^{A\prec B}(ab|xy) + (1-q) p^{B\prec A}(ab|xy)
\end{align}
are causal by definition. In other words, causally separable process matrices can only generate causal behaviours.

To prove the only if part we show that all behaviours that are causal can be reproduced by performing some instruments on some causally separable process matrix. 

Given a causal behaviour $\{p^\text{causal}(ab|xy)\}$, according to Bayes' rule and the non-signalling principle, one can decompose it in the following form:
\begin{align}
p^\text{causal}(ab|xy) =& q \, p^{A\prec B}(ab|xy) + (1-q)p^{B\prec A}(ab|xy) \\
=& q \, p_A^{A\prec B}(a|x)p_B^{A\prec B}(b|axy) + (1-q)p_B^{B\prec A}(b|y)p_A^{B\prec A}(a|bxy),
\end{align}
so that $\{p_A^{A\prec B}(a|x)\}$, $\{p_B^{A\prec B}(b|axy)\}$, $\{p_B^{B\prec A}(b|y)\}$, $\{p_A^{B\prec A}(a|bxy)\}$, and $q$ are given quantities.  

First, for the contribution that is causally ordered from Alice to Bob, $p^{A\prec B}(ab|xy)$, we construct instruments $\{A^{A\prec B}_{a|x}\}$ and $\{B^{A\prec B}_{b|y}\}$ according to
\begin{align}
A^{A\prec B}_{a|x} &= \1^{A_I}\otimes p_A^{A\prec B}(a|x) \ketbra{ax}{ax}^{A_O} \label{ax} \\ 
B^{A\prec B}_{b|y} &= \sum_{a,x}p_B^{A\prec B}(b|axy)\ketbra{ax}{ax}^{B_I}\otimes\frac{\1^{B_O}}{d_{B_O}}, \	\label{by} 
\end{align}
and process matrix
\beq
W^{A\prec B} = \frac{\1^{A_I}}{d_{A_I}} \otimes \ketbra{\Phi^+}{\Phi^+}^{A_OB_I} \otimes \1^{B_O},
\eeq
where $\ketbra{\Phi^+}{\Phi^+}$ is the Choi operator of the identity channel, with $\ket{\Phi^+}=\sum_{i=1}^{d}\ket{ii}$.
It can be checked that
\begin{align}
\tr\left[(A^{A\prec B}_{a|x}\otimes B^{A\prec B}_{b|y})\ W^{A\prec B}\right]  &= p_A^{A\prec B}(a|x)p_B^{A\prec B}(b|axy) \\
&=p^{A\prec B}(ab|xy),
\end{align}
for every $a,b,x,y$. Hence, such construction can recover any behaviour that is causally ordered from Alice to Bob. Analogously, for $\{p^{B\prec A}(ab|xy)\}$, one has
\begin{align}
A^{B\prec A}_{a|x} &= \sum_{b,y}p_A^{B\prec A}(a|bxy)\ketbra{by}{by}^{A_I}\otimes\frac{\1^{A_O}}{d_{A_O}}, \\
B^{B\prec A}_{b|y} &= \1^{B_I}\otimes p_B^{B\prec A}(b|y) \ketbra{by}{by}^{B_O} \\
W^{B\prec A} &=  \frac{\1^{B_I}}{d_{B_I}} \otimes \ketbra{\Phi^+}{\Phi^+}^{B_OA_I} \otimes \1^{A_O}.
\end{align}

Now, for causal behaviours, which are convex combinations of causally ordered behaviours, we construct instruments and process matrices in such a way that each causal order acts on a complementary subspace. That is, we define the valid process matrix
\beq
W^\text{causal} = q W^{A\prec B} \otimes \ketbra{00}{00}^{A^{'}_IB^{'}_I} + (1-q) W^{B\prec A} \otimes \ketbra{11}{11}^{A^{'}_IB^{'}_I},
\eeq
by extending Alice's and Bob's input spaces according to $\set{H}^{A_I(B_I)}\to \set{H}^{A_I(B_I)}\otimes\set{H}^{A^{'}_I(B^{'}_I)} $, and we define the valid instruments
\begin{align}
A_{a|x} &= A^{A\prec B}_{a|x} \otimes \ketbra{0}{0}^{A^{'}_I} + A^{B\prec A}_{a|x} \otimes \ketbra{1}{1}^{A^{'}_I} \\
B_{b|y} &= B^{A\prec B}_{b|y} \otimes \ketbra{0}{0}^{B^{'}_I} + B^{B\prec A}_{b|y} \otimes \ketbra{1}{1}^{B^{'}_I}.
\end{align}

In this way, it is easy to check that orthogonal terms cancel out and we arrive at
\begin{align}
\tr\left[(A_{a|x}\otimes B_{b|y})\ W^\text{causal}\right]  &= qp^{A\prec B}(ab|xy) + (1-q)p^{B\prec A}(ab|xy) \\
&=p^\text{causal}(ab|xy),
\end{align}
recovering any causal behaviour.
\end{proof}

\begin{lemma}\label{lmmass2}
A general assemblage is causal if and only if it is a process assemblage that can be obtained from a causally separable process matrix.
\end{lemma}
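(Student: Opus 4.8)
The plan is to mirror the proof of \cref{lmmbehaviour2}, replacing behaviours by assemblages and the pair of instruments by Alice's instruments alone, and to split the statement into its two implications.

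For the ``if'' direction I would show that a causally separable process matrix can only generate causal assemblages, whatever Alice does. A process matrix causally ordered from Alice to Bob satisfies $W^{A\prec B}={}_{B_O}W^{A\prec B}$ by \cref{eqW6}, hence factorizes as $W^{A_IA_OB_I}\otimes\frac{\1^{B_O}}{d_{B_O}}$; plugging this into $w_{a|x}=\tr_A[(A_{a|x}\otimes\1^B)W^{A\prec B}]$ gives an operator of the form $\sigma_{a|x}^{B_I}\otimes\frac{\1^{B_O}}{d_{B_O}}$, i.e. one obeying $w_{a|x}={}_{B_O}w_{a|x}$, which is exactly causal order $A\prec B$ for assemblages (\cref{defdeforderass}). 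For $W^{B\prec A}={}_{A_O}W^{B\prec A}$ I would instead compute $\sum_a w_{a|x}$ and use the trace-preserving condition $\tr_{A_O}\sum_a A_{a|x}=\1^{A_I}$ of an instrument to obtain $\sum_a w_{a|x}=\frac{1}{d_{A_O}}\tr_{A_I}(W^{A_IB_IB_O})$, which is independent of $x$ -- precisely the $B\prec A$ condition. Linearity of $W\mapsto w_{a|x}$ then sends $W^{\text{sep}}=qW^{A\prec B}+(1-q)W^{B\prec A}$ to $q\,w^{A\prec B}_{a|x}+(1-q)\,w^{B\prec A}_{a|x}$, which is causal by \cref{defcausalass}.

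For the ``only if'' direction the engine is an explicit quantum realization of each \emph{causally ordered} assemblage by instruments and a causally ordered process matrix. Given $\{w^{A\prec B}_{a|x}\}$, the constraint $w^{A\prec B}_{a|x}={}_{B_O}w^{A\prec B}_{a|x}$ lets me write it as $\sigma_{a|x}\otimes\frac{\1^{B_O}}{d_{B_O}}$ with $\sigma_{a|x}\geq0$; I would reuse the identity-channel process $W^{A\prec B}=\frac{\1^{A_I}}{d_{A_I}}\otimes\ketbra{\Phi^+}{\Phi^+}^{A_OB_I}\otimes\1^{B_O}$ of \cref{lmmbehaviour2} together with instruments $A_{a|x}=\1^{A_I}\otimes\frac{1}{d_{B_O}}\,\sigma_{a|x}^{T}$ that prepare the transpose of $\sigma_{a|x}$ on $A_O$, and the transpose/maximally-entangled-state identity confirms $\tr_A[(A_{a|x}\otimes\1^B)W^{A\prec B}]=w^{A\prec B}_{a|x}$, with the normalization $\tr\sum_a\sigma_{a|x}=d_{B_O}$ ensuring the instruments are trace-preserving. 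Once both ordered cases are realized, I would glue them exactly as in \cref{lmmbehaviour2} but with a control flag on Alice's side only (her spaces being the only free ones): enlarging $A_I\to A_I\otimes A'_I$ and setting $W^{\text{sep}}=qW^{A\prec B}\otimes\ketbra{0}{0}^{A'_I}+(1-q)W^{B\prec A}\otimes\ketbra{1}{1}^{A'_I}$ and $A_{a|x}=A^{A\prec B}_{a|x}\otimes\ketbra{0}{0}^{A'_I}+A^{B\prec A}_{a|x}\otimes\ketbra{1}{1}^{A'_I}$, so that the orthogonal flags kill the cross terms and reproduce $q\,w^{A\prec B}_{a|x}+(1-q)\,w^{B\prec A}_{a|x}$.

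The hard part is the realization of a single $B\prec A$ assemblage. Unlike the $A\prec B$ case, whose elements factorize on $B_O$, a $B\prec A$ assemblage need only satisfy that $\sum_a w^{B\prec A}_{a|x}$ is $x$-independent, so its individual elements may carry correlations between $B_I$ and $B_O$ and a non-maximally-mixed marginal on $B_I$; the naive product identity-channel process cannot produce these. Here the $x$-independence of the reduced state is exactly the no-signalling condition that, by the steering (GHJW) theorem invoked in the main text, guarantees a realization in which Alice measures, through the process, a system correlated with $B_IB_O$ and her outcome $a$ steers $B_IB_O$ to $w^{B\prec A}_{a|x}$. The genuine technical obstacle is to package this steering realization as a \emph{comb with memory}, i.e. a state on $B_I$ and an ancilla fed together with $B_O$ into a channel producing Alice's input $A_I$, and to verify that the resulting operator is a legitimate causally ordered process matrix (positive, correctly normalized, and satisfying the $B\prec A$ process constraints). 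With that verified, the convex-combination step above is routine and copies \cref{lmmbehaviour2} almost verbatim.
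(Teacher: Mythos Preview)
Your proposal is correct and follows essentially the same route as the paper's proof: the ``if'' direction and the $A\prec B$ realization match exactly, and your convex-combination gluing with a flag on Alice's side alone is the right adaptation. For the $B\prec A$ step you correctly identify as the technical obstacle, the paper carries out precisely the GHJW-type construction you anticipate: writing $\sum_a w^{B\prec A}_{a|x}=\rho^{B_I}\otimes\1^{B_O}$, purifying $\rho=\sum_i\mu_i\ketbra{i}{i}$ via $\ket{\psi}=\sum_i\sqrt{\mu_i}\ket{ii}$, and setting
\[
W^{B\prec A}=\ketbra{\psi}{\psi}^{B_IA'_I}\otimes\ketbra{\Phi^+}{\Phi^+}^{B_OA''_I}\otimes\1^{A_O},\qquad
A^{B\prec A}_{a|x}=\bigl(\rho^{-\frac12}\otimes\1\bigr)\,(w^{B\prec A}_{a|x})^{T}\,\bigl(\rho^{-\frac12}\otimes\1\bigr)\otimes\tfrac{\1^{A_O}}{d_{A_O}},
\]
with $\set{H}^{A_I}=\set{H}^{A'_I}\otimes\set{H}^{A''_I}$ and the inverse taken on the support of $\rho$.
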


\begin{proof}
To prove the if part we show that a causally separable process matrix can only give rise to a causal assemblage regardless of Alice's sets of instruments.
First we show that an assemblage that is generated by a process matrix that is ordered from Alice to Bob is also ordered from Alice to Bob.
\begin{align}
_{B_O}w_{a|x} &= _{B_O} \tr_{A}\left[(A_{a|x}\otimes\1)W^{A\prec B}\right] \\
   					&= \tr_{A}\left[(A_{a|x}\otimes\1)_{B_O}W^{A\prec B}\right] \\
					&= \tr_{A}\left[(A_{a|x}\otimes\1)W^{A\prec B}\right] \\
					&= w_{a|x},
\end{align}
since $W^{A\prec B}=_{B_O}W^{A\prec B}$. Hence, $\{w_{a|x}\}=\{w^{A\prec B}_{a|x}\}$ is causally ordered from Alice to Bob.

Next, we show that an assemblage that is generated by a process matrix that is ordered from Bob to Alice is also ordered from Bob to Alice.
\begin{align}
\sum_a w_{a|x} &= \sum_a \tr_{A}\left[(A_{a|x}\otimes\1)W^{B\prec A}\right] \\
   					&= \tr_{A}\Big[(\sum_aA_{a|x}\otimes\1) _{A_O}W^{B\prec A}\Big] \\
					&= \tr_{A_I}\Big[W^{B_IB_OA_I}(\tr_{A_O}\sum_aA_{a|x}\otimes\1^{B_IB_O})\Big] \\
					&= \tr_{A_I}\left[W^{B_IB_OA_I}(\1^{A_I}\otimes\1^{B_IB_O})\right],
\end{align}
which is independent of $x$. Hence, $\{w_{a|x}\}=\{w^{B\prec A}_{a|x}\}$ is causally ordered from Bob to Alice. 

Consequently, assemblages that come from causally separable process matrices according to
\begin{align}
w_{a|x} &= \tr_{A}\left[(A_{a|x}\otimes\1)W^\text{causal}\right] \\ 
&= \tr_{A}\left[(A_{a|x}\otimes\1) \ (q W^{A\prec B} + (1-q)W^{B\prec A})\right] \\ 
&= q w^{A\prec B}_{a|x} + (1-q) w^{B\prec A}_{a|x}
\end{align}
are causal by definition. In other words, causally separable process matrices can only generate causal assemblages.

To prove the only if part we show that all assemblages that are causal can be reproduced by performing some instruments on some causally separable process matrix.
Given a causal assemblage, it can be decomposed in $\{w^{A\prec B}_{a|x}\}$ and $\{w^{B\prec A}_{a|x}\}$ with some convex weight $q$.

With the contribution that is causally ordered from Alice to Bob, $\{w^{A\prec B}_{a|x}\}$, one can write its elements as $w^{A\prec B}_{a|x}=\sigma_{a|x}^{B_I}\otimes\1^{B_O}$, where $\tr\sum_a\sigma_{a|x}=1$. We show that it can be recovered by instruments
\beq
A^{A\prec B}_{a|x}=\1^{A_I}\otimes\sigma_{a|x}^{T \, A_O},
\eeq
where $^T$ is the transposition in the computational basis of $\set{H}^{A_O}$, and a causally ordered process matrix from Alice to Bob
\beq
W^{A\prec B} = \frac{\1^{A_I}}{d_{A_I}}\otimes\ketbra{\Phi^+}{\Phi^+}^{A_OB_I}\otimes\1^{B_O},
\eeq
where $\ketbra{\Phi^+}{\Phi^+}$ is the Choi operator of the identity channel, with $\ket{\Phi^+}=\sum_{i=1}^{d}\ket{ii}$. Indeed,
\begin{align}
\tr_{A_IA_O}\left[(A^{A\prec B}_{a|x}\otimes\1^{B_IB_O}) \ W^{A\prec B}\right] &= \tr_{A_O}\left[(\sigma_{a|x}^{T \, A_O}\otimes\1^{B_IB_O}) (\ketbra{\Phi^+}{\Phi^+}^{A_OB_I}\otimes\1^{B_O})\right] \\
&= \sigma_{a|x}^{B_I}\otimes\1^{B_O} \\
&= w^{A\prec B}_{a|x}.
\end{align}

On the other hand, with the contribution that is causally ordered from Bob to Alice, $\{w^{B\prec A}_{a|x}\}$, one can write $\sum_a w^{B\prec A}_{a|x}=\rho^{B_I}\otimes\1^{B_O}$, where $\tr\rho^{B_I}=1$. Since $\rho\geq0$, it can be written as $\rho = \sum_i\mu_i\ketbra{i}{i}$, which is purified by $\ket{\psi}=\sum_i\sqrt{\mu_i}\ket{ii}$. We show that $\{w^{B\prec A}_{a|x}\}$ can be recovered by instruments $\{A_{a|x}\}$,
\beq
A^{B\prec A}_{a|x}=(\rho^{-\frac{1}{2}\,A^{'}_{I}}\otimes\1^{A^{''}_{I}})\, w^{T \, A_I}_{a|x}\, (\rho^{-\frac{1}{2}\,A^{'}_{I}}\otimes\1^{A^{''}_{I}})\otimes\frac{\1^{A_O}}{d_{A_O}},
\eeq
where $\rho^{-1}$ is the inverse of $\rho$ on its support,  $\set{H}^{A_I}=\set{H}^{A^{'}_I}\otimes\set{H}^{A^{''}_I}$, $^T$ is the transposition in the $\{\ket{i}\}_i$ basis, and a causally ordered process matrix from Bob to Alice
\beq
W^{B\prec A} = \ketbra{\psi}{\psi}^{B_I A^{'}_I} \otimes \ketbra{\Phi^+}{\Phi^+}^{B_O A^{''}_I}\otimes\1^{A_O}.
\eeq
Indeed,
\begin{align}
\tr_{A_IA_O}\left[(A^{B\prec A}_{a|x}\otimes\1^{B_IB_O}) \ W^{B\prec A}\right] &= w^{B\prec A}_{a|x}.
\end{align}

Finally, just like in the proof of \cref{lmmbehaviour}, by allowing the different causal orders to act in complementary subspaces, we can recover any convex combinations of causally ordered assemblages, \textit{i.e.}, causal assemblages, from causally separable process matrices.
\end{proof}


\section{Characterization theorems for general and causal assemblages}\label{app:assemblages}

In this appendix, we prove the equivalence between the definition induced by \cref{eqmostgenass} and \cref{defbigenass}, that is, we show that the most general set of operators $\{w_{a|x}\}$ that satisfies 
\beq\label{apeqA}
p(ab|xy)=\tr(B_{b|y}w_{a|x}) \ \ \ \forall \ a,b,x,y,
\eeq
where $\{p(ab|xy)\}$ is a general behaviour and $\{B_{b|y}\}$ is a valid set of instruments, is a general assemblage of \cref{defbigenass}.

Next, we prove this equivalence to hold also between the definition induced by \cref{eqmostgencauass} and \cref{defdeforderass}, that is, we show that the most general set of operators $\{w^{A\prec B}_{a|x}\}$ that satisfies
\beq
p^{A\prec B}(ab|xy)=\tr(B_{b|y}w^{A\prec B}_{a|x}) \ \ \ \forall \ a,b,x,y,
\eeq
where $\{p^{A\prec B}(ab|xy)\}$ is a behaviour that is causally ordered from Alice to Bob and $\{B_{b|y}\}$ is a valid set of instruments, is an assemblage that is causally ordered from Alice to Bob, of \cref{defdeforderass}, end equivalently from Bob to Alice.

Let us begin with the conditions of a general assemblage.

Non-negativity yields
\beq\label{apeqA1}
p(ab|xy)\geq0 \ \ \forall\,B_{b|y}\geq0 \iff w_{a|x}\geq 0 \ \ \ \forall \ a,x,
\eeq
while normalization yields
\beq
\sum_{a,b} p(ab|xy)=\sum_{a,b}\tr(B_{b|y}w_{a|x}) =1.
\eeq

Let $B_y=\sum_bB_{b|y}$ be the Choi operator of a CPTP map. Then, following ref.~\cite{araujo15}, we can parametrize it according to
\beq
B_y= _{[1-B_O]}Y +\frac{\1}{d_{B_O}},
\eeq
where $Y$ is an hermitian operator and we define the map $_{[1-B_O]}M = M - _{B_O}M$.

Applying the parametrization, it follows that
\beq
\sum_a\tr\left[\left( _{[1-B_O]}Y +\frac{\1}{d_{B_O}}\right)w_{a|x}\right] =1.
\eeq

If $Y=0$, we have
\beq\label{apeqA2}
\tr\bigg(\frac{1}{d_{B_O}}\sum_a w_{a|x}\bigg) = 1 \iff \tr\sum_a w_{a|x} = d_{B_O}.
\eeq

If $Y\neq0$, and using \cref{apeqA2}, we have
\beq
\tr\bigg( {}_{[1-B_O]}Y \, \sum_a w_{a|x}\bigg) = 0.
\eeq

Due to the self-duality of the map $_{[1-B_O]}\cdot$, 
\beq
\tr \bigg( {}_{[1-B_{O}]}Y \, \sum_a w_{a|x} \bigg) = \tr \bigg( Y \, _{[1-B_{O}]} \sum_a w_{a|x} \bigg) = 0 \ \ \forall \,Y \iff _{[1-B_{O}]} \sum_a w_{a|x} = 0.
\eeq
Hence,
\beq\label{apeqA3}
\sum_a w_{a|x} = _{B_{O}} \sum_a w_{a|x}.
\eeq

Together, \cref{apeqA1,apeqA2,apeqA3} define a general assemblage. It is simple to verify that general assemblages lead to valid probability distributions for all sets of instruments.

For assemblages that are causally ordered from Alice to Bob, we require first that they satisfy the conditions of a general assemblage, in order to lead to valid behaviours, and second that Alice's marginal probability distributions
\beq
\sum_b p^{A\prec B}(ab|xy) = \tr\bigg(\sum_b B_{b|y}w^{A\prec B}_{a|x}\bigg)					  
\eeq
are independent of $y$, so that the resulting behaviour is causally ordered from Alice to Bob. The implication is that
\begin{align}
\sum_b p^{A\prec B}(ab|xy) &=  \tr\left[\left( _{[1-B_O]}Y +\frac{\1}{d_{B_O}}\right) w^{A\prec B}_{a|x}\right] \\
					   &=  \tr\left( _{[1-B_O]}Y \, w^{A\prec B}_{a|x}\right) + \frac{1}{d_{B_O}}\tr\left(w^{A\prec B}_{a|x}\right).
\end{align}
will be independent of $y$ if and only if
\beq
\tr\left( _{[1-B_O]}Y \, w^{A\prec B}_{a|x}\right) = \tr\left( Y \, _{[1-B_O]} w^{A\prec B}_{a|x}\right) = 0 \ \ \forall \ Y \iff _{[1-B_O]} w^{A\prec B}_{a|x} = 0.
\eeq
Hence,
\beq\label{apeqAB}
w^{A\prec B}_{a|x} = _{B_O}w^{A\prec B}_{a|x} \ \ \ \forall \ a,x.
\eeq

Finally, for an assemblage that is causally ordered from Bob to Alice, it is also required that they satisfy the conditions of a general assemblage. To guarantee that Bob's marginal probability distributions
\beq
\sum_a p^{A\prec B}(ab|xy) = \tr\bigg(B_{b|y}\sum_a w^{A\prec B}_{a|x}\bigg)					  
\eeq
are independent of $x$, it is necessary and sufficient that
\beq\label{apeqBA}
\sum_a w^{B\prec A}_{a|x} = \sum_a w^{B\prec A}_{a|x'} \ \ \ \forall \ b,x,x',y.
\eeq

Together, \cref{apeqAB,apeqBA} and the conditions of general assemblages define causally ordered assemblages. It is simple to verify that causal assemblages lead to causal probability distributions for all sets of instruments.

The technique used here to arrive at the definitions of general and causal assemblages is the same that will be used in all tripartite scenarios to define their respective general and causal assemblages.


\section{Causally nonseparable process matrices that cannot be certified \\ in a semi-device-independent way}\label{app:counterexemples}

In this section we present the proof of \cref{thmcounterexample} from the main text. 	

\setcounter{theorem}{5}
\begin{theorem}[Device-dependent certifiable, semi-device-independent noncertifiable process matrix]
There exist causally nonseparable process matrices that, for any sets of instruments on Alice's side, always give rise to causal assemblages. That is, causally nonseparable process matrices that cannot be certified in a semi-device-independent way.

In particular, let $W\in\set{L}(\set{H}^{A_IA_OB_IB_O})$ be a process matrix and $W^{T_A}$ be the partial transposition of $W$ with respect to some basis in $\set{L}(\set{H}^{A_IA_O})$ for Alice. If $W^{T_A}$ is causally separable, the assemblages generated by $w_{a|x}=\tr_A[(A_{a|x}\otimes\1^B)\,W]$ are causal for every set of instruments $\{A_{a|x}\}$.
\end{theorem}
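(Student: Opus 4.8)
The plan is to reduce the statement to \cref{lmmass}, which guarantees that any process assemblage arising from a causally separable process matrix is causal. Concretely, I would show that the assemblage $\{w_{a|x}\}$ generated by $W$ together with an \emph{arbitrary} set of Alice's instruments $\{A_{a|x}\}$ can be re-expressed as the process assemblage generated by the causally separable process matrix $W^{T_A}$ and the transposed instruments $\{A_{a|x}^T\}$. Once this is established, \cref{lmmass} immediately forces $\{w_{a|x}\}$ to be causal for every choice of $\{A_{a|x}\}$, which is exactly the claim. The existence part then follows by exhibiting any causally nonseparable $W$ whose partial transpose is causally separable; such matrices exist by the $A\leftrightarrow B$ relabelling of the family used in \cref{prop1}.

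The bridge between the two expressions is a partial-transpose identity. For any $X\in\set{L}(\set{H}^{A_IA_O})$ and $Y\in\set{L}(\set{H}^{B_IB_O})$ one has
\beq
\tr\left[(X\otimes Y)\,W\right] = \tr\left[(X^T\otimes Y)\,W^{T_A}\right],
\eeq
where $X^T$ is the transpose of $X$ in the same basis used to define $W^{T_A}$. I would derive this from the elementary identity $\tr[R^{T_A}S]=\tr[RS^{T_A}]$ (equivalently, invariance of the full trace under partial transposition) after substituting $X\to X^T$, or alternatively by a one-line component computation. Since it holds for all $Y$, it upgrades to the operator identity $\tr_A[(X\otimes\1^B)W]=\tr_A[(X^T\otimes\1^B)W^{T_A}]$. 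Taking $X=A_{a|x}$ then gives $w_{a|x}=\tr_A[(A_{a|x}^T\otimes\1^B)\,W^{T_A}]$, i.e.\ $\{w_{a|x}\}$ is precisely the process assemblage obtained by feeding $\{A_{a|x}^T\}$ into $W^{T_A}$.

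It then remains to verify that $\{A_{a|x}^T\}$ is a legitimate set of instruments, so that the object produced genuinely qualifies as a process assemblage in the sense required by \cref{lmmass}. Positivity is immediate, since transposition preserves the spectrum and hence maps positive operators to positive operators. For the trace constraint I would use that the transpose on $A_IA_O$ factorises as the commuting composition of the partial transposes on $A_I$ and $A_O$, together with $\tr_{A_O}[M^{T_{A_O}}]=\tr_{A_O}[M]$, to obtain
\beq
\tr_{A_O}\sum_a A_{a|x}^T = \Big(\tr_{A_O}\sum_a A_{a|x}\Big)^{T_{A_I}} = (\1^{A_I})^{T_{A_I}} = \1^{A_I},
\eeq
as needed. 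Note also that the hypothesis ``$W^{T_A}$ is causally separable'' already presupposes that $W^{T_A}$ is a valid (positive, normalised) process matrix, so no extra positivity argument for $W^{T_A}$ is required.

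I expect the conceptual crux — and the only genuinely non-mechanical step — to be recognising that the partial-transpose identity converts a statement about the (possibly causally nonseparable) operator $W$ into a statement about the causally separable operator $W^{T_A}$, thereby transferring causal structure through \cref{lmmass}. The verification that the transposed instruments remain a valid set is the main technical point, but it is routine. This also makes transparent why the argument strictly extends \cref{prop1}: the same hypothesis on $W^{T_A}$ now yields non-certifiability even in the stronger semi-device-independent setting, not merely the device-independent one.
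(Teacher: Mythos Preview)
Your proposal is correct and follows essentially the same approach as the paper: both arguments hinge on the partial-transpose identity $\tr_A[(A_{a|x}\otimes\1)\,W]=\tr_A[(A_{a|x}^T\otimes\1)\,W^{T_A}]$, observe that $\{A_{a|x}^T\}$ is again a valid set of instruments, and conclude that the assemblage is one obtained from the causally separable $W^{T_A}$ and hence causal. The only cosmetic differences are that you explicitly invoke \cref{lmmass} (the paper just notes the explicit causally separable realisation), and you spell out the verification that transposed instruments remain valid instruments, which the paper simply asserts; for the existence part the paper points directly to the examples of ref.~\cite{feix16}, which coincides with your $A\leftrightarrow B$ relabelling of \cref{prop1}.
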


\begin{proof}
We start our proof by showing that if $W^{T_A}$ is causally separable then the assemblages generated by $w_{a|x}=\tr_A[(A_{a|x}\otimes\1^B)\,W]$ are causal for every set of instruments $\{A_{a|x}\}$. Straightforward calculations shows that the transposition map is self-adjoint (\textit{i.e.} $\tr[A^T\, B]= \tr[A\, B^T], \; \forall \ A, B$), in particular, it is true that
\begin{equation}
\tr_A \left[ (A_{a|x}\otimes \id )\, W^{T_A}  \right] =  \tr_A{ \left[ ( A_{a|x}^T \otimes \id )\,W  \right] } \quad \forall \ A_{a|x}.
\end{equation}

Now notice that every instrument can be written as a transposition of another instrument and we can define new valid instruments via $A'_{a|x}:=A^T_{a|x}$, which allows us to recover any assemblages generated by $W$ and instruments $\{A_{a|x}\}$ with a causally separable process matrix. More precisely, since $(A_{a|x}^T)^T=A_{a|x}$, the mathematical identities
\begin{align}
		w_{a|x} &=\tr_A[(A_{a|x}\otimes\1)\,W] \\
		& = \tr_A\big[((A_{a|x}^T)^T\otimes\1)\, W\big] \\
		& = \tr_A\big[(A^T_{a|x}\otimes\1)\, W^{T_A}\big] \\
		& = \tr_A[(A'_{a|x}\otimes\1)\, W^{\text{sep}}] 
\end{align}
provide an explicit decomposition for the assemblage $\{w_{a|x}\}$ in terms of a causally separable process matrix and valid instruments.
	
We finish our proof by referring to ref.~\cite{feix16}, which presents several examples of process matrices $W$ which are causally nonseparable but that $W^{T_A}$ is causally separable.
\end{proof}


\section{Certification of tripartite process matrices}\label{app:tripartite}

\setcounter{theorem}{7}
\setcounter{lemma}{3}

Here we generalize our approach to certification of indefinite causal order for a tripartite scenario where a process matrix is shared between parties Alice, Bob, and Charlie. In this particular tripartite scenario, Charlie is always in the future of Alice and Bob, so we only study tripartite process matrices that have the causal order $(A,B)\prec C$, however, the causal order between Alice and Bob may or may not be well defined. We will only consider this kind of tripartite scenario, the one which is appropriate for the study of the quantum switch processes. {A semi-device-independent approach to more general tripartite scenarios, or multipartite scenarios, may be derived, in principle, from a straightforward generalization of our particular tripartite scenario.} For more general multipartite scenarios under a device-dependent approach, we refer the reader to ref.~\cite{wechs19}, and for device-independent, ref.~\cite{abbott17}.

In the following, we explicitly extend all concepts, definitions, and results from the bipartite case presented in the main text to the tripartite case. We start by defining certification in all $6$ inequivalent scenarios that arise from making different assumptions about the operations of each party: TTT (device-dependent), UUU (device-independent), TTU, TUU, UTT, and UUT (semi-device-independent).

\begin{definition}[TTT (device-dependent) certification]
Given a tripartite behaviour $\{p(abc|xyz)\}$ that arises from known instruments $\{\overline{A}_{a|x}\}$ and $\{\overline{B}_{b|y}\}$, known POVMs $\{\overline{M}_{c|z}\}$, and an unknown tripartite process matrix, one certifies that this process matrix is causally nonseparable if, for some $a,b,c,x,y,z$,
\beq
p(abc|\overline{A}_{a|x}\,,\overline{B}_{b|y},\overline{M}_{c|z})\neq\tr\left[(\overline{A}_{a|x}\otimes \overline{B}_{b|y}\otimes \overline{M}_{c|z})W^\text{sep}\right],
\eeq
for all causally separable tripartite process matrices $W^\text{sep}$.
\end{definition}

\begin{definition}[UUU (device-independent) certification]
Given a tripartite behaviour $\{p(abc|xyz)\}$ that arises from unknown instruments and an unknown tripartite process matrix, one certifies that this process matrix is causally nonseparable if, for some $a,b,c,x,y,z$,
\beq
p(abc|xyz)\neq\tr\left[(A_{a|x}\otimes B_{b|y}\otimes M_{c|z})W^\text{sep}\right],
\eeq
for all causally separable tripartite process matrices $W^\text{sep}$, all general instruments $\{A_{a|x}\}$ and $\{B_{b|y}\}$, and all general POVMs $\{M_{c|z}\}$. A process matrix certified in such way is called UUU-noncausal, or device-independent noncausal.
\end{definition}

\begin{definition}[TTU (semi-device-dependent) certification]
Given a tripartite behaviour $\{p(abc|xyz)\}$ that arises from known instruments $\{\overline{A}_{a|x}\}$ and $\{\overline{B}_{b|y}\}$ on Alice's and Bob's side, unknown POVMs on Charlie's side, and an unknown tripartite process matrix, one certifies that this process matrix is causally nonseparable if, for some $a,b,c,x,y,z$,
\beq
p(abc|\overline{A}_{a|x},\overline{B}_{b|y},\,z)\neq\tr\left[(\overline{A}_{a|x}\otimes \overline{B}_{b|y}\otimes M_{c|z})W^\text{sep}\right],
\eeq
for all causally separable tripartite process matrices $W^\text{sep}$ and all general POVMs $\{M_{c|z}\}$. A process matrix certified in such way is called TTU-noncausal.
\end{definition}

\begin{definition}[TUU (semi-device-dependent) certification]
Given a tripartite behaviour $\{p(abc|xyz)\}$ that arises from known instruments $\{\overline{A}_{a|x}\}$ on Alice's, unknown instruments on Bob's and Charlie's side, and an unknown tripartite process matrix, one certifies that this process matrix is causally nonseparable if, for some $a,b,c,x,y,z$,
\beq
p(abc|\overline{A}_{a|x},\,y,\,z)\neq\tr\left[(\overline{A}_{a|x}\otimes B_{b|y}\otimes M_{c|z})W^\text{sep}\right],
\eeq
for all causally separable tripartite process matrices $W^\text{sep}$, all general instruments $\{B_{b|y}\}$, and all general POVMs $\{M_{c|z}\}$. A process matrix certified in such way is called TUU-noncausal.
\end{definition}

\begin{definition}[UTT (semi-device-dependent) certification]
Given a tripartite behaviour $\{p(abc|xyz)\}$ that arises from unknown instruments on Alice's side, known instruments $\{\overline{B}_{b|y}\}$ and $\{\overline{M}_{c|z}\}$ on Bob's and Charlie's side, and an unknown tripartite process matrix, one certifies that this process matrix is causally nonseparable if, for some $a,b,c,x,y,z$,
\beq
p(abc|x\,,\overline{B}_{b|y},\overline{M}_{c|z})\neq\tr\left[(A_{a|x}\otimes \overline{B}_{b|y}\otimes \overline{M}_{c|z})W^\text{sep}\right],
\eeq
for all causally separable tripartite process matrices $W^\text{sep}$ and all general instruments $\{A_{a|x}\}$. A process matrix certified in such way is called UTT-noncausal.
\end{definition}

\begin{definition}[UUT (semi-device-dependent) certification]
Given a tripartite behaviour $\{p(abc|xyz)\}$ that arises from unknown instruments on Alice's and Bob's side, known POVMs $\{\overline{M}_{c|z}\}$ on Charlie's side, and an unknown tripartite process matrix, one certifies that this process matrix is causally nonseparable if, for some $a,b,c,x,y,z$,
\beq
p(abc|x,\,y,\,\overline{M}_{c|z})\neq\tr\left[(A_{a|x}\otimes B_{b|y}\otimes \overline{M}_{c|z})W^\text{sep}\right],
\eeq
for all causally separable tripartite process matrices $W^\text{sep}$ and all general instruments $\{A_{a|x}\}$ and $\{B_{b|y}\}$. A process matrix certified in such way is called UUT-noncausal.
\end{definition}


\subsection{Device-dependent -- TTT}

Following ref.~\cite{araujo15}, a \textit{tripartite process matrix}, with Charlie in the future of Alice and Bob, is an operator $W\in\set{L}(\set{H}^{A_I}\otimes\set{H}^{A_O}\otimes\set{H}^{B_I}\otimes\set{H}^{B_O}\otimes\set{H}^{C_I})$ that satisfies
\begin{align}
W&\geq0 \\
\tr\ W &= d_{A_O}d_{B_O}  \\ 
_{A_IA_OC_I}W&=_{A_IA_OB_OC_I}W  \\
_{B_IB_OC_I}W&=_{A_OB_IB_OC_I}W  \\
_{C_I}W &= _{A_OC_I}W+_{B_OC_I}W-_{A_OB_OC_I}W.
\end{align}

A tripartite process matrix $W^{A\prec B\prec C}$ is causally ordered from Alice to Bob to Charlie if it satisfies
\begin{align}
_{C_I} W^{A\prec B\prec C} &= _{B_OC_I} W^{A\prec B\prec C} \label{eqWABC} \\
_{B_IB_OC_I} W^{A\prec B\prec C} &= _{A_OB_IB_OC_I} W^{A\prec B\prec C}, 
\end{align}
and a tripartite process matrix $W^{B\prec A\prec C}$ is causally ordered from Bob to Alice to Charlie if it satisfies
\begin{align}
_{C_I} W^{B\prec A\prec C} &= _{A_OC_I} W^{B\prec A\prec C} \\
_{A_IA_OC_I} W^{B\prec A\prec C} &= _{A_IA_OB_OC_I} W^{B\prec A\prec C}.
\end{align}

A tripartite process matrix $W^\text{sep}$ is causally separable if it can be expressed as a convex combination of causally ordered process matrices of the kind $W^{A\prec B\prec C}$ and $W^{B\prec A\prec C}$, \textit{i.e.},
\beq\label{eqTTTcausal}
W^\text{sep} \coloneqq q W^{A\prec B\prec C} + (1-q) W^{B\prec A\prec C}
\eeq
where $0\leq q\leq 1$ is a real number. A tripartite process matrix that does not satisfy \cref{eqTTTcausal} is called causally nonseparable.

Just as in the bipartite case, all causally nonseparable tripartite process matrices can be certified in a device-dependent way for some choice of instruments, since tomographically complete instruments allow for the full characterization of the process matrix. Hence, \cref{thmdd} also holds in the tripartite case.


\subsection{Device-independent -- UUU}

A \textit{tripartite behaviour} $\{p(abc|xyz)\}$ is a set of joint probability distributions, that is, a set in which each element $p(abc|xyz)$ is a real number, such that
\begin{align}
p(abc|xyz) &\geq 0 \ \ \ \forall \ a,b,c,x,y,z \\
\sum_{a,b,c} p(abc|xyz) &= 1 \ \ \ \forall \ x,y,z,
\end{align}
where $a\in\{1,\ldots,O_A\}$, $b\in\{1\ldots,O_B\}$, and $c\in\{1\ldots,O_C\}$ label the outcomes and $x\in\{1,\ldots,I_A\}$, $y\in\{1,\ldots,I_B\}$, and $z\in\{1,\ldots,I_C\}$ label the inputs. 

A tripartite behaviour in which Charlie is in the future of Alice and Bob is a behaviour $\{p(abc|xyz)\}$ that satisfies
\beq
\sum_c p(abc|xyz) = \sum_c p(abc|xyz') \ \ \ \forall \ a,b,x,y,z,z',
\eeq
that is, a behaviour whose Alice and Bob's joint marginal does not depend on Charlie's inputs. We will only consider this type of behaviours and will refer to them as simply tripartite behaviours.

A tripartite behaviour is called a \textit{tripartite process behaviour} if there exist a tripartite process matrix $W^{A_IA_OB_IB_OC_I}$, sets of instruments $\{A^{A_IA_O}_{a|x}\}$ and $\{B^{B_IB_O}_{b|y}\}$, and a set of POVMs $\{M^{C_I}_{c|z}\}$ such that
\beq\label{eqgborntri}
p^Q(abc|xyz) = \tr\left[(A^{A_IA_O}_{a|x}\otimes B^{B_IB_O}_{b|y}\otimes M^{C_I}_{c|z})\ W^{A_IA_OB_IB_OC_I}\right]  \ \ \ \forall \ a,b,c,x,y,z.
\eeq

Just like in the bipartite case, it can be shown that the tripartite process matrix is the most general operator that leads to valid tripartite behaviours when taken the trace with product instruments. Additionally, since not all bipartite behaviours can be realized by process matrices and the bipartite case is a particular case of the tripartite case, not all tripartite behaviours can be realized by process matrices as well. Hence, \cref{thmprobtwoway} also holds in the tripartite case.

A tripartite behaviour $\{p^{A\prec B\prec C}(abc|xyz)\}$ is causally ordered from Alice to Bob to Charlie if Alice's marginals do not depend on the inputs of Bob and Charlie, that is,
\beq
\sum_{b,c} p^{A\prec B\prec C}(abc|xyz) = \sum_{b,c} p^{A\prec B\prec C}(abc|xy'z') \ \ \ \forall \ a,x,y,y',z,z', 
\eeq 
and a tripartite behaviour $\{p^{B\prec A\prec C}(abc|xyz)\}$ is causally ordered from Bob to Alice to Charlie if Bob's marginals do not depend on the inputs of Alice and Charlie, that is,
\beq
\sum_{a,c} p^{B\prec A\prec C}(abc|xyz) = \sum_{a,c} p^{B\prec A\prec C}(abc|x'yz') \ \ \ \forall \ a,x,x',y,z,z'.
\eeq 

A tripartite behaviour $\{p^\text{causal}(abc|xyz)\}$ is causal if it can be written as a convex combination of causally ordered behaviours of the kind $\{p^{A\prec B\prec C}(abc|xyz)\}$ and $\{p^{B\prec A\prec C}(abc|xyz)\}$, \textit{i.e.},
\beq\label{eqUUUcausal}
p^\text{causal}(abc|xyz) \coloneqq q p^{A\prec B\prec C}(abc|xyz) + (1-q) p^{B\prec A\prec C}(abc|xyz),
\eeq
where $0\leq q\leq 1$ is a real number. A tripartite behaviour that does not satisfy \cref{eqUUUcausal} is called noncausal.

\begin{lemma}\label{lmmcausaltribehaviour}
A tripartite behaviour is causal if and only if it is a tripartite process behaviour that can be obtained by a causally separable tripartite process matrix.
\end{lemma}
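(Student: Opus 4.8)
The plan is to mirror the proof of the bipartite \cref{lmmbehaviour} given earlier in this appendix, adapting each step to the three-party chain $(A,B)\prec C$.

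For the ``if'' direction, I would first show that any behaviour arising from a causally ordered process matrix $W^{A\prec B\prec C}$ is itself causally ordered from Alice to Bob to Charlie. Concretely, I would compute Alice's marginal $\sum_{b,c} p(abc|xyz)$ by tracing Bob's instrument sum $\sum_b B_{b|y}$ (a CPTP Choi operator, for which $\tr_{B_O}\sum_b B_{b|y}=\1^{B_I}$) and Charlie's POVM sum $\sum_c M_{c|z}=\1^{C_I}$ against $W^{A\prec B\prec C}$. Using the defining trace-and-replace identities of \cref{eqWABC} and the condition immediately following it, these sums collapse $W^{A\prec B\prec C}$ onto a reduced operator on $\set{H}^{A_IA_O}$ that no longer involves the spaces controlled by $y$ and $z$, so the marginal is independent of $y,z$. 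The analogous computation handles $W^{B\prec A\prec C}$, and linearity of the Born rule then turns the convex decomposition \cref{eqTTTcausal} of the process matrix directly into the convex decomposition \cref{eqUUUcausal} of the behaviour, proving it is causal.

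For the ``only if'' direction I would give an explicit quantum realization, generalizing the two-party construction. Given a causal tripartite behaviour, I decompose it via \cref{eqUUUcausal} into an $A\prec B\prec C$ part and a $B\prec A\prec C$ part and treat each separately. For the $A\prec B\prec C$ part, Bayes' rule together with the signalling constraints lets me write $p^{A\prec B\prec C}(abc|xyz)=p_A(a|x)\,p_B(b|axy)\,p_C(c|abxyz)$. I then build instruments in which Alice writes her label $(a,x)$ into $A_O$, Bob reads it from $B_I$ and writes the accumulated label into $B_O$, and Charlie reads that label from $C_I$ and realizes $p_C(c|abxyz)$ as a POVM element; the process matrix $\frac{\1^{A_I}}{d_{A_I}}\otimes\ketbra{\Phi^+}{\Phi^+}^{A_OB_I}\otimes\ketbra{\Phi^+}{\Phi^+}^{B_OC_I}$ wires these together as a chain of identity channels and is manifestly ordered $A\prec B\prec C$. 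An analogous construction handles the $B\prec A\prec C$ part.

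Finally, exactly as in the bipartite proof, I would merge the two orders by extending the parties' input spaces with auxiliary flag registers and letting each causal order act on a complementary orthogonal subspace (via $\ketbra{00}{00}$ and $\ketbra{11}{11}$ factors), so that the cross terms vanish and the convex mixture of the two causally ordered process matrices reproduces the full causal behaviour. I expect the main obstacle to be this ``only if'' construction: unlike the two-party case, the signalling must be routed through a length-three chain while Charlie is restricted to a POVM (output dimension one), so some care is needed to verify that the chained identity-channel process matrix satisfies all tripartite validity and causal-order constraints and that Charlie's label-reading POVM correctly reproduces the conditional $p_C(c|abxyz)$.
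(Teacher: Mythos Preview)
Your proposal is correct and follows essentially the same route as the paper: the paper's proof dispatches the ``if'' direction by referring to the bipartite analogue and, for the ``only if'' direction, gives exactly the label-passing construction you describe, with Alice writing $(a,x)$ into $A_O$, Bob reading it and writing $(a,b,x,y)$ into $B_O$, Charlie realising $p_C(c|abxyz)$ as a diagonal POVM on $C_I$, all wired by the identity-channel process matrix $\frac{\1^{A_I}}{d_{A_I}}\otimes\ketbra{\Phi^+}{\Phi^+}^{A_OB_I}\otimes\ketbra{\Phi^+}{\Phi^+}^{B_OC_I}$, and then merges the two causal orders via complementary flag subspaces just as in \cref{lmmbehaviour}. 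Your anticipated obstacle is not a real one: since $d_{C_O}=1$, Charlie's label-reading POVM and the causal-order constraints on the chained process matrix are immediate to verify.
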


\begin{proof}
It can be straightforwardly checked that a behaviour that comes from a causally separable process matrix is causal, the tripartite case being analogous to the bipartite case (see \cref{lmmbehaviour}). 
To prove that all causal tripartite behaviours can be reproduced by a causally separable tripartite process matrix, we provide the explicit construction of instruments and process matrix below.
Given a causal tripartite behaviour $\{p^\text{causal}(abc|xyz)\}$, one can write its decompostion into definite causal orders using $\{p^{A\prec B\prec C}(abc|xyz)\}$, $\{p^{B\prec A\prec C}(abc|xyz)\}$, and $q$. According to Bayes' rule and the nonsignaling principle we can calculate the quantities
\begin{align}
p^{A\prec B\prec C}(abc|xyz) &= p^{A\prec B\prec C}_A(a|xyz)p^{A\prec B\prec C}_B(b|axyz)p^{A\prec B\prec C}_C(c|abxyz) \\
&= p^{A\prec B\prec C}_A(a|x)p^{A\prec B\prec C}_B(b|axy)p^{A\prec B\prec C}_C(c|abxyz).
\end{align} 
We use them to define instruments
\begin{align}
A^{A\prec B\prec C}_{a|x} &= \1^{A_I} \otimes p^{A\prec B\prec C}_A(a|x) \ketbra{ax}{ax}^{A_O} \\
B^{A\prec B\prec C}_{b|y} &= \sum_{a,x} p^{A\prec B\prec C}_B(b|axy) \ketbra{ax}{ax}^{B_I} \otimes  \ketbra{abxy}{abxy}^{B_O} \\
M^{A\prec B\prec C}_{c|z} &= \sum_{a,b,x,y} p^{A\prec B\prec C}_C(c|abxyz) \ketbra{abxy}{abxy}^{C_I},
\end{align} 
and the process matrix
\beq
W^{A\prec B\prec C} = \frac{\1^{A_I}}{d_{A_I}}\otimes \ketbra{\Phi^+}{\Phi^+}^{A_OB_I}\otimes \ketbra{\Phi^+}{\Phi^+}^{B_OC_I}.
\eeq
One can check that
\begin{align}
\tr&\left[(A^{A\prec B\prec C}_{a|x}\otimes B^{A\prec B\prec C}_{b|y}\otimes M^{A\prec B\prec C}_{c|z})\ W^{A\prec B\prec C} \right] = \\
&= p^{A\prec B\prec C}_A(a|x)p^{A\prec B\prec C}_B(b|axy)p^{A\prec B\prec C}_C(c|abxyz) \\
&=p^{A\prec B\prec C}(abc|xyz).
\end{align} 

Equivalently, the instruments and process matrix for the causal order $B\prec A\prec C$ can be constructed, and by allowing each order to act on a complementary input subspace, just like in the proof of \cref{lmmbehaviour}, all causal tripartite behaviours can be recovered. 
\end{proof}

\begin{theorem}
A tripartite process matrix is certified to be causally nonseparable in a device-independent way if and only if it can generate a noncausal tripartite behaviour for some choice of instruments for Alice and Bob and some choice of POVMs for Charlie.
\end{theorem}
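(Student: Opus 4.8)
The plan is to reproduce, in the tripartite setting, the two-line argument used for the bipartite \cref{thmdi}, with \cref{lmmcausaltribehaviour} playing the role that \cref{lmmbehaviour} played there. Since that characterization lemma has already been established, essentially all of the work is done, and the theorem reduces to a direct translation between the language of behaviours and the language of certification.

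For the \emph{if} direction I would start from a choice of instruments $\{A_{a|x}\}$, $\{B_{b|y}\}$ and POVMs $\{M_{c|z}\}$ for which $W$ generates a noncausal tripartite behaviour $\{p^Q(abc|xyz)\}$. By the ``only if'' part of \cref{lmmcausaltribehaviour} (equivalently, its contrapositive), a noncausal tripartite behaviour cannot be obtained from any causally separable tripartite process matrix by any choice of operations. Hence $\{p^Q(abc|xyz)\}$ witnesses the strict inequality in the definition of device-independent (UUU) certification for every causally separable $W^{\text{sep}}$ and every choice of the untrusted operations, so $W$ is certified to be causally nonseparable.

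For the \emph{only if} direction I would argue by contraposition: suppose $W$ generates only causal behaviours. Then for every choice of instruments and POVMs the resulting behaviour $\{p^Q(abc|xyz)\}$ is causal, and the ``if'' part of \cref{lmmcausaltribehaviour} supplies a causally separable tripartite process matrix $W^{\text{sep}}$ together with instruments and POVMs that reproduce $\{p^Q(abc|xyz)\}$ exactly. The certification inequality therefore fails for this $W^{\text{sep}}$, so no choice of operations allows one to certify $W$ as causally nonseparable. Contrapositively, if $W$ is device-independent certifiable then it must be able to generate a noncausal behaviour for some choice of instruments for Alice and Bob and some choice of POVMs for Charlie.

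I do not expect a genuine obstacle here: the only point requiring care is bookkeeping, namely verifying that the fixed input/output cardinalities and the restriction to tripartite behaviours with Charlie in the common future (the condition that $\sum_c p(abc|xyz)$ be independent of $z$) are respected throughout. This is automatic, since \cref{lmmcausaltribehaviour} is stated precisely for this class of behaviours. All the conceptual content lives in \cref{lmmcausaltribehaviour}, whose harder ``only if'' half is proved by the explicit construction of instruments, POVMs, and a causally separable process matrix acting on complementary input subspaces given in its proof.
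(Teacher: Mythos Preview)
Your proposal is correct and matches the paper's approach exactly: the paper simply states that the proof is analogous to \cref{thmdi}, which in turn is the two-line argument invoking the characterization lemma (\cref{lmmbehaviour} there, \cref{lmmcausaltribehaviour} here) that you reproduce. There is nothing to add.
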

The proof is analogous to \cref{thmdi}.


\subsection{Semi-device-independent -- TTU}

We start the first semi-device-independent tripartite scenario by defining assemblages in the TTU scenario using the same reasoning and motivation as the bipartite case, explained in the main text.

\begin{definition}[Process TTU-assemblage]
A process TTU-assemblage is a set of operators $\{w^Q_{c|z}\}$, $w^Q_{c|z}\in\set{L}(\set{H}^{A_IA_O}\otimes\set{H}^{B_IB_O})$, for which there exists a tripartite process matrix $W^{A_IA_OB_IB_OC_I}$ and a set of POVMs $\{M^{C_I}_{c|z}\}$ such that
\beq\label{processwttu}
w^Q_{c|z} = \tr_{C_I}\left[(\1^{A_IA_O}\otimes\1^{B_IB_O}\otimes M^{C_I}_{c|z})W^{A_IA_OB_IB_OC_I}\right],
\eeq
for all $c,z$.
\end{definition}

\begin{definition}[General TTU-assemblage]\label{defttuass}
A general TTU-assemblage is a set of operators $\{w_{c|z}\}$, $w_{c|z}\in\set{L}(\set{H}^{A_IA_O}\otimes\set{H}^{B_IB_O})$, that satisfies
\begin{align}
w_{c|z} &\geq 0 \ \ \ \forall \ c,z \\
\sum_c w_{c|z} &= W^{A_IA_OB_IB_O} \ \ \ \forall \ z,
\end{align}
where $W^{A_IA_OB_IB_O}$ is a valid bipartite process matrix.
\end{definition}

Intuitively, behaviours are extracted from TTU-assemblages according to
\beq
p(abc|xyz) = \tr\left[(A_{a|x}\otimes B_{b|y}) w_{c|z}\right] \ \ \ \forall \ a,b,c,x,y,z.
\eeq

Contrarily to the bipartite case, for which we proved that not all general assemblages can be realized by process matrices (\cref{thmgenass}), for the TTU scenario, we prove that general and process TTU-assemblages are actually equivalent.

\begin{theorem}\label{thmschroedingerttu}
A general TTU-assemblage is valid if and only if it is a process TTU-assemblage. 
\end{theorem}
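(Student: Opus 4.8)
The plan is to prove that the two classes of operators coincide by isolating a single structural fact: for tripartite process matrices in which Charlie is last, \emph{every} defining constraint only constrains the marginal $\tr_{C_I}W$. Concretely, I would first establish the key lemma that an operator $W\in\set{L}(\set{H}^{A_IA_OB_IB_OC_I})$ is a valid tripartite process matrix if and only if $W\geq 0$ and $\tilde W\coloneqq\tr_{C_I}W$ is a valid bipartite process matrix. The reason is that each of the three linear constraints defining the tripartite process matrix carries a $C_I$ inside \emph{every} trace-and-replace term; since tensoring with the fixed factor $\1^{C_I}/d_{C_I}$ is injective, after pulling it out and cancelling it each constraint collapses exactly onto the corresponding bipartite constraint \cref{eqW3,eqW4,eqW5} applied to $\tilde W$ (using $\tr_{A_IA_OC_I}W=\tr_{A_IA_O}\tilde W$ and so on), while the trace condition $\tr W=d_{A_O}d_{B_O}$ is identical to $\tr\tilde W=d_{A_O}d_{B_O}$.

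With this lemma the only-if direction is immediate. Given a process TTU-assemblage $w^Q_{c|z}=\tr_{C_I}[(\1^{A_IA_OB_IB_O}\otimes M_{c|z})W]$, positivity $w^Q_{c|z}\geq 0$ follows by rewriting the integrand as $(\1\otimes\sqrt{M_{c|z}})W(\1\otimes\sqrt{M_{c|z}})\geq 0$ and using that the partial trace preserves positivity, while $\sum_c w^Q_{c|z}=\tr_{C_I}[(\1\otimes\sum_c M_{c|z})W]=\tr_{C_I}W=\tilde W$, which is a valid bipartite process matrix by the lemma and is manifestly independent of $z$. Hence $\{w^Q_{c|z}\}$ satisfies the conditions of \cref{defttuass}.

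For the if-direction I would recognise the problem as an instance of EPR-steering and hand it to the Schr\"odinger--HJW realisation. Let $\{w_{c|z}\}$ be a general TTU-assemblage, so $\sum_c w_{c|z}=W^{A_IA_OB_IB_O}$ is a fixed valid bipartite process matrix. Treating the joint space $A_IA_OB_IB_O$ as a single trusted system, and using that Charlie's dimension is free in this scenario, I would take $C_I$ of dimension $d_{A_I}d_{A_O}d_{B_I}d_{B_O}$ and form the canonical purification $\ket{\Psi}=(\sqrt{W^{A_IA_OB_IB_O}}\otimes\1^{C_I})\sum_i\ket{i}\ket{i}$, setting $W\coloneqq\proj{\Psi}$. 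Then $\tr_{C_I}W=W^{A_IA_OB_IB_O}$, so by the key lemma $W$ is automatically a valid tripartite process matrix. Since $\sum_c w_{c|z}=\tr_{C_I}W$, the steering theorem \cite{schrodinger35,gisin89,hughston93} furnishes POVMs $M_{c|z}=(\rho^{C_I})^{-1/2}(w_{c|z})^{T}(\rho^{C_I})^{-1/2}$, with $\rho^{C_I}=\tr_{A_IA_OB_IB_O}W$ and the transpose and inverse taken in the Schmidt basis, realising $w_{c|z}=\tr_{C_I}[(\1\otimes M_{c|z})W]$; this exhibits $\{w_{c|z}\}$ as a process TTU-assemblage.

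The main obstacle, and the conceptual heart of the argument, is the key lemma of the first paragraph: once one notices that every tripartite constraint factorises through $C_I$ and therefore only pins down the marginal $\tr_{C_I}W$, the validity of the purified $W$ comes for free and the realisation question is handed over wholesale to standard steering. The only routine care needed afterwards is normalisation of the steering POVM---$\sum_c M_{c|z}$ equals the projector onto the support of $\rho^{C_I}$, which is completed to $\1^{C_I}$ on the kernel without altering the assemblage, since $\ket{\Psi}$ has no support there.
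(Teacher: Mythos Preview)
Your proposal is correct and follows essentially the same approach as the paper: both purify the bipartite process matrix $W^{A_IA_OB_IB_O}=\sum_c w_{c|z}$ onto $C_I$ and realise the assemblage via the Schr\"odinger--HJW construction $M_{c|z}=W^{-1/2}w_{c|z}^{T}W^{-1/2}$ (noting that in the Schmidt/diagonal basis $\rho^{C_I}=W$). The paper's justification that the purification is a valid tripartite process matrix is the terse remark that ``$\tr_C W^{ABC}=W$'' suffices since Charlie's output space is trivial; you make this explicit as a standalone lemma, and you also handle the kernel-completion of the POVM that the paper leaves implicit, but the argument is the same.
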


\begin{proof}
By substituting the definition of a tripartite process matrix and POVMs into \cref{processwttu} it is easy to check that the resulting assemblage satisfies \cref{defttuass}.
To show that any valid TTU-assemblage can be obtained with process matrices and instruments, we give the following explicit construction.
Given a general TTU-assemblage $\{w_{c|z}\}$ and the general bipartite process matrix $W=\sum_c w_{c|z}$, we construct Charlie's POVMs $\{M_{c|z}\}$ according to
\beq
M_{c|z} = W^{-\frac{1}{2}} w^T_{c|z} W^{-\frac{1}{2}}, \ \ \ \forall \ c,z,
\eeq
where the transpose $^T$ is taken in the basis in which $W$ is diagonal.
Notice that this implies $\text{dim}(\set{H}^{C_I})=\text{dim}(\set{H}^{A_IA_O}\otimes\set{H}^{B_IB_O})$ for this particular construction.
The sum $\sum_c M_{c|z} = W^{-\frac{1}{2}} \sum_c w^T_{c|z} W^{-\frac{1}{2}} = W^{-\frac{1}{2}} W W^{-\frac{1}{2}} = \1$ for all $z$ guarantees it is a valid set of POVMs. 

Now, by writing the process matrix $W$ in its diagonal basis, $W=\sum_{ij} \mu_{ij} \ketbra{ij}{ij}$, we can define its purification 
\beq
\ket{W^{ABC}}=\sum_{ij}\sqrt{\mu_{ij}}\ket{ij\,ij}.
\eeq
The object $W^{ABC}=\ketbra{W^{ABC}}{W^{ABC}}$ is a well defined tripartite process matrix. This is true particularly because the dimension of Charlie's output space is $1$, that is, because it is in the future of Alice and Bob, and follows from the fact that $\tr_C W^{ABC} = W$. Hence, 
\begin{align}
\tr_{C}\left[(\1^{A}\otimes\1^{B}\otimes M^{C}_{c|z})W^{ABC}\right] &= \tr_{C}\left[(\1^{AB}\otimes W^{-\frac{1}{2}} w^T_{c|z} W^{-\frac{1}{2}})W^{ABC}\right] \\
&= w_{c|z},
\end{align}
for all $c,z$. This concludes the proof that all TTU-assemblages can be realized with valid tripartite process matrices and a set of POVMs for Charlie.
\end{proof}

We now define our notion of causality for TTU-assemblages.

\begin{definition}[Causal TTU-assemblage]
A TTU-assemblage is causally ordered from Alice to Bob to Charlie if it satisfies
\beq
\sum_c w^{A\prec B \prec C}_{c|z} = W^{A\prec B} \ \ \ \forall \ z,
\eeq
where $W^{A\prec B}$ is a bipartite process matrix causally ordered from Alice to Bob, and equivalently from Bob to Alice.

A TTU-assemblage $\{w^{\text{causal}}_{c|z}\}$ is causal if it can be expressed as a convex combination of causally ordered TTU-assemblages, \textit{i.e.},
\beq\label{eqTTUcausal}
w^{\text{causal}}_{c|z} \coloneqq q w^{A\prec B\prec C}_{c|z} + (1-q) w^{B\prec A\prec C}_{c|z} \ \ \ \forall \ c,z,
\eeq
where $0\leq q\leq 1$ is a real number. A TTU-assemblage that does not satisfy \cref{eqTTUcausal} is called a noncausal TTU-assemblage.
\end{definition}

Notice that one can decide whether a TTU-assemblage is causal by means of SDP.

\begin{lemma}\label{lmmttuass}
A TTU-assemblage is causal if and only if it is a process TTU-assemblage that can be obtained from a causal tripartite process matrix.
\end{lemma}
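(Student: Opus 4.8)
The plan is to mirror the proof of \cref{lmmass2} from the bipartite case, splitting into the two implications, and to use the realization construction of \cref{thmschroedingerttu} for the hard direction. Throughout I would lean on the commutation identity $\tr_{C_I}({}_X W) = {}_X(\tr_{C_I} W)$, valid for any subsystem $X$ not containing $C_I$, which lets one push the trace over Charlie's input through all trace-and-replace maps.

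For the ``if'' direction I would first show that an assemblage obtained from a causally ordered tripartite process matrix $W^{A\prec B\prec C}$ is a causally ordered TTU-assemblage. Summing over Charlie's outcomes and using $\sum_c M_{c|z}=\1^{C_I}$ gives $\sum_c w_{c|z}=\tr_{C_I}W^{A\prec B\prec C}=:\tilde W$. The key step is to verify that $\tilde W$ is a bipartite process matrix causally ordered from Alice to Bob. Applying $\tr_{C_I}$ to the defining relation $_{C_I}W^{A\prec B\prec C}={}_{B_OC_I}W^{A\prec B\prec C}$ and using the commutation identity yields $\tilde W={}_{B_O}\tilde W$, which is exactly \cref{eqW6}; the same manoeuvre applied to the tripartite validity constraints recovers \crefrange{eqW1}{eqW5} for $\tilde W$, so $\tilde W$ is a genuine causally ordered bipartite process matrix. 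Hence $\sum_c w_{c|z}=W^{A\prec B}$ and the assemblage is causally ordered $A\prec B\prec C$; the order $B\prec A\prec C$ is symmetric, and linearity of \cref{processwttu} extends the conclusion to convex combinations, so every causally separable tripartite process matrix yields a causal TTU-assemblage.

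For the ``only if'' direction I would take a causal TTU-assemblage, decompose it as $q\,w^{A\prec B\prec C}_{c|z}+(1-q)\,w^{B\prec A\prec C}_{c|z}$, and realize each causally ordered component separately. Each component is in particular a valid general TTU-assemblage, so \cref{thmschroedingerttu} supplies POVMs $M_{c|z}=W^{-1/2}w^T_{c|z}W^{-1/2}$ and the purification $W^{ABC}$ of $W:=\sum_c w_{c|z}$ reproducing it, with $\tr_{C_I}W^{ABC}=W$. The crucial remaining point is that when $W=W^{A\prec B}$ is causally ordered, its purification $W^{ABC}$ is causally ordered $A\prec B\prec C$. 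I would check the two defining conditions directly: from $\tr_{C_I}W^{ABC}=W$ one gets $_{C_I}W^{ABC}=W\otimes\frac{\1^{C_I}}{d_{C_I}}$ and $_{B_OC_I}W^{ABC}=({}_{B_O}W)\otimes\frac{\1^{C_I}}{d_{C_I}}$, which coincide precisely because $W={}_{B_O}W$, while the condition $_{B_IB_OC_I}W^{ABC}={}_{A_OB_IB_OC_I}W^{ABC}$ reduces in the same way to $_{B_IB_O}W={}_{A_OB_IB_O}W$, namely \cref{eqW4}, satisfied by every bipartite process matrix. Finally I would glue the two orders together with the complementary-subspace trick of \cref{lmmbehaviour,lmmass2}, letting the orders act on orthogonal input subspaces so that the full convex combination is recovered from a single causally separable tripartite process matrix.

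I expect the main obstacle to be exactly this last verification: confirming that the Schrödinger-type purification of \cref{thmschroedingerttu} preserves causal order, i.e. that imposing $W={}_{B_O}W$ on the traced-out bipartite process matrix forces \emph{both} causal-ordering constraints on $W^{ABC}$ and not merely the validity constraints already guaranteed. The trace-and-replace commutation identity is what makes this bookkeeping go through cleanly, and it is worth stating once explicitly since it is invoked symmetrically in both directions of the proof.
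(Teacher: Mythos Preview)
Your proposal is correct and follows essentially the same approach as the paper: the paper's own proof is a one-line remark that the argument is analogous to \cref{thmschroedingerttu} and that ``$W^{ABC}$ will be the purification of a causally ordered process matrix, and therefore, also causally ordered.'' You have simply unpacked this remark, carrying out explicitly the verification that the purification of $W^{A\prec B}$ satisfies both tripartite causal-ordering constraints via the trace-and-replace commutation identity, and making explicit the complementary-subspace gluing that the paper leaves tacit.
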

The proof is analogous to the one in \cref{thmschroedingerttu}. Notice that in this case $W^{ABC}$ will be the purification of a causally ordered process matrix, and therefore, also causally ordered.

\begin{theorem}
A tripartite process matrix is certified to be causally nonseparable in a semi-device-independent TTU way if and only if it can generate a noncausal TTU-assemblage for some choice of POVMs for Charlie.
\end{theorem}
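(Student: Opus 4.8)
The plan is to follow exactly the template of the bipartite \cref{thmsdi}, replacing the bipartite lemma with its tripartite counterpart \cref{lmmttuass}, and to argue in contrapositive form: a tripartite process matrix $W$ fails to be TTU-certifiable if and only if every TTU-assemblage it generates (ranging over all of Charlie's POVMs) is causal. Both implications then reduce to the two halves of \cref{lmmttuass}.

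For the easy direction ($W$ generates a noncausal assemblage $\Rightarrow$ $W$ is TTU-certifiable): first I would fix a choice of Charlie POVMs $\{M_{c|z}\}$ for which $w^Q_{c|z}=\tr_{C_I}[(\1^{A_IA_OB_IB_O}\otimes M_{c|z})W]$ is noncausal, and then let Alice and Bob perform tomographically complete instruments $\{\overline{A}_{a|x}\}$, $\{\overline{B}_{b|y}\}$, so that the generated behaviour $p^Q(abc|xyz)=\tr[(\overline{A}_{a|x}\otimes \overline{B}_{b|y})\,w^Q_{c|z}]$ determines the assemblage uniquely. Any attempt to reproduce this behaviour by a causally separable $W^\text{sep}$ together with some POVMs $\{M'_{c|z}\}$ would force $\tr_{C_I}[(\1\otimes M'_{c|z})W^\text{sep}]=w^Q_{c|z}$; but by the ``if'' part of \cref{lmmttuass} the left-hand side is necessarily a \emph{causal} TTU-assemblage, contradicting the noncausality of $w^Q_{c|z}$. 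Hence no causally separable reproduction exists and the TTU-certification criterion of the corresponding definition is met.

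For the converse (all assemblages causal $\Rightarrow$ not certifiable): given any behaviour produced by $W$ with the trusted instruments and some Charlie POVMs $\{M_{c|z}\}$, I would rewrite it as $p^Q(abc|xyz)=\tr[(\overline{A}_{a|x}\otimes \overline{B}_{b|y})\,w^Q_{c|z}]$ with $w^Q_{c|z}=\tr_{C_I}[(\1\otimes M_{c|z})W]$. By hypothesis this assemblage is causal, so the ``only if'' part of \cref{lmmttuass} supplies a causally separable process matrix $W^\text{sep}$ and POVMs $\{M'_{c|z}\}$ with $w^Q_{c|z}=\tr_{C_I}[(\1\otimes M'_{c|z})W^\text{sep}]$. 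Substituting back gives $p^Q(abc|xyz)=\tr[(\overline{A}_{a|x}\otimes \overline{B}_{b|y}\otimes M'_{c|z})W^\text{sep}]$, showing the behaviour is consistent with causal separability for every $a,b,c,x,y,z$, so certification is impossible.

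The substantive content is carried entirely by \cref{lmmttuass}, whose ``only if'' direction rests on the purification construction of \cref{thmschroedingerttu}; granting that lemma, the remaining argument is bookkeeping. The one point I would be careful about is the forward direction, where Charlie's POVMs are a free variable on both the data-generating side and the adversary's reproduction side: I must ensure Alice and Bob are tomographically complete so that matching the behaviour genuinely forces matching the assemblage, ruling out the possibility that a different POVM choice for the adversary could fake the same statistics without reproducing the noncausal assemblage.
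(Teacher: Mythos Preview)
Your proof is correct and follows exactly the approach the paper intends: the paper's own proof is simply ``analogous to \cref{thmsdi}'', and you have carried out that analogy via \cref{lmmttuass} precisely as required. If anything, you are more careful than the paper's terse argument: your explicit invocation of tomographically complete instruments for Alice and Bob in the forward direction closes the gap you correctly flag at the end (that matching the behaviour must force matching the assemblage), a point the paper's two-sentence proof of \cref{thmsdi} leaves implicit.
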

The proof is analogous to \cref{thmsdi}.


\subsection{Semi-device-independent -- TUU}

The other semi-device-independent cases follow analogously. We continue with the TUU scenario.

\begin{definition}[Process TUU-assemblage]
A process TUU-assemblage is a set of operators $\{w_{bc|yz}\}$, $w_{bc|yz}\in\set{L}(\set{H}^{A_IA_O})$, for which there exist a tripartite process matrix $W^{A_IA_OB_IB_OC_I}$, a set of instruments $\{B^{B_IB_O}_{b|y}\}$, and a set of POVMs $\{M^{C_I}_{c|z}\}$ such that
\beq\label{eqassprocessTUU}
w^Q_{bc|yz} = \tr_{B_IB_OC_I}\left[(\1^{A_IA_O}\otimes B^{B_IB_O}_{b|y}\otimes M^{C_I}_{c|z})W^{A_IA_OB_IB_OC_I}\right], \ \ \ \forall \ b,c,y,z.
\eeq
\end{definition}

\begin{definition}[General TUU-assemblage]
A general TUU-assemblage is a set of operators $\{w_{bc|yz}\}$, $w_{bc|yz}\in\set{L}(\set{H}^{A_IA_O})$, that satisfies
\begin{align}
w_{bc|yz} &\geq 0 \ \ \ \forall \ b,c,y,z \\
\tr \sum_{b,c} w_{bc|yz} &= d_{A_O} \ \ \ \forall \ y,z \\
\sum_c w_{bc|yz} &= \sum_c w_{bc|yz'} \ \ \ \forall \ b,y,z,z' \\
\sum_{b,c} w_{bc|yz} &= _{A_O} \sum_{b,c} w_{bc|yz} \ \ \ \forall \ y,z.
\end{align}
\end{definition}

Intuitively, behaviours are extracted from TUU-assemblages according to
\beq
p(abc|xyz) = \tr\left(A_{a|x} \ w_{bc|yz}\right) \ \ \ \forall \ a,b,c,x,y,z.
\eeq

In the bipartite case, we have proven that not all general bipartite assemblages can be realized by process matrices (are process bipartite assemblages), while in the previous tripartite case, TTU, we have proven that all general TTU-assemblages can indeed be realized by process matrices (are process TTU-assemblages). However, in this tripartite scenario, TUU, as well as in the remaining cases, UTT and UUT, it is not clear whether all general TUU-, UTT-, and UUT-assemblages can be realized by process matrices. We leave this problem as an open question.

Nevertheless, all process TUU-assemblages are valid general TUU-assemblages.

We now define our notion of causality for TUU-assemblages.

\begin{definition}[Causal TUU-assemblage]\label{defcausalTUUass}
A TUU-assemblage is causally ordered from Alice to Bob to Charlie if it satisfies
\beq
\sum_{b,c} w^{A\prec B\prec C}_{bc|yz} = \sum_{b,c} w^{A\prec B\prec C}_{bc|y'z'}  \ \ \ \forall \ y,y',z,z',
\eeq
and from Bob to Alice to Charlie if it satisfies
\beq
\sum_c w^{B\prec A\prec C}_{bc|yz} = _{A_O} \sum_c w^{B\prec A\prec C}_{bc|yz}  \ \ \ \forall \ b,y,z.
\eeq

A TUU-assemblage $\{w^{\text{causal}}_{bc|yz}\}$ is causal if it can be expressed as a convex combination of causally ordered TUU-assemblages, \textit{i.e.},
\beq\label{eqTUUcausal}
w^{\text{causal}}_{bc|yz} \coloneqq q w^{A\prec B\prec C}_{bc|yz} + (1-q) w^{B\prec A\prec C}_{bc|yz} \ \ \ \forall \ b,c,y,z,
\eeq
where $0\leq q\leq 1$ is a real number. A TUU-assemblage that does not satisfy \cref{eqTUUcausal} is called a noncausal TUU-assemblage.
\end{definition}

Notice that one can decide whether a TUU-assemblage is causal by means of an SDP.

All causally separable tripartite process matrices lead to causal TUU-assemblages, for whatever choice of instruments. Whether all causal TUU-assemblages can be written in terms of causally separable process matrices is not clear, although for the particular case of assemblages that are causally ordered from Alice to Bob to Charlie, we can show this to be the case. 

Here we present our explicit construction of any TUU-assemblage that is causally ordered from Alice to Bob to Charlie by a tripartite process matrix that is causally ordered from Alice to Bob to Charlie.

Given a TUU-assemblage $\{w^{A\prec B\prec C}_{bc|yz}\}$, we can define the state $\rho$ such that $\sum_{b,c} w^{A\prec B\prec C}_{bc|yz} = \rho^{A_I} \otimes \1^{A_O}$. Since $\rho\geq0$ it can be written as $\rho=\sum_i \mu_i \ketbra{i}{i}$, which can be purified by $\ket{psi}=\sum_{i}\sqrt{\mu_i}\ket{ii}$. Then, let $\{B^{A\prec B\prec C}_{b|y}\}$ and $\{M^{A\prec B\prec C}_{c|z}\}$ be instruments
\begin{align}
B^{A\prec B\prec C}_{b|y} &=\1^{B_I} \otimes \ketbra{by}{by}^{B_O} \\
M^{A\prec B\prec C}_{c|z} &= \sum_{b,y} \left(\rho^{-\frac{1}{2}\, C^{'}_I}\otimes\1^{C^{''}_I} \right) w^{T \ A\prec B\prec C \ C^{'}_IC^{''}_I}_{bc|yz} \left(\rho^{-\frac{1}{2}\, C^{'}_I}\otimes\1^{C^{''}_I} \right) \otimes \ketbra{by}{by}^{C^{'''}_I},
\end{align}
where $\rho^{-1}$ be the inverse of $\rho$ on its support, the transpose $^T$ is taken on the $\{\ket{i}\}_i$ basis  and $\set{H}^{C_I} = \set{H}^{C^{'}_I}\otimes\set{H}^{C^{''}_I}\otimes\set{H}^{C^{''}_I}$. Let 
\beq
W^{A\prec B\prec C} = \ketbra{\psi}{\psi}^{C^{'}_IA_I}\otimes\ketbra{\Phi^+}{\Phi^+}^{C^{''}_IA_O}\otimes\frac{\1^{B_I}}{d_{B_I}}\otimes\ketbra{\Phi^+}{\Phi^+}^{C^{'''}_IB_O}
\eeq
be a tripartite process matrix that is causally ordered from Alice to Bob to Charlie. Then, it is true that the assemblage $\{w^{A\prec B}_{bc|yz}\}$ can be recovered by
\beq
\tr_{B_IB_OC_I}\left[(\1^{A}\otimes B^{A\prec B\prec C}_{b|y}\otimes M^{A\prec B\prec C}_{c|z})W^{A\prec B\prec C}\right] = w^{A\prec B\prec C}_{bc|yz}.
\eeq


\subsection{Semi-device-independent -- UTT}

Here we detail our concepts and definitions for the UTT scenario.

\begin{definition}[Process UTT-assemblage]
A process UTT-assemblage is a set of operators $\{w_{a|x}\}$, $w_{a|x}\in\set{L}(\set{H}^{B_IB_O}\otimes\set{H}^{C_I})$, for which there exist a tripartite process matrix $W^{A_IA_OB_IB_OC_I}$ and a set of instruments $\{A^{A_IA_O}_{a|x}\}$ such that
\beq\label{eqprocessassutt}
w^Q_{a|x} = \tr_{A_IA_O}\left[(A^{A_IA_O}_{a|x}\otimes \1^{B_IB_O}\otimes \1^{C_I})W^{A_IA_OB_IB_OC_I}\right], \ \ \ \forall \ a,x.
\eeq
\end{definition}

\begin{definition}[General UTT-assemblage]
A general UTT-assemblage is a set of operators $\{w_{a|x}\}$, $w_{a|x}\in\set{L}(\set{H}^{B_IB_O}\otimes\set{H}^{C_I})$, that satisfies
\begin{align}
w_{a|x} &\geq 0 \ \ \ \forall \ a,x \\
\tr \sum_a w_{a|x} &= d_{B_O} \ \ \ \forall \ x \\
_{C_I}\sum_a w_{a|x} &= _{B_OC_I} \sum_a w_{a|x} \ \ \ \forall \ x.
\end{align}
\end{definition}

Intuitively, behaviours are extracted from UTT-assemblages according to
\beq
p(abc|xyz) = \tr\left[w_{a|x} (B_{b|y}\otimes M_{c|z})\right] \ \ \ \forall \ a,b,c,x,y,z.
\eeq

\begin{definition}[Causal UTT-assemblage]\label{defcausalUTTass}
A UTT-assemblage is causally ordered from Alice to Bob to Charlie if it satisfies
\beq
_{C_I}w^{A\prec B\prec C}_{a|x} = _{B_OC_I}w^{A\prec B\prec C}_{a|x} \ \ \ \forall \ a,x,
\eeq
and from Bob to Alice to Charlie if it satisfies
\beq
_{C_I}\sum_a w^{B\prec A\prec C}_{a|x} = _{C_I}\sum_a w^{B\prec A\prec C}_{a|x'} \ \ \ \forall \ x,x'.
\eeq

A UTT-assemblage $\{w^{\text{causal}}_{a|x}\}$ is causal if it can be expressed as a convex combination of causally ordered UTT-assemblages, \textit{i.e.},
\beq\label{eqUTTcausal}
w^{\text{causal}}_{a|x} \coloneqq q w^{A\prec B\prec C}_{a|x} + (1-q) w^{B\prec A\prec C}_{a|x} \ \ \ \forall \ a,x,
\eeq
where $0\leq q\leq 1$ is a real number. A UTT-assemblage that does not satisfy \cref{eqUTTcausal} is called a noncausal UTT-assemblage.
\end{definition}

Notice that one can decide whether a UTT-assemblage is causal by means of an SDP.

It is not clear whether all general UTT-assemblages can be realized by tripartite process matrices nor whether all causal UTT-assemblages can be realized by causally separable tripartite process matrices. 
Nevertheless, the set of general UTT-assemblages is an outer approximation of the set of process UUT-assemblages and the set of causal UUT-assemblages is an outer approximation of the set of process UTT-assemblages that come from causally separable process matrices. What is not clear is whether these approximations are tight.


\subsection{Semi-device-independent -- UUT}

The final tripartite semi-device independent scenario studied in this work is the UUT scenario.

\begin{definition}[Process UUT-assemblage]
A process UUT-assemblage is a set of operators $\{w_{ab|xy}\}$, $w_{ab|xy}\in\set{L}(\set{H}^{C_I})$, for which there exist a tripartite process matrix $W^{A_IA_OB_IB_OC_I}$ and sets of instruments $\{A^{A_IA_O}_{a|x}\}$, $\{B^{B_IB_O}_{b|y}\}$ such that
\beq
w^Q_{ab|xy} = \tr_{A_IA_OB_IB_O}\left[(A^{A_IA_O}_{a|x}\otimes B^{B_IB_O}_{b|y}\otimes \1^{C_I})W^{A_IA_OB_IB_OC_I}\right], \ \ \ \forall \ a,b,x,y.
\eeq
\end{definition}

\begin{definition}[General UUT-assemblage]
A general UUT-assemblage is a set of operators $\{w_{ab|xy}\}$, $w_{ab|xy}\in\set{L}(\set{H}^{C_I})$, that satisfies
\begin{align}
w_{ab|xy} &\geq 0 \ \ \ \forall \ a,b,x,y \\
\tr \sum_{a,b} w_{ab|xy} &= 1 \ \ \ \forall \ x,y.
\end{align}
\end{definition}

Intuitively, behaviours are extracted from UUT-assemblages according to
\beq
p(abc|xyz) = \tr\left(w_{ab|xy} \ M_{c|z}\right) \ \ \ \forall \ a,b,c,x,y,z.
\eeq

The set of general UUT-assemblages is an outer approximation of the set of process UUT-assemblages but it is not clear to us whether this approximation is tight, \textit{i.e.}, it is not clear whether or not all general UUT-assemblages can be obtained by tripartite process matrices. 

\begin{definition}[Causal UUT-assemblage]
A UUT-assemblage is causally ordered from Alice to Bob to Charlie if it satisfies
\beq
\tr \sum_b w^{A\prec B\prec C}_{ab|xy} = \tr \sum_b w^{A\prec B\prec C}_{ab|xy'} \ \ \ \forall \ a,x,y,y',
\eeq
and from Bob to Alice to Charlie if it satisfies
\beq
\tr \sum_a w^{B\prec A\prec C}_{ab|xy} = \tr \sum_a w^{B\prec A\prec C}_{ab|x'y}  \ \ \ \forall \ b,x,x',y.
\eeq

A UUT-assemblage $\{w^{\text{causal}}_{ab|xy}\}$ is causal if it can be expressed as a convex combination of causally ordered UUT-assemblages, \textit{i.e.},
\beq\label{eqUUTcausal}
w^{\text{causal}}_{ab|xy} \coloneqq q w^{A\prec B\prec C}_{ab|xy} + (1-q) w^{B\prec A\prec C}_{ab|xy} \ \ \ \forall \ a,b,x,y,
\eeq
where $0\leq q\leq 1$ is a real number. A UUT-assemblage that does not satisfy \cref{eqUUTcausal} is called a noncausal UUT-assemblage.
\end{definition}

Notice that one can decide whether a UUT-assemblage is causal by means of an SDP. 

For the case of causal UUT-assemblages, we prove that our approximation is indeed tight, \textit{i.e.}, that all causal UUT-assemblages can be realized by causal tripartite process matrix, analogously to \cref{lmmass} in the bipartite case and \cref{lmmttuass} in the TTU tripartite case.

\begin{lemma}
A UUT-assemblage is causal if and only if it is a process UUT-assemblage that can be obtained from a causal tripartite process matrix.
\end{lemma}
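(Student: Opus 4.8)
The plan is to prove both implications, mirroring the bipartite \cref{lmmass} and the TTU case (\cref{lmmttuass}). For the \emph{if} direction, I would first show that a UUT-assemblage produced by a process matrix $W^{A\prec B\prec C}$ that is causally ordered from Alice to Bob to Charlie is itself causally ordered from Alice to Bob to Charlie. Starting from $w_{ab|xy}=\tr_{A_IA_OB_IB_O}\De{(A_{a|x}\otimes B_{b|y}\otimes\1^{C_I})W^{A\prec B\prec C}}$, I would compute $\tr\sum_b w_{ab|xy}=\tr\De{(A_{a|x}\otimes B_y\otimes\1^{C_I})W^{A\prec B\prec C}}$ with $B_y\coloneqq\sum_b B_{b|y}$ the Choi operator of a CPTP map, so that $\tr_{B_O}B_y=\1^{B_I}$. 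Using $\tr\De{(X\otimes\1^{C_I})W}=\tr\De{X\,\tr_{C_I}W}$ together with the ordering condition \cref{eqWABC}, the reduced operator $\tr_{C_I}W^{A\prec B\prec C}$ satisfies the bipartite relation $W={}_{B_O}W$; contracting it against $B_y$ then replaces $B_y$ by $\tr_{B_O}B_y=\1^{B_I}$ and removes all dependence on $y$, exactly as in \cref{lmmbehaviour}. Hence $\tr\sum_b w_{ab|xy}$ is independent of $y$, which is the defining condition of a UUT-assemblage causally ordered from Alice to Bob to Charlie; the symmetric argument handles $B\prec A\prec C$. Linearity then shows that a causally separable $W^\text{sep}=qW^{A\prec B\prec C}+(1-q)W^{B\prec A\prec C}$ yields a convex combination of causally ordered assemblages, \emph{i.e.} a causal assemblage in the sense of \cref{eqUUTcausal}.

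For the \emph{only if} direction, I would decompose a causal UUT-assemblage as $w^\text{causal}_{ab|xy}=q\,w^{A\prec B\prec C}_{ab|xy}+(1-q)\,w^{B\prec A\prec C}_{ab|xy}$ and realise each ordered piece separately. For the $A\prec B\prec C$ piece, the ordering condition guarantees that $p_A(a|x)\coloneqq\tr\sum_b w^{A\prec B\prec C}_{ab|xy}$ is independent of $y$, so that a Bayes-type factorisation $\tr w^{A\prec B\prec C}_{ab|xy}=p_A(a|x)\,p_B(b|axy)$ holds with $\sum_a p_A(a|x)=1$ and $\sum_b p_B(b|axy)=1$. Writing $w^{A\prec B\prec C}_{ab|xy}=p_A(a|x)\,p_B(b|axy)\,\rho_{ab|xy}$ for normalised states $\rho_{ab|xy}\in\set{L}(\set{H}^{C_I})$, I would take the classical-routing construction
\begin{align}
A^{A\prec B\prec C}_{a|x}&=\1^{A_I}\otimes p_A(a|x)\ketbra{ax}{ax}^{A_O}, \\
B^{A\prec B\prec C}_{b|y}&=\sum_{a,x}p_B(b|axy)\ketbra{ax}{ax}^{B_I}\otimes\rho^{T\,B_O}_{ab|xy}, \\
W^{A\prec B\prec C}&=\frac{\1^{A_I}}{d_{A_I}}\otimes\ketbra{\Phi^+}{\Phi^+}^{A_OB_I}\otimes\ketbra{\Phi^+}{\Phi^+}^{B_OC_I},
\end{align}
where ${}^T$ is the transpose in the Schmidt basis of $\ket{\Phi^+}$. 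A direct contraction — the channel $A_O\to B_I$ teleports the flag $\ketbra{ax}{ax}$, Bob reads it and emits $\rho^{T}_{ab|xy}$ on $B_O$, and the channel $B_O\to C_I$ teleports it back to $\rho_{ab|xy}$ on Charlie's input — reproduces $w^{A\prec B\prec C}_{ab|xy}$. The order $B\prec A\prec C$ is handled by the symmetric construction, and the two orderings are combined by letting them act on complementary flag subspaces $\ketbra{00}{00}^{A'_IB'_I}$ and $\ketbra{11}{11}^{A'_IB'_I}$ of enlarged input spaces $\set{H}^{A_I}\otimes\set{H}^{A'_I}$ and $\set{H}^{B_I}\otimes\set{H}^{B'_I}$, exactly as in the proof of \cref{lmmbehaviour}, so that the cross terms cancel and the resulting causally separable process matrix recovers $w^\text{causal}_{ab|xy}$.

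The main obstacle I expect is in this \emph{only if} direction, namely checking that the operations constructed for the untrusted parties are genuine instruments. Positivity and Bob's trace-preservation are immediate ($\rho^{T}_{ab|xy}\geq 0$ and $\sum_b p_B(b|axy)=1$, $\tr\rho_{ab|xy}=1$ give $\tr_{B_O}\sum_b B^{A\prec B\prec C}_{b|y}=\sum_{a,x}\ketbra{ax}{ax}^{B_I}=\1^{B_I}$ once $\set{H}^{B_I}$ is taken large enough to host the flag basis). The delicate point is Alice's trace-preservation, $\tr_{A_O}\sum_a A^{A\prec B\prec C}_{a|x}=\1^{A_I}$, which reduces to $\sum_a p_A(a|x)=1$; this holds \emph{precisely because} $p_A(a|x)=\tr\sum_b w^{A\prec B\prec C}_{ab|xy}$ is a well-defined $y$-independent probability distribution, a property that is exactly the causal-ordering constraint on the assemblage. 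Thus the ordering condition of the definition of causal UUT-assemblages is what makes the quantum realisation possible, and the analogue of \cref{thmsdi} for this scenario follows.
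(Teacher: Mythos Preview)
Your proposal is correct and follows essentially the same approach as the paper: the \emph{if} direction uses \cref{eqWABC} to show $\tr\sum_b w_{ab|xy}$ is $y$-independent, and the \emph{only if} direction uses the identical Bayes-type factorisation $w^{A\prec B\prec C}_{ab|xy}=p_A(a|x)p_B(b|axy)\rho_{ab|xy}$, the same flag-routing instruments $A_{a|x}=\1\otimes p_A(a|x)\ketbra{ax}{ax}$ and $B_{b|y}=\sum_{a,x}p_B(b|axy)\ketbra{ax}{ax}\otimes\rho^T_{ab|xy}$, the same process matrix $W^{A\prec B\prec C}=\frac{\1}{d_{A_I}}\otimes\ketbra{\Phi^+}{\Phi^+}\otimes\ketbra{\Phi^+}{\Phi^+}$, and the same complementary-subspace trick to combine the two orders. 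Your added verification that the constructed instruments are genuinely trace-preserving is more explicit than the paper's presentation but changes nothing in substance.
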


\begin{proof}
We begin by showing that all UUT-assemblages that come from a causal process matrix are causal. Let $\{w_{ab|xy}\}$ be such that
\beq
w_{ab|xy} = \tr_{A_IA_OB_IB_O}\left[(A^{A_IA_O}_{a|x}\otimes B^{B_IB_O}_{b|y}\otimes \1^{C_I})W^{A\prec B\prec C}\right], \ \ \ \forall \ a,b,x,y.
\eeq
Then, using \cref{eqWABC}, which is $_{C_I} W^{A\prec B\prec C}= _{B_OC_I} W^{A\prec B\prec C}$, it is possible to deduce that $\tr\sum_b w_{ab|xy}$ is independent of $y$, and hence $\{w_{ab|xy}\}=\{w^{A\prec B\prec C}_{ab|xy}\}$ is causally ordered. The equivalent is true for the order $B\prec A\prec C$. 

To prove the only if part we show that every causal UUT-assemblage can be reproduced by acting with some instruments on a causal process matrix. Given a causal UUT-assemblage, it can be decomposed into $\{w^{A\prec B\prec C}_{ab|xy}\}$ and $\{w^{B\prec A\prec C}_{ab|xy}\}$ with some convex weight $q$.

From $\{w^{A\prec B\prec C}_{ab|xy}\}$, we define\footnote{If $\tr(w^{A\prec B\prec C}_{ab|xy})=0$, we define $\rho_{ab|xy}$ as the null operator.}
\begin{align}
p^{A\prec B\prec C}(ab|xy) &\coloneqq \tr(w^{A\prec B\prec C}_{ab|xy}) \ \ \ \forall \ a,b,x,y, \\ 
p_A^{A\prec B\prec C}(a|x) &\coloneqq \sum_b \tr(w^{A\prec B\prec C}_{ab|xy}) \ \ \ \forall \ a,x, \\
\rho_{ab|xy} &\coloneqq \frac{w^{A\prec B\prec C}_{ab|xy}}{\tr(w^{A\prec B\prec C}_{ab|xy})} \ \ \ \forall \ a,b,x,y,
\end{align}
Using Bayes' rule we also have $p_B^{A\prec B\prec C}(b|axy)=p^{A\prec B\prec C}(ab|xy)/p_A^{A\prec B\prec C}(a|x)$. With this given quantities, we can construct instruments $\{A^{A\prec B\prec C}_{a|x}\}$ and $\{B^{A\prec B\prec C}_{b|y}\}$,
\begin{align}
A^{A\prec B\prec C}_{a|x} &= \1^{A_I}\otimes p_A^{A\prec B\prec C}(a|x)\ketbra{ax}{ax}^{A_O} \ \ \  \forall \ a,x \\
B^{A\prec B\prec C}_{b|y} &= \sum_{a,x}p_B^{A\prec B\prec C}(b|axy)\ketbra{ax}{ax}^{B_I}\otimes\rho_{ab|xy}^{T \ B_O} \ \ \ \forall \ b,y, 
\end{align}
and also the process matrix
\beq
W^{A\prec B\prec C} = \frac{\1^{A_I}}{d_{A_I}}\otimes\ketbra{\Phi^{+}}{\Phi^{+}}^{A_OB_I}\otimes\ketbra{\Phi^+}{\Phi^+}^{B_OC_I},
\eeq
and check that
\begin{align}
\tr_{A_IA_OB_IB_O}&\left[(A^{A\prec B\prec C}_{a|x}\otimes B^{A\prec B\prec C}_{b|y}\otimes \1^{C_I})W^{A\prec B\prec C}\right] =\\
&= p_A^{A\prec B\prec C}(a|x)p_B^{A\prec B\prec C}(b|axy)\rho_{ab|xy} \\
&=w^{A\prec B\prec C}_{ab|xy}
\end{align}
for every $a,b,x,y$. Analogously, the same holds for $\{w^{B\prec A\prec C}_{ab|xy}\}$.

Finally, just like in the proof of \cref{lmmbehaviour}, by allowing the different causal orders to act in complementary subspaces, we can recover any convex combinations of causally ordered assemblages, \textit{i.e.}, causal assemblages.
\end{proof}

\begin{theorem}
A tripartite process matrix is certified to be causally nonseparable in a semi-device-independent UUT way if and only if it can generate a noncausal UUT-assemblage for some choice of instruments for Alice and Bob.
\end{theorem}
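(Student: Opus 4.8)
The plan is to mirror the bipartite argument of \cref{thmsdi}, now invoking the preceding lemma (the UUT analogue of \cref{lmmass}), which characterises causal UUT-assemblages exactly as those process UUT-assemblages that arise from causally separable tripartite process matrices. The first step is to rephrase the UUT certification task directly in terms of assemblages, exactly as \textbf{Definition 3'} did in the bipartite case. Since Charlie is the trusted party, I would absorb Alice's and Bob's unknown operations into the UUT-assemblage $w_{ab|xy}\in\set{L}(\set{H}^{C_I})$, so that the behaviour reads $p^Q(abc|xyz)=\tr(w_{ab|xy}\,\overline{M}_{c|z})$, and certification amounts to asking whether this behaviour can be reproduced by some \emph{causal} UUT-assemblage acted upon by Charlie's known POVMs.

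The crux of this reformulation is the observation that $p^Q(abc|x,y,\overline{M}_{c|z})=\tr[(A_{a|x}\otimes B_{b|y}\otimes \overline{M}_{c|z})W^\text{sep}]$ holds for some causally separable $W^\text{sep}$ and some instruments $\{A_{a|x}\},\{B_{b|y}\}$ if and only if the induced assemblage $w^\text{sep}_{ab|xy}=\tr_{A_IA_OB_IB_O}[(A_{a|x}\otimes B_{b|y}\otimes\1^{C_I})W^\text{sep}]$ reproduces the behaviour. By the preceding lemma these $w^\text{sep}_{ab|xy}$ are precisely the causal UUT-assemblages. Hence the behaviour is reproducible by a causally separable process matrix exactly when it is reproducible by a causal UUT-assemblage, which establishes the equivalence between the behaviour-based and the assemblage-based formulations of the certification task.

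With this in hand the biconditional follows directly. For the ``only if'' direction (noncertifiability) I would argue the contrapositive: if $W$ never produces a noncausal UUT-assemblage, then for any instruments of Alice and Bob the generated assemblage is causal, and by the lemma it equals $w^\text{sep}_{ab|xy}$ for some causally separable $W^\text{sep}$; feeding this through Charlie's POVMs reproduces the behaviour, so the certification inequality fails for every $a,b,c,x,y,z$ and $W$ cannot be certified. For the ``if'' direction, if $W$ generates a noncausal UUT-assemblage for some instruments, I would take Charlie's trusted POVMs to be tomographically complete, so that the behaviour determines the assemblage; the contrapositive of the lemma then forbids any causally separable process matrix, with any instruments, from reproducing this noncausal assemblage, whence no causal assemblage reproduces the behaviour either, and certification succeeds.

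The main delicate point is the faithfulness of the reduction from behaviours to assemblages: one must ensure that trusting Charlie does not discard the information distinguishing a noncausal assemblage from a causal one. This is handled by the rephrased definition together with the lemma's ``if and only if'' --- what matters is the \emph{feasibility} of reproducing the behaviour by \emph{some} causal assemblage, not the identity of the underlying assemblage --- so that taking Charlie's POVMs tomographically complete, which is legitimate for a characterised trusted party, makes the behaviour and the assemblage carry the same information. The remaining ingredient, the explicit construction of instruments and a causally separable tripartite process matrix realising any causal UUT-assemblage, is exactly the content of the ``only if'' part of the preceding lemma and is what powers the noncertifiability direction; no computation beyond that construction is required.
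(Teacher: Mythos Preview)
Your proposal is correct and follows essentially the same approach as the paper, which simply states that the proof is analogous to \cref{thmsdi} (itself relying on the assemblage reformulation of \textbf{Definition~3'} and \cref{lmmass}). You have in fact spelled out more detail than the paper does---in particular the explicit use of tomographically complete POVMs for Charlie to handle the ``if'' direction---but this is exactly the mechanism the paper invokes in the surrounding discussion, so there is no substantive divergence.
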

The proof is analogous to \cref{thmsdi}.


\section{Proof that the quantum switch processes are causal in the UUT scenario}\label{app:switchUUTcausal}

In this appendix we prove \cref{thmswitch}, which we reestate for the convenience of the reader.
\setcounter{theorem}{6}

\begin{theorem}\label{thmswitch2}
The quantum switch processes cannot be certified to be causally nonseparable on a semi-device-independent scenario where Alice and Bob are untrusted and Charlie is trusted (UUT).

Moreover, any tripartite process matrix $W\in\set{L}(\set{H}^{A_IA_OB_IB_OC_I})$, with Charlie in the future of Alice and Bob, that satisfies
\begin{align}
\begin{split}
\tr[(A^{A_IA_O}_{a|x}\otimes &B^{B_IB_O}_{b|y}\otimes \1^{C_I})W^{A_IA_OB_IB_OC_I}] = \\
&q p^{A\prec B}(ab|xy) + (1-q) p^{B\prec A}(ab|xy),
\end{split}
\end{align}
for all $a,b,x,y$, where $0 \leq q \leq1$ is a real number, cannot be certified to be noncausal in a UUT scenario.
\end{theorem}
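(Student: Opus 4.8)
The plan is to reduce the statement to the assemblage characterisation of the UUT scenario and then to exhibit, for every choice of Alice's and Bob's instruments, an explicit causal decomposition of the UUT-assemblage produced by $W$. By the UUT analogue of \cref{thmsdi} (the final theorem of the UUT subsection), a tripartite process matrix can be certified to be causally nonseparable in the UUT scenario if and only if it generates a noncausal UUT-assemblage for some instruments of Alice and Bob. Hence it suffices to prove that \emph{every} UUT-assemblage arising from $W$ is causal.

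First I would fix arbitrary instruments $\{A_{a|x}\}$ and $\{B_{b|y}\}$ and form the UUT-assemblage $w_{ab|xy}=\tr_{A_IA_OB_IB_O}[(A_{a|x}\otimes B_{b|y}\otimes\1^{C_I})W]$, observing that its trace reproduces exactly the bipartite object appearing in the hypothesis,
\[
\tr(w_{ab|xy})=\tr[(A_{a|x}\otimes B_{b|y}\otimes\1^{C_I})W]=q\,p^{A\prec B}(ab|xy)+(1-q)\,p^{B\prec A}(ab|xy).
\]
The key structural observation, which I would establish next, is that the causal-order constraints for UUT-assemblages involve \emph{only} traces of the operators: since Charlie is in the common future and his output space is trivial, summing over his outcome replaces his POVM element by $\1^{C_I}$, so that Alice's and Bob's marginals are governed by $\tr\sum_b w_{ab|xy}$ and $\tr\sum_a w_{ab|xy}$ respectively. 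This is precisely what allows the scalar-level causal decomposition of $\tr(w_{ab|xy})$ to be lifted to an operator-level decomposition of the assemblage.

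I would then define, for the indices with $\tr(w_{ab|xy})\neq 0$,
\[
w^{A\prec B\prec C}_{ab|xy}:=\frac{p^{A\prec B}(ab|xy)}{\tr(w_{ab|xy})}\,w_{ab|xy},\qquad w^{B\prec A\prec C}_{ab|xy}:=\frac{p^{B\prec A}(ab|xy)}{\tr(w_{ab|xy})}\,w_{ab|xy},
\]
setting both to zero when $\tr(w_{ab|xy})=0$; for $0<q<1$ a vanishing trace forces $p^{A\prec B}(ab|xy)=p^{B\prec A}(ab|xy)=0$, so this is consistent, while the boundary cases $q\in\{0,1\}$ are immediate because then $w_{ab|xy}$ is already causally ordered. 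A direct check shows these are valid causally ordered UUT-assemblages: positivity is inherited, normalisation follows from $\sum_{a,b}p^{A\prec B}(ab|xy)=\sum_{a,b}p^{B\prec A}(ab|xy)=1$, and the convex combination $q\,w^{A\prec B\prec C}_{ab|xy}+(1-q)\,w^{B\prec A\prec C}_{ab|xy}$ returns $w_{ab|xy}$. Crucially, $\tr\sum_b w^{A\prec B\prec C}_{ab|xy}=\sum_b p^{A\prec B}(ab|xy)$ is independent of $y$ because $p^{A\prec B}$ is causally ordered from Alice to Bob, and likewise $\tr\sum_a w^{B\prec A\prec C}_{ab|xy}$ is independent of $x$; by the trace characterisation of the previous paragraph, both pieces satisfy the causal-order conditions, so $\{w_{ab|xy}\}$ is causal.

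Since the instruments were arbitrary, $W$ generates only causal UUT-assemblages and therefore cannot be certified in the UUT scenario, establishing the ``moreover'' clause. The first part of the theorem then follows by specialisation: the quantum switch processes satisfy the hypothesis with $q=|\alpha|^2$, since $\tr_{C_I}W_\text{switch}(\psi,\alpha,\beta)=q\,W^{A\prec B}+(1-q)\,W^{B\prec A}$. The main obstacle I anticipate is not the algebra but making the reduction of the third paragraph watertight, that is, verifying cleanly that the UUT causal-order constraints reduce to trace conditions and that the degenerate zero-trace indices never spoil either the normalisation or the ordering of the two constructed components.
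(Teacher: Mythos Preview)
Your proposal is correct and follows essentially the same route as the paper's proof: both factor each assemblage element as its trace times a normalised state $\rho_{ab|xy}$, split it as $q\,p^{A\prec B}(ab|xy)\rho_{ab|xy}+(1-q)\,p^{B\prec A}(ab|xy)\rho_{ab|xy}$, and then verify the UUT causal-order conditions by computing $\tr\sum_b$ and $\tr\sum_a$ of each piece. Your treatment of the zero-trace indices and the boundary cases $q\in\{0,1\}$ is in fact more explicit than the paper's, but the underlying idea is identical.
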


\begin{proof}
Let 
\beq
w^\text{switch}_{ab|xy} := \tr_{A_IA_OB_IB_O}\left[(A^{A_IA_O}_{a|x}\otimes B^{B_IB_O}_{b|y}\otimes \1^{C_I})W^\text{switch}\right]
\eeq
be a UUT-assemblage generated by the quantum switch. We define
\beq 
p(ab|xy) := \tr(w^\text{switch}_{ab|xy})
\eeq
and\footnote{If $\tr(w^\text{switch}_{ab|xy})=0$, we define $\rho_{ab|xy}$ as the null operator.}
\beq
\rho_{ab|xy} := \frac{w^\text{switch}_{ab|xy}}{p(ab|xy)}.
\eeq

Given that
\begin{align}
p(ab|xy) &= \tr(w^\text{switch}_{ab|xy}) \\
&= \tr\left[(A^{A_IA_O}_{a|x}\otimes B^{B_IB_O}_{b|y}\otimes \1^{C_I})W^\text{switch}\right] \\
&= q p^{A\prec B}(ab|xy) + (1-q) p^{B\prec A}(ab|xy),
\end{align}
since the quantum switch is device-independent causal, one can write
\begin{align}
w^\text{switch}_{ab|xy} &= p(ab|xy)\rho_{ab|xy} \\
&= q p^{A\prec B}(ab|xy)\rho_{ab|xy} + (1-q) p^{B\prec A}(ab|xy)\rho_{ab|xy}.
\end{align}
One can verify that the first term satisfies
\beq
\tr \Big(\sum_b p^{A\prec B}(ab|xy)\rho_{ab|xy}\Big) = \sum_b p^{A\prec B}(ab|xy) \tr(\rho_{ab|xy}) = p(a|x)
\eeq
and
\beq
\tr \Big(\sum_a p^{B\prec A}(ab|xy)\rho_{ab|xy}\Big) = \sum_a p^{B\prec A}(ab|xy) \tr(\rho_{ab|xy}) = p(b|y).
\eeq
Hence,
\beq
p^{A\prec B}(ab|xy)\rho_{ab|xy} = w^{A\prec B}_{ab|xy}
\eeq
is an assemblage causally ordered from Alice to Bob and
\beq
p^{B\prec A}(ab|xy)\rho_{ab|xy} = w^{B\prec A}_{ab|xy}
\eeq
is an assemblage causally ordered from Bob to Alice. Consequently,
\beq
w^\text{switch}_{ab|xy} = q w^{A\prec B}_{ab|xy} + (1-q) w^{B\prec A}_{ab|xy}
\eeq
is UUT causal for all instruments of Alice and Bob.

This proof holds for all process matrices W for which
\beq
p(ab|xy) = \tr\left[(A^{A_IA_O}_{a|x}\otimes B^{B_IB_O}_{b|y}\otimes \1^{C_I})W\right] \in \text{CAUSAL}.
\eeq
\end{proof}


\section{Analysis of experimental implementations of the quantum switch}\label{app:experimental}


In this appendix we present more details of our analysis of the experimental results involving the quantum switch reported in refs.~\cite{rubino17,goswami18-1}. Among other assumptions, both papers  consider a device-dependent scenario, where the analysis is made by assuming complete knowledge of all instruments involved (although no assumption is made about the process matrix). Here, we show that the experiment reported in both papers could have also certified indefinite causal order in a semi-device-independent scenario, where no assumptions are made about the instruments of one of the parties.	

We remark that (device-dependent) causal witnesses, including the ones of refs.~\cite{rubino17,goswami18-1}, are derived assuming that all instruments are implemented perfectly. However, due to experimental imperfections, this assumption is seldom true. One might then obtain a noncausal -- even nonprocess -- behaviour simply because the implemented instruments are not the ideal ones, rather than the data being genuinely noncausal (or nonprocess). Indeed, we have analysed the experimental data\footnote{We have analysed the data points of this experiment without taking the error bars into consideration.} collected in ref.~\cite{rubino17} and verified that if the instruments performed by Alice, Bob, and Charlie are trusted, the experimental behaviour is not a process behaviour. That is, there does not exist any process matrix $W$, causally separable or not, that can generate the experimental behaviour consistently with the assumed instruments\footnote{Moreover, even if we drop the assumption about the knowledge of the instrument performed by Charlie, there is no process matrix which is consistent with the experimental data collected.}. To properly anaylse experimental data we would need to allow some leeway in the instruments and probabilities, but developing the methods for that is beyond the scope of this paper.

Therefore, in order to avoid potential false positive results due to experimental error, we have considered the statistics one would have obtained in an ideal version of the experiment instead of considering the data collected on the actual experiments.
\\	

We start by analysing the device-dependent experiment described in ref.~\cite{goswami18-1} by Goswami \textit{et al.}, which considers an optical setup where three parties, Alice, Bob, and Charlie, have access to the reduced quantum switch process  $W_{\text{red}}$ (see \cref{eqswitch}). In this experiment,  Alice and Bob can choose between the 8 different unitary operations\footnote{Note that an unitary operation can be seen as an instrument with deterministic classical output.} and Charlie performs a measurement in the $\sigma_X$ basis on the control qubit state. Theoretically, the statistics of an ideal device-dependent TTT (trusted-trusted-trusted) experiment would be given by the behaviour
\beq
p^{\text{ideal1}}(c|x,y):=\tr\left[ \left(\overline{U}_x\otimes\overline{U}_y\otimes\overline{M}_c\right) W_{\text{red}}\right],
\eeq
where the instruments $\{\overline{U}_{x}\}$ and $\{\overline{M}_c\}$ are the ones described on the supplemental material of ref.~\cite{goswami18-1}. As pointed in ref.~\cite{goswami18-1}, the behaviour $\{p^{\text{ideal1}}(c|x,y)\}$ certifies indefinite causal order in a device-dependent way. Using the SDP formulation of the problem described in \cref{app:tripartite}, we show that the noisy behaviour
\beq
p_{\eta}^{\text{ideal1}}(c|x,y) := (1-\eta)\,p^{\text{ideal1}}(c|x,y) + \eta \, \frac{1}{2}
\eeq
where $p_I(c|x,y)=\frac{1}{2}$ is a uniform probability distribution, cannot be described by a causally separable tripartite process matrix, \textit{i.e.}, for some $c,x,y$,
\beq
p_{\eta}^{\text{ideal1}}(c|x,y) \neq \tr\left[ \left(\overline{U}_x\otimes\overline{U}_y\otimes\overline{M}_c\right) W^{\text{sep}}\right],
\eeq
in the range $\eta\in[0,0.1989)$. Hence, in this range of $\eta$, the behaviour $\{p_{\eta}^{\text{ideal1}}(c|x,y)\}$ certifies indefinite causal order in a device-dependent (TTT) way.

In order to analyse the behaviour $\{p^{\text{ideal1}}(c|x,y)\}$ in a semi-device-independent scenario, we drop the hypothesis that the measurement $\{M_c\}$ performed by Charlie is trusted, working in a TTU (trusted-trusted-untrusted) scenario. Using the SDP formulation of the problem described in \cref{app:tripartite}, we show that the noisy behaviour $\{p_{\eta}^{\text{ideal1}}(c|x,y)\}$ cannot be described by a causal TTU-assemblage, \textit{i.e.}, for some $c,x,y$,
\beq
p_{\eta}^{\text{ideal1}}(c|x,y) \neq \tr\left[ \left(\overline{U}_x\otimes\overline{U}_y\right)w_{c}^{\text{causal}}\right],
\eeq
in the same range of $\eta\in[0,0.1989)$. Hence, in this range of $\eta$, the behaviour $\{p_{\eta}^{\text{ideal1}}(c|x,y)\}$ certifies indefinite causal order in a semi-device-independent (TTU) way.

Therefore, the experimental setup described in ref.~\cite{goswami18-1} allows for certification of indefinite causal order with weaker hypotheses -- in a semi-device-independent scenario. Using the machinery developed in this work, we could not show that the behaviour $\{p^{\text{ideal1}}(c|x,y)\}$ can certify indefinite causal order in the UTT or TUU scenarios. However, since some of our SDP methods for the tripartite case may provide only an outer approximation of the sets of causal assemblages (see \cref{app:tripartite}), we cannot discard the possibility of such certification either. We remark, nevertheless, that we have proven that it is possible to certify that the switch process is causally nonseparable in these scenarios for the right choice of instruments (see \cref{sec:switch}).
\\

The experiment performed in ref.~\cite{rubino17} by Rubino \textit{et al.} considers a four-partite version of the switch process which is slightly more general than the tripartite one discussed in this paper. While we consider a tripartite switch process where the target state is embedded in the process, ref.~\cite{rubino17} considers a four-partite switch process in which the target state can be chosen by a fourth part in the common past of all other parties. If we label the fourth party $D_O$ (David output), the process matrix of the four-partite quantum switch is given by $W^4_{\text{switch}}:=\ketbra{w^4_{\text{switch}}}{w^4_{\text{switch}}}$,
where
\beq
\ket{w^4_{\text{switch}}}:=\frac{1}{\sqrt{2}}\left( \ket{\Phi^+}^{D_O A_I}  \ket{\Phi^+}^{A_O B_I}  \ket{\Phi^+}^{B_O C_I^t}\ket{1}^{C_I^c} 
+  \ket{\Phi^+}^{D_O B_I}  \ket{\Phi^+}^{B_O A_I}  \ket{\Phi^+}^{A_O C_I^t}\ket{0}^{C_I^c} \right)
\eeq
and $\ket{\Phi^+}:=\ket{00}+\ket{11}$ is an unnormalized maximally entangled two-qubit state. 

Reference \cite{rubino17} considers an optical setup where Alice, Bob, Charlie, and David have access to a reduced four-partite quantum switch process given by $W^4_\text{red}:=\tr_{C_I^t} \left(W^4_{\text{switch}}\right)$. In this experiment, David can choose between 3 different states $\overline{\rho}_d$, Alice can choose between 12 different dichotomic instruments $\{\overline{A}_{a|x}\}$, and Bob can choose between 10 unitary operations represented by the instruments $\{\overline{U}_y\}$. Theoretically, in an ideal device-dependent TTTT experiment, the statistics would be given by the behaviour
\beq
p^{\text{ideal2}}(a,c|x,y,d):=\tr\left[ \left(\overline{\rho}_d\otimes\overline{A}_{a|x}\otimes\overline{U}_y\otimes\overline{M}_c\right) W^4_\text{red}\right],
\eeq
where the instruments $\{\overline{\rho}_d\}$, $\{\overline{A}_{a|x}\}$, $\{\overline{U}_y\}$, and $\{\overline{M}_c\}$ are the ones described on the supplemental material of ref.~\cite{rubino17}.

In order to use the SDP formulations described in \cref{app:tripartite}, we restrict our analysis to the particular case where David outputs the state $\rho_1=\ketbra{0}{0}$. That is, we analyse the behaviour
\beq
p^{\text{ideal2}}(a,c|x,y,1):=\tr\left[\left(\overline{\ketbra{0}{0}}\otimes\overline{A}_{a|x}\otimes\overline{U}_y\otimes\overline{M}_c\right) W^4_{\text{red}}\right].
\eeq

Notice that when David is restricted to the choice of the state $\ketbra{0}{0}$, the four-partite reduced switch process $W^4_\text{red}$ relates to the tripartite reduced switch process $W_{\text{red}}$  (\cref{eqswitch}) via the identity
\beq
\tr_{D_O} \left[ \left(\ketbra{0}{0}^{D_O} \otimes \id^{A_I A_O} \otimes \id^{B_I B_O} \otimes \id^{C_I^c} \right) W^4_\text{red} \right] =  W_{\text{red}}.
\eeq	
Moreover, if the behaviour $\{p^{\text{ideal2}}(a,c|x,y,1)\}$ allows for certification of indefinite causal order, the full behaviour $\{p^{\text{ideal2}}(a,c|x,y,d)\}$ also allows for certification of indefinite causal order. We can, hence, use our tripartite machinery to certify indefinite causal order in this particular four-partite case.
	
Using the SDP formulation of the problem described in \cref{app:tripartite}, we show that the noisy behaviour
\begin{equation}
p_{\eta}^{\text{ideal2}}(a,c|x,y,1) := (1-\eta)\,p^{\text{ideal2}}(a,c|x,y,1) + \eta \, \frac{1}{4}
\end{equation}
where $p_I(a,c|x,y)=\frac{1}{4}$ is a uniform probability distribution, cannot be described by a causally separable tripartite process matrix, \textit{i.e.},  for some $a,c,x,y$,
\beq
p_{\eta}^{\text{ideal2}}(a,c|x,y,1) \neq \tr\left[\left(\overline{A}_{a|x}\otimes\overline{U}_y\otimes\overline{M}_c\right) W^{\text{sep}}\right],
\eeq
in the range $\eta\in[0,0.2300)$. Hence, in this range of $\eta$, the behaviour $\{p_{\eta}^{\text{ideal2}}(a,c|x,y,1)\}$ certifies indefinite causal order in a device-dependent (TTT) way.

We now analyse the behaviour $p^{\text{ideal2}}$ in a semi-device-independent scenario, where we drop the hypothesis that the measurement $\{M_c\}$ performed by Charlie is trusted, working in a TTU (trusted-trusted-untrusted) scenario. Using the SDP formulation of the problem described in \cref{app:tripartite}, we show that the noisy behaviour $\{p_{\eta}^{\text{ideal1}}(c|x,y)\}$ cannot be described by a causal TTU-assemblage, \textit{i.e.}, for some $a,c,x,y$,
\beq
p_{\eta}^{\text{ideal2}}(a,c|x,y,1) \neq \tr\left[\left(\overline{A}_{a|x}\otimes\overline{U}_y\right)w_{c}^{\text{causal}}\right],
\eeq
in the same range of $\eta\in[0,0.2300)$. Hence, in this range of $\eta$, the behaviour $\{p_{\eta}^{\text{ideal2}}(a,c|x,y,1)\}$ certifies indefinite causal order in a semi-device-independent (TTU) way.

Therefore, the experimental setup described in ref.~\cite{rubino17} allows for certification of indefinite causal order with weaker hypotheses -- in a semi-device-independent scenario.
Using the machinery developed in this work, we could not show that the behaviour $\{p^{\text{ideal2}}(a,c|x,y,d)\}$ can certify indefinite causal order in other semi-device-independent scenarios. However, since we have only considered the case where the state $\rho_z$ chosen by David is fixed and some of our SDP methods for the tripartite case may provide only an outer approximation of the sets of causal assemblages (see \cref{app:tripartite}), we cannot discard the possibility of such certification either. We remark, nevertheless, that we have proven that it is possible to certify that the switch process is causally nonseparable in other semi-device-independent scenarios for the right choice of instruments (see \cref{sec:switch}).

\end{document}